\documentclass[12pt,a4paper]{article}
\usepackage[colorlinks=true,linkcolor=blue,filecolor=blue,urlcolor=blue,citecolor=blue]{hyperref}
\usepackage{amsfonts,amssymb,amsmath,amsthm,latexsym,a4wide}
\usepackage{rotating}
\usepackage{lscape}
\usepackage{graphics}
\usepackage{mathrsfs,color}
\usepackage[round]{natbib}
\usepackage{booktabs}
\usepackage{setspace}

\newtheorem{assumption}{Assumption}
\newtheorem{lemma}{Lemma}
\newtheorem{theorem}{Theorem}

\newtheorem{algorithm}{Algorithm}
\allowdisplaybreaks[1]

\DeclareMathOperator*{\argmin}{arg\,min}
\DeclareMathOperator*{\plim}{plim}

\onehalfspacing

\author{Ryo Okui\thanks{NYU Shanghai, 1555 Century Avenue, Pudong, Shanghai, China, 200122;
		and Department of Economics, University of Gothenburg,
		P.O. Box 640, SE-405 30 Gothenburg, Sweden.
		Email: \href{mailto:okui@nyu.edu}{okui@nyu.edu}}\quad
	and Wendun Wang\thanks{Econometric Institute, Erasmus University Rotterdam, Burg. Oudlaan 50,
		3062PA, Rotterdam; and Tinbergen Institute. Email: \href{mailto:wang@ese.eur.nl}{wang@ese.eur.nl}}}

\title{Heterogeneous structural breaks in panel data models\thanks{
		Okui acknowledges financial support from the Japan Society for the Promotion of Science (JSPS) under KAKENHI Grant Nos.\!\! 16K03598 and 15H03329. Wang acknowledges the financial support of an Erasmus University Rotterdam fellowship. This project commenced when Okui was at Vrije Universiteit Amsterdam and Kyoto University. 
		The authors would also like to thank the guest editor (Tom Wansbeek), two anonymous referees, Otilia Boldea, Pavel Cizek, Qu Feng, Cheng Hsiao, Arthur Lewbel, Robin Lumsdaine, Elena Manresa, Liangjun Su, Martin Weidner and participants at various seminars and conferences 
 for their valuable comments and suggestions.}}
\date{November 2018}

\begin{document}
	\maketitle
	
	\begin{abstract}
		
		This paper develops a new model and estimation procedure for panel data that allows us to identify heterogeneous structural breaks.
		We model individual heterogeneity using a grouped pattern. 
		For each group, we allow common structural breaks in the coefficients. However, the number, timing, and size of these breaks can differ across groups.
		We develop a hybrid estimation procedure of the grouped fixed effects approach and adaptive group fused Lasso. 
		We show that our method can consistently identify the latent group structure, detect structural breaks, and estimate the regression parameters. Monte Carlo results demonstrate the good performance of the proposed method in finite samples. An empirical application to the relationship between income and democracy illustrates the importance of considering heterogeneous structural breaks.\\
		
		\textbf{Keywords}: Panel data; grouped patterns; structural breaks; grouped fixed effects; fused Lasso.
		
		\textbf{JEL classification}: C23; C38; C51.
		
	\end{abstract}

	\newpage
	\section{Introduction}
	Panel data sets have become increasingly popular in economics and finance, because they allow us to make use of variations over both time and individual dimensions in a flexible way.
	However, when we analyze panel data, it is important to take into account structural changes, such as
	financial crises, technological progress, economic transitions, etc.
	occurring during the time periods covered in the data.
	This is because these events may influence the relationships between economic variables, causing breaks in the parameters of panel data models.
	An important issue is that the time points and/or the impact of the changes are likely to vary significantly across individuals.	For example, despite the wide-ranging effects of the 1997 Asian financial crisis, which resulted in the restructuring of most Asian economies, both China and India were largely unaffected \citep{park&yang&shi&jiang2010, radelet&sachs2000}.
	
Similarly, for the recent European debt crisis after 2009, not all European countries have been affected.
	Even within the Eurozone, some countries fell into crisis much earlier than the others, and the impact of the crisis on the economic structure also differed across countries, with Central European countries appearing to be much less affected than the main victims in Southern Europe, such as Italy and Spain \citep{claeves&vasicek&2014}.
	Together, these observations suggest that structural breaks
	are not necessarily common across all observational units and breaks might occur at different time points across units and/or be of heterogeneous size.
	Importantly, existing break detection techniques for panel data
	primarily focus on only \emph{common} structural breaks.
	
	This paper provides a new model and estimation procedure that allows us to detect heterogeneous structural breaks.
	We consider a linear panel data model in which the coefficients are heterogeneous and time varying.
	 We model individual heterogeneity via a grouped pattern, and allow the group membership structure (i.e., which individuals belong to which group) to be unknown and estimated from the data.
	For each group, we then allow common structural breaks in the coefficients, while the number of breaks, breakpoints, and/or break sizes can differ across groups.

To estimate this model, we employ a hybrid procedure of the grouped fixed effects (GFE) method proposed by \citet{bonhomme&manresa2015} and the adaptive group fused Lasso (AGFL) in \citet{qian&su2016}.\footnote{Lasso stands for ``least absolute shrinkage and selection operator'', as introduced by \citet{tibshirani1996}.}
We title our procedure Grouped AGFL (GAGFL).
The idea of our estimation approach is to use GFE to estimate the group memberships, and simultaneously for each group to employ AGFL to detect the breaks and estimate the coefficients.
The GFE approach estimates the group memberships by minimizing the sum of squared residuals. We choose the GFE approach for classification because it facilitates theory and guarantees that all units are categorized to one of the groups.
For its part, AGFL estimates the break dates by minimizing a penalized least squares objective function, where the penalty term is the norm of the difference between the values of the time-varying parameters in adjacent time periods. It is particularly useful in our context because it estimates break dates jointly with the coefficients, and it allows breaks to occur in every period.

Of course, break detection, which is interesting in its own right, also improves the estimation of coefficients and group memberships by automatically pooling the time periods between two breaks. This hybrid estimation approach allows us to: 1) consistently estimate the latent group membership structure, 2) automatically determine the number of breaks and consistently estimate the breakpoints for each group in one joint step,
and furthermore 3) consistently estimate the regression coefficients with group-specific structural breaks. Computationally, we develop an iterative method to compute the estimates by combining the \textit{K-means} \citep{bonhomme&manresa2015} algorithm with Lasso. This algorithm works particularly well when the number of groups is not large, which is typically the case in many applications \citep{bonhomme&manresa2015,vogt2017classification,su&shi&phillips2016,lu&su2017}.

	Imposing a grouped pattern is a convenient way to model individual heterogeneity. Nevertheless, while in principle we may be able to analyze each individual separately by allowing complete individual heterogeneity,
	such an approach is not practical for two reasons. First, individual estimation can be inefficient as it does not make use of cross-sectional information at all, especially given that the length of time series in a typical panel data set is not very large \citep{wang&zhang&paap}. Second, analyzing individuals separately fails to capture any common pattern, which prevents learning from the experience of other individuals.
	A grouped pattern of heterogeneity is then especially useful because it allows us to estimate the coefficient parameters (and the breaks) more efficiently and thereby provides insights into the similarity of individual units. Moreover, it does not require a priori knowledge about the determinants of the group structure, and we can obtain an intuition about the underlying mechanism that causes heterogeneity by
	investigating the estimated group structure. Imposing a grouped heterogeneity pattern also enables us to estimate coefficients for every period, which further allows us to detect consecutive structural breaks. 

	We may conveniently characterize individual time-varying heterogeneity using a group pattern, such that individual units within a given group share the same time-varying paths, in many applications. For example, \citet{bonhomme&manresa2015} showed in an analysis of the democracy--income relationship that the time-varying unobserved heterogeneity demonstrates a group pattern and countries assemble
	depending on whether and when they experienced democratic transitions. Elsewhere, \citet{ando&bai2016} found that the styles of US mutual funds and asset return performance in mainland China both featured a group pattern of heterogeneity. In addition, \citet{hahn&moon2010} provided sound foundations for group structure in game theoretic or macroeconomic models where a multiplicity of equilibria is expected and \citet{bonhomme&lamadon&manresa2017} argued that group structure can be a good discrete approximation, even if individual heterogeneity is continuous.
	
	It is important to consider heterogeneity and structural breaks jointly. Ignoring heterogeneous structural breaks can lead to the incorrect detection of breakpoints and inconsistent slope coefficient estimates.
		The potential harm arising may be demonstrated in the following two scenarios. First, we consider pooling individuals with different breakpoints.\footnote{This includes the special case of pooling individuals with and without structural breaks.}
		In this case, smaller breaks characterize the pooled data. Hence, it is more difficult to detect breaks in finite samples, and the failure of break detection further leads to incorrect coefficient estimates.
		This issue is also highlighted by \citet{baltagi&qu&kao2016}, who showed via simulation that their break detection method (assuming common breaks) suffers from a significant loss of accuracy when some series contain no break.
		Second, even if all individuals have breaks at the same time, ignoring heterogeneity in break size and pooling all individuals may average out the break, and thus act against the correct detection of breaks, again leading to inconsistent coefficient estimates. Our proposed GAGFL method can simultaneously address both these problems because it allows us to identify which individuals are (un)affected by the breaks and to detect the group-specific breaks in one step.

	We examine the finite sample performance of our method via Monte Carlo simulation and compare it with other break detection techniques assuming common breaks.
	GAGFL performs well in heterogeneous panels with structural breaks occurring at possibly different points in finite samples.
	First, we are able to estimate group membership precisely, and the clustering accuracy improves as the length of the time series increases.
	Second, we are able to estimate the correct number of breaks and the true breakpoints for each group.
	We also find that ignoring heterogeneity in breaks (even if we account for the heterogeneity in coefficients) leads to an inconsistent estimation of the number of breaks, along with less accurate breakpoint estimates. This also results in inaccurate coefficient estimates.
	Importantly, because of the accurate estimation of both groups and breaks, our coefficient estimates are much more precise than those produced by common break detection methods.
	
	We illustrate our method by revisiting the relationship between income and democracy analyzed by
	\citet{acemoglu&johnson&robinson&yared2008}. We mainly compare our results with the GFE estimates produced by \citet{bonhomme&manresa2015} and find that our method provides a compatible but different grouped pattern.
	On the one hand, we can distinguish between countries with stable, early and late transition democracies, and this result is consistent with that of \citet{bonhomme&manresa2015}.
	On the other hand, we identify countries that have a fluctuating income--democracy relation toward the middle and/or end of the period not captured by GFE. The different grouped pattern arises because the groups formed by GAGFL draw on the whole coefficient vector that allows for structural changes. Thus, the grouped pattern produced by GAGFL not only reflects the magnitudes of the differences in the coefficient estimates, but also the heterogeneity in structural change.

	The remainder of this paper is organized as follows. Section \ref{sec:literature} reviews the related literature. Section \ref{sec:model} describes the model. Section \ref{sec:estimation} explains the estimation method and establishes its asymptotic properties. Section \ref{sec-fe} discusses the models with individual-specific fixed effects.
	In Section \ref{sec-extension}, we examine various extensions of our basic model and our estimation method and Section \ref{sec:simulation} considers the finite sample properties of our estimator via simulation.
	To demonstrate our method using actual data, we analyze the association between income and democracy in Section \ref{sec:empirical}.
	Section \ref{sec:conclusion} provides some concluding remarks. All technical details are in the Appendix, and a supplementary file contains additional theoretical results, further simulation studies, and an added application on the determinants of the savings rate.

	\section{Literature review}\label{sec:literature}
	
	Individual heterogeneity in panel data has been widely documented in empirical studies, especially heterogeneity in slope coefficients (see, for example, \citet{su&chen2013} and \citet{durlauf&kourtellos&minkin2001} for cross-country evidence and \citet{browning2007} for ample microeconomic evidence). Various estimation techniques have been developed to account for cross-sectional slope heterogeneity, including random coefficient models \citep{swamy1970}, mean group estimation \citep{pesaran&smith95}, GFE \citep{bonhomme&manresa2015}, classifier-Lasso \citep{su&shi&phillips2016}, and Panel-CARDS \citep{wang&phillips&su2016}, among others.
	
	Our work most closely relates to recent research concerning
latent group structure in panel data.\footnote{Note that \citet{bester&hansen2013} and \citet{ando&bai2015} also considered panel models with group structure, but they assume that the group structure is known.}
	\citet{sun2005} considered a finite-mixture model. \citet{lin&ng2012} and \citet{su&shi&phillips2016} examined cases in which the slope coefficients of individuals belong to an unknown latent group.
	\citet{bonhomme&manresa2015} introduced a
	GFE estimator that allows the fixed effect parameters to have a time-varying grouped pattern.
	Note that \citet{hahn&moon2010} discussed a similar estimator in the context of the multiple equilibria problem in empirical industrial organization, \citet{ando&bai2016} studied unobserved group factor structures, and \citet{vogt2017classification} considered nonparametric regression models with latent group structure.
However, these authors all assume that the panel is stable without structural change.
	We contribute to this literature by studying latent group estimation in the presence of unknown heterogeneous structural breaks.
	\citet{su&wang&jin2015} also considered grouped patterns with structural instability. They modeled instability using continuous time-varying slope coefficients, whereas we consider discontinuous structural breaks.
	
	While testing and dating structural breaks in panel data has become a popular research topic, many of the existing techniques employ the assumption of homogeneous slope coefficients.
	For example, \citet{bai2010} studied the estimation of a common break in means
	and variances for panel data, \citet{deWachter&tzavalis2012} developed a
	break detection test for panel autoregressive models, and \citet{kim2011} proposed an estimation procedure for a common
	deterministic time-trend break in large panels. 
		In other work, \citet{baltagi&kao&liu2017} introduced a procedure for breakpoint estimation
	in panel models for (non)stationary 
	regressors and error terms, and \citet{qian&su2016} proposed estimation procedures for panels with common breaks in slope coefficients using AGFL. \citet{li&qian&su} later extended this idea to interactive fixed effects models. We differ from these studies by considering heterogeneous structural breaks occurring for some individuals and/or with
different sizes.

	Despite ample evidence of cross-sectional heterogeneity in panel data, only a few studies have considered structural breaks in heterogeneous panels, which is the topic of this study. 
	A closely related work is \citet{baltagi&qu&kao2016}, which considered breakpoint estimation in heterogeneous panels with and without cross-sectional dependence.
	Our work differs in three main aspects.
	First, \citet{baltagi&qu&kao2016} considered structural breaks common to all individuals.
They documented in their simulation study that their break detection method loses significant accuracy if a break only affects a part of the individuals. We explicitly model heterogeneous breaks that only impact a proportion of \emph{unknown} units, and estimate the range of impacts from the data.
Incorporating heterogeneity in structural breaks also allows us to estimate the breakpoints more accurately than \citet{baltagi&qu&kao2016}. %
	Second, given the number of \emph{common} breaks, \citet{baltagi&qu&kao2016} estimated the breakpoints and then the slope coefficients in separate steps. In contrast, our method simultaneously estimates the number of breaks, breakpoints, and slope coefficients in one joint step.\footnote{To improve finite sample performance, we may re-estimate the slope coefficients after obtaining the breakpoints from the Lasso procedure.} Note that pooling the units may make it more difficult to determine the number of breaks in the presence of heterogeneity because it also dilutes the break sizes.
	Finally, we model individual heterogeneity via the grouped pattern, while \citet{baltagi&qu&kao2016} allowed for individual-specific coefficients. 
	We believe the parsimony of group specification in our approach may improve efficiency and achieve a faster convergence rate. We compare our method with \citet{baltagi&qu&kao2016} in the simulation to confirm this.

	In other work, \citet{pauwels&chan&griffoli2012} proposed a testing procedure for
	the presence of structural breaks that allows breaks to occur to only a proportion of individuals. 
	Our approach complements \citet{pauwels&chan&griffoli2012} by estimating the breakpoints and slope coefficients, and providing insights into which particular individuals are (un)affected by the breaks.
	Compared with individual time series estimation, we make use of cross-section variation and thus are able to estimate consecutive structural breaks. 
	For their part, \citet{schnatter&kaufmann} considered clustering multiple time series using the finite-mixture method in a Bayesian framework. They assume that the group memberships, characterized by the belonging probabilities, either depend on some exogenous observables through a specific (e.g., logistic) function or are equal to the group size. In contrast, we model group memberships via discrete indicators and allow them to be fully unrestricted.
	Finally, our work is also closely related to studies utilizing shrinkage estimators to estimate structural breaks (e.g. \citet{lee&seo&shin2016, qian&su2016, cheng&liao&schorfheide2016, li&qian&su}).
	However, none of these studies considered heterogeneous panels with possibly group-specific structural breaks.

	\section{Model setup}\label{sec:model}
	
	In this section, we describe the model setting.
	We consider a linear panel data model with heterogeneous and time-varying coefficients.
	The heterogeneity is restricted to have a grouped pattern, and structural breaks characterize the time-varying nature of the coefficients.
	The discussion of models with individual fixed effects is deferred to Section \ref{sec-extension}.
	
	Suppose that we have panel data $\{ \{ y_{it}, x_{it} \}_{t=1}^T\}_{i=1}^N$, where
	$y_{it}$ is a scalar dependent variable and $x_{it}$ is a $k\times 1$ vector of regressors, typically including the first element being 1.
	As usual, $t$ and $i$ denote time period and observational unit, respectively.
	The number of cross-sectional observations is $N$ and the length of the time series is $T$.
	We consider the following linear panel data model:
	\begin{equation*}
	y_{it}=x'_{it}\beta_{i,t}+\epsilon_{it},\quad i = 1,\ldots,N, \quad t = 1,\ldots, T,
	\end{equation*}
	where $\epsilon_{it}$ is the error term with zero mean.
	The unknown coefficient $\beta_{i,t}$ is heterogeneous across individuals and changes over time.
	We put a structure on $\beta_{i,t}$, namely a grouped pattern of heterogeneity and structural breaks,
	to make it estimable and this also facilitates interpretation of the model.
	
	We assume that $\beta_{i,t}$ is group specific.
	Suppose that observational units can be divided into $G$ groups.
	Let $\mathbb{G} = \{1, \dots ,G\}$ be the set of groups
	where $g_i \in \mathbb{G}$ indicates the group membership of unit $i$.
	Units in the same group share the same time-varying coefficient $\beta_{g,t}$, where $g \in \mathbb{G}$.
	Our model can be rewritten as
	\begin{equation}
	y_{it}=x'_{it}\beta_{g_i,t}+\epsilon_{it},\quad i = 1,\ldots,N, \quad t = 1,\ldots, T. \label{eq:model}
	\end{equation}
	We use this representation of our model hereafter. Imposing a grouped pattern offers a sensible and convenient way to model individual heterogeneity because it allows us to capture heterogeneity in a flexible way while keeping the model parsimonious, permitting us to take advantage of cross-sectional variation in break and coefficient estimation \citep{bai2010}.
	The group membership structure $\{g_i\}_{i=1}^N$ is unknown and to be estimated. 
	
	We also assume that for each group $g$, the time-varying pattern of coefficients $\{\beta_{g,1},\ldots,\beta_{g,T}\}$ can be characterized by structural breaks.
	For each group, there are $m_g$ breaks and $\mathcal{T}_{m_g, g} = \{T_{g,1}, \dots, T_{g,m_g} \}$ denotes a set of break dates, where both the number of breaks $m_g$ and break dates $\mathcal{T}_{m_g,g}$ are group specific and unknown, and we estimate them from the data.
	The value of coefficient $\beta_{g,t}$ changes only at a break date and remains the same in the period between any two break dates.
	Let $\alpha_{g,j}$, $j= 1, \dots, m_g$ be the value of coefficients until the $j$-th break date and $\alpha_{g,m_g+1}$ be the value of coefficients in the last period:
	\begin{align*}
	\beta_{g,t} =\alpha_{g,j}, \quad \text{ if } \quad T_{g,j-1} \le t < T_{g,j},
	\end{align*}
	where we define $T_{g,0} =1$ and $T_{g,m_g+1} = T+1$.
	Note that we allow consecutive structural breaks to occur in two adjacent periods because we make use of cross-sectional observations via grouping, and this clearly includes the end-of-sample breaks \citep[c.f.][]{andrews2003} as a special case.
	The true number of breaks for each group, $m_g^0$, is permitted to increase as $T$ increases, and the minimum break size ($\min_{g \in \mathbb{G}, 1 \le j \le m_g^0+1} \| \alpha_{g, j+1} - \alpha_{g, j}\|$) is allowed to contract to zero as $(N,T)\to\infty$.

	Our model of heterogeneous breaks permits identification of the number of groups and breaks under regularity conditions in the sense that 1) an ignored break cannot be incorporated by increasing the number of groups and; 2) distinct groups cannot be modeled as a homogeneous pool with structural breaks. 
	The identification of groups separately from breaks is relatively straightforward because we cannot incorporate structural breaks by increasing the number of groups. To illustrate the identification of structural breaks, we consider a simple case in which the model includes only an intercept $\beta_t$ with one break, and there are two groups which differ only in break dates, $T_{1,1} \neq T_{2,1}$: 
	$$
	y_{it}=\beta_{1} I(t<T_{j,1})+ \beta_{2} I(t\geq T_{j,1}) + \epsilon_{it},\quad \textrm{for $i$ such that $g_i = j$ and}\ j = 1,2.
	$$
	These two groups cannot be treated as one group with two breaks because observations between $T_{1,1}$ and $T_{2,1}$ correspond to distinct intercepts for the two groups. Next, we consider two groups with a break at a common date but of different sizes. This clearly leads to the separation of two groups we cannot merge because the coefficients are heterogeneous in at least one of the regimes. We generalize these arguments to cases with heterogeneous breaks at different points and of distinct sizes and to cases with multiple breaks. Therefore, we cannot accommodate the underestimation of the number of groups by increasing the number of breaks as long as groups are separated.

	This model is very general. If the parameter $\beta_{g,t}$ is constant over time, the model reduces to that considered in \citet{su&shi&phillips2016} and \citet{lin&ng2012}. If $\beta_{g_i,t}$ is homogeneous over individuals, it then reduces to panel models with common structural breaks as in \citet{qian&su2016} and \citet{baltagi&kao&liu2017}. It also includes the grouped fixed effect model by \citet{bonhomme&manresa2015} as a special case where only the intercept is allowed to have a time-varying grouped pattern.

	\section{Estimation method and its asymptotic properties}\label{sec:estimation}

	In this section, we explain our estimation method and consider its asymptotic properties.
	Our estimation method is a hybrid of GFE by \citet{bonhomme&manresa2015} and AGFL by \citet{qian&su2016}.

	We first introduce some notation.
	Let $\beta$ be the vector stacking all $\beta_{g,t}$ such that
	$\beta = (\beta_{1,1}' , \dots, \beta_{1,T}', \beta_{2,1}', \dots, \beta_{G,T}')$.
	Let $\mathcal{B} \subset \mathbb{R}^k$ be the parameter space for each $\beta_{g,t}$.
	The parameter space for $\beta$ is $\mathcal{B}^{GT}$.
	Let $\gamma$ be the vector of $g_i$s such that $\gamma = \{g_1, \dots, g_N\}$.
	Note that $\mathbb{G}^N$ is the parameter space for $\gamma$.

	\subsection{Estimation method}\label{sec:estimation-method}
	
	We estimate $(\beta, \gamma)$ by minimizing a penalized least squares objective function. 
	We employ an iterative procedure in which we iterate the estimation of $\gamma$ by minimizing the sum of squared errors for each unit and the estimation of $\beta$ by applying the AGFL to each group. 
	We state our estimation method assuming that $G$ is known in this section and discuss how to select the number of groups $G$ in Section \ref{sec-choose-tuning}.

	We propose to estimate model~\eqref{eq:model} by minimizing the following penalized least squares objective function:
	\begin{align}
	(\hat \beta, \hat \gamma)
	= \argmin_{(\beta, \gamma) \in \mathcal{B}^{GT} \times \mathbb{G}^N}
	\frac{1}{N T} \sum_{i=1}^N \sum_{t=1}^T
	(y_{it}  - x_{it} ' \beta_{g_i ,t} )^2
	+ \lambda \sum_{g \in \mathbb{G}} \sum_{t=2}^T
	\dot w_{g,t} \left\Vert \beta_{g, t} - \beta_{g, t-1}
	\right\Vert. \label{eq-objective-function}
	\end{align}
	The second term on the right-hand side of \eqref{eq-objective-function} is the penalty term, where $\lambda$ is a tuning parameter whose choice is discussed in Section \ref{sec-choose-tuning}, and $\dot w_{g,t}$ is a data-driven weight defined by
	\begin{align}
	\dot w_{g,t}
	= \left\Vert \dot \beta_{g, t} - \dot \beta_{g, t-1}\right\Vert^{-\kappa} \nonumber
	\end{align}
	with $\kappa$ being a user specific constant and
	$ \dot \beta $ being a preliminary estimate of $ \beta$.
	We can use the GFE-type estimate for $\dot \beta$ as follows:
	\begin{align}
	( \dot \beta , \dot \gamma)
	= \argmin_{(\beta, \gamma) \in \mathcal{B}^{GT} \times \mathbb{G}^N}
	\sum_{i=1}^N \sum_{t=1}^T
	(y_{it}  - x_{it} ' \beta_{g_i ,t} )^2.\label{eq-gfe}
	\end{align}
	The resulting coefficient and group membership estimates are both consistent (see Theorem~\ref{thm-gfe-ad} and Appendix~\ref{sec-gfe-proof}), although the coefficient estimates vary for each period because the least squares objective function in \eqref{eq-gfe} does not include the penalty term.\footnote{Note that, strictly speaking, these estimators are not covered by \citet{bonhomme&manresa2015} because the coefficients are heterogeneous and change over time.}

	Despite the fact that the estimated group memberships and slope coefficients are jointly obtained by solving the optimization problem~\eqref{eq-objective-function}, we can better understand this objective function by conditioning. Given the coefficients, the estimate of $g_i$ ($i$'s group membership) is a group that yields the smallest sum of squared residuals for $i$.
	Note that the second term in \eqref{eq-objective-function} does not depend on $g_i$ and does not have a direct effect on the estimation of $g_i$ (of course, it indirectly affects the estimation through the coefficients; see the more detailed discussion following Algorithm~\ref{al-kmeans}).
	
	Next, given the group membership structure, our estimation problem is the same as applying the AGFL by \citet{qian&su2016} to each group. The penalty term in \eqref{eq-objective-function} is proportional to the norm of the difference between the coefficients in adjacent periods.
	As is well known in the Lasso literature, this type of penalty term (called $L_1$ penalty) has a sparsity property and provides us estimates with the properties that $\hat \beta_{g,t} = \hat \beta_{g,t-1}$ for some $g$ and $t$.
	Estimated break dates are periods at which $\hat \beta_{g,t} - \hat \beta_{g,t-1} \neq 0$.
	Let $\hat{\mathcal{T}}_{\hat{m}_g,g}$ denote the set of estimated break dates for group $g$
	such that $\hat{\mathcal{T}}_{\hat{m}_g,g} = \{ t \in \{ 2, \dots, T \} \mid \hat \beta_{g,t} - \hat \beta_{g,t-1} \neq 0\}$, then we can estimate the number of breaks for group $g$ by the cardinality of $\hat{\mathcal{T}}_{\hat{m}_g,g}$.

The consistency of preliminary estimates $\dot{\beta}$ plays an important role in break detection as it yields appropriates weights ($\dot w_{g,t}$'s).
	Note that if $\beta_{g,t} - \beta_{g,t-1} =0$, then $\dot \beta_{g,t} - \dot \beta_{g,t-1} $ is likely to be close to zero and $\dot w_{g,t}$ is likely to be large, resulting in a heavy penalty. This in turn allows us to achieve the consistent estimation of break dates.
	The use of consistent estimators as initial values in our iterative algorithm given below facilitates the convergence. However, the preliminary coefficient estimate fails to capture the structural breaks, and as a result, the group and coefficient estimates produced by the GFE-type objective function~\eqref{eq-gfe} are also less accurate than those produced by \eqref{eq-objective-function}.

	To obtain the minimizer in \eqref{eq-objective-function}, we propose to use the following iterative algorithm.
	\begin{algorithm}
		\label{al-kmeans}
		Set $\gamma^{(0)}$ as the initial GFE estimate of grouping $\dot{\gamma}$, and $s = 0$.
		
		\begin{description}
			\item[Step 1]: For the given $\gamma^{(s)}$, compute
			\begin{align}
			\beta^{(s)} = \argmin_{\beta\in\mathcal{B}^{GT}}
			\frac{1}{N T} \sum_{i=1}^N \sum_{t=1}^T
			(y_{it}  - x_{it} ' \beta_{g^{(s)}_i ,t} )^2
			+ \lambda \sum_{g \in \mathbb{G}} \sum_{t=2}^T
			\dot w_{g,t} \left\Vert \beta_{g, t} - \beta_{g, t-1}
			\right\Vert. \label{eq-s-obj}
			\end{align}
			
			\item[Step 2]: Compute for all $i\in\{1,\ldots,N\}$,
			\begin{align*}
			g_i^{(s+1)}=\argmin_{g\in \mathbb{G} } \sum_{t=1}^T(y_{it}-x_{it}'\beta^{(s)}_{g,t})^2.
			\end{align*}
			\item[Step 3]: Set $s= s+1$. Go to Step 1 until numerical convergence.
		\end{description}
	\end{algorithm}
	
	Step 1 applies AGFL by \citet{qian&su2016} for each estimated group.
	Step 2 updates the group membership based on the least squares objective function.
	We assign each unit to one of the groups that gives the smallest sum of squared residuals.
	The penalty term does not contribute to the estimation of group membership directly because it does not depend on $i$.
	However, it does indirectly improve the group estimation of \eqref{eq-gfe} in finite samples, because it forces us to pool the time points between two breaks. Such pooling obviously makes use of more observations in estimating the slope coefficients, leading to more precise coefficient estimates and to a better group structure estimate. 
	
	We choose the preliminary GFE-type estimate of the grouped pattern as the initial grouping $\gamma^{(0)}$, namely $\gamma^{(0)}=\dot{\gamma}$.
	The estimation of $(\dot \beta, \dot \gamma)$ may be implemented by an algorithm similar to Algorithm \ref{al-kmeans}, in which the objective function in \eqref{eq-s-obj} is replaced by
	$\sum_{i=1}^N \sum_{t=1}^T 	(y_{it}  - x_{it} ' \beta_{g^{(s)}_i ,t} )^2 /(N T)$.
Given $\dot{\gamma}$ is consistent (see Appendix \ref{sec-gfe-proof}), this makes the convergence fast. 
	 Note that although individual-invariant regressors are in principle allowed in model~\eqref{eq:model} as long as there are periods without a structural break, this preliminary GFE-type estimation prevents including individual-invariant regressors (e.g., some global variables that are common to all countries) when $x_{it}$ includes a constant term, because these regressors are multicollinear with the constant. 
A possible solution is to replace the adaptive Lasso penalty by the SCAD penalty \citep{fan&li2001}, which does not require preliminary estimates, yet has similar properties to the adaptive Lasso.

	As a preliminary \emph{consistent} estimate is used as an initialization, this algorithm converges quickly, and the value of the objective function does not increase over the iterations.
	To implement GFE estimation of \eqref{eq-gfe}, we draw a large number of initial values at random and select the estimate that minimizes the objective function.
	The main computational effort involves trialing a large number of starting values for the preliminary GFE estimates, and the computation time increases linearly with this number.\footnote{In our simulation experiment with $(N,T)$=(100,40), one estimation based on 100 starting values takes roughly 1.4 seconds, and that based on 1,000 starting values takes roughly 12.5 seconds of CPU time.} This algorithm works well when the number of groups is not large and there are no outliers. When the number of groups is large, 
	a more robust difference-of-convex functions programming \citep{chu2017} can be considered.

	\subsection{Asymptotic theory}
	\label{subsec: asym}
	We derive the asymptotic properties of GAGFL.
	We first show that the difference between GAGFL and AGFL applied to each group under known group memberships is asymptotically small.
	As a result, the break dates can be estimated consistently by our method.
	The asymptotic distribution of the coefficient estimator is
	the same as the least squares estimator under known group memberships and known break dates.
	In this section, we use the following notation. Let $M$ denote a generic universal constant.
	Parameters with superscript 0 are the true values.
	
	We make the following assumptions to derive the asymptotic results.	The first set of assumptions (Assumptions \ref{a: basic}--\ref{a: tail}) is for GFE-type estimation.
	These assumptions are similar to those used in \citet{bonhomme&manresa2015}, but the difference arises because we consider more general models.

		\begin{assumption}\ \\
		\label{a: basic}
		\vspace*{-0.7cm}
		\begin{enumerate}
			\item 		\label{as-compact}
		$\mathcal{B}$ is compact.

			\item 		\label{as-exogenous}
		$E (\epsilon_{it} x_{it}) = 0 $ for all $i$ and $t$.
			\item		\label{as-ij-bound}
		There exists $M >0$ such that for any $N$ and $T$,
		\begin{align*}
		\frac{1}{N}\sum_{i=1}^N \sum_{j=1}^N
		\left| \frac{1}{T}
		 \sum_{t=1}^T
		E \left( \epsilon_{it} \epsilon_{jt}  x_{it}' x_{j t} \right)
		\right|
		< M .			
		\end{align*}

			\item 		\label{as-ijcov-bound}
		There exists $M >0$ such that for any $N$ and $T$,
		\begin{align*}
		 \left|
		\frac{1}{N^2} \sum_{i=1}^N \sum_{j=1}^N
		\frac{1}{T} \sum_{t=1}^T \sum_{s=1}^T
		Cov \left( \epsilon_{it} \epsilon_{j t}  x_{it}' x_{j t} ,
		\epsilon_{is} \epsilon_{j s}  x_{is}' x_{j s} \right)
		\right|
		< M .
		\end{align*}	
	
		\item
		\label{as-x-4moment}
		There exists $M >0$ such that for any $N$ and $T$,
		$ (NT)^{-1} \sum_{i=1}^N \sum_{t=1}^T E \left( \left\| x_{it} \right\|^4 \right) < M$.
		\end{enumerate}
	\end{assumption}

	Assumption \ref{a: basic}.\ref{as-compact} requires that the parameter space of slope coefficients is compact, as in most econometric literature.
	Assumption \ref{a: basic}.\ref{as-exogenous} states that the regressors are exogenous. Note that it does not exclude the cases with $E (\epsilon_{it} x_{is}) \neq 0$ for $t \neq s$, and thus predetermined regressors are allowed.
	Assumptions \ref{a: basic}.\ref{as-ij-bound} and \ref{a: basic}.\ref{as-ijcov-bound} restrict
	the magnitude of the variability and the degree of dependence in the data.
	For example, when the data are i.i.d.~over time, $\epsilon_{it}$ and $x_{it}$ are independent,
	and $\epsilon_{it} x_{it} $ has fourth-order moments,
	then these two assumptions are satisfied.
	Assumption \ref{a: basic}.\ref{as-x-4moment} imposes a condition on the fourth-order moment of $x_{it}$.

		\begin{assumption}\ \\
		\label{a: iden}
		\vspace*{-0.7cm}
		\begin{enumerate}
			\item
		\label{as-m-eigen}
		Let
		\begin{align*}
		M ( \gamma, g, \tilde g)
		= \frac{1}{N}
		\sum_{i=1}^N \mathbf{1} \left\{ g_i^0 = g \right\}
		\mathbf{1} \left\{ g_i = \tilde g \right\}
		\begin{pmatrix}
		x_{i1} x_{i1}' & 0 & \dots & 0 \\
		0 & x_{i2} x_{i2}' & \dots & \dots  \\
		\dots & \dots & \dots & 0 \\
		0 & \dots & 0 & x_{i T} x_{i T}'
		\end{pmatrix}.
		\end{align*}
		Let $\hat \rho ( \gamma , g , \tilde g )$
		be the minimum eigenvalue of $M(\gamma, g, \tilde g)$.
		There exist $\hat \rho $ and $\rho > 0$ such that $\hat \rho \to_p \rho$ and
		$\forall g$,
		$
		\min_{ \gamma \in \mathbb{G}^N}
		\max_{\tilde g \in \mathbb{G}}
		\hat \rho ( \gamma, g, \tilde g) > \hat \rho.
		$
			\item
		\label{as-d-bound}
		Let
		$
		D_{g\tilde g i}
		= 1/T \sum_{t=1}^T \left( x_{it}' ( \beta_{g, t}^0 - \beta_{\tilde g, t}^0 )\right)^2.
		$
		For all $g \neq \tilde g$, there exists a $c_{g, \tilde g} > 0$, such that
		$
		\plim_{N, T \to  \infty} 1/N \sum_{i=1}^N D_{g \tilde g i} > c_{g,\tilde g}
		$
		and for all $i$,
		$
		\plim_{ T  \to \infty} D_{g \tilde g i} > c_{g,\tilde g}.
		$
		\end{enumerate}
	\end{assumption}

	Assumption \ref{a: iden} is an identification condition.
	Assumption \ref{a: iden}.\ref{as-m-eigen} provides an identification condition for the coefficients. Roughly speaking, this assumption states that there is no multicollinearity problem in any group structure.
	Assumption \ref{a: iden}.\ref{as-d-bound} provides an identification condition for group membership and may be called the ``group separation condition.''
	$ D_{g\tilde g i}$ is a measure of the distance between predicted values of $y_{it}$ under group $g$ and $\tilde g$ for unit $i$ and the conditions state that it is bounded away from zero.

	\begin{assumption}\ \\
	\label{a: tail}
	\vspace*{-0.7cm}
	\begin{enumerate}
			\item
		\label{as-ex2-tail}
		There exists a constant $M_{ex}^*$ such that as $N, T \to \infty$, for all $\delta > 0$, \\ $ 		\sup_{1 \le i \le N} 
		\Pr \left(   \sum_{t=1}^T \left\Vert \epsilon_{it} x_{it} \right\Vert^2 /T \ge M_{ex}^*\right)
		= O ( T^{-\delta})$.
		
			\item
		\label{as-x4-tail}
		There exists a constant $M_{x}^*$ such that as $N, T \to \infty$, for all $\delta >0$, \\ $		\sup_{1 \le i \le N}
		\Pr \left(  \sum_{t=1}^T \left\Vert x_{it} \right\Vert^4 /
	 T \ge M_x^* \right)
		= O ( T^{-\delta})$.
		
			\item
		\label{as-xb-mixing}
		There exist constants $a > 0$ and $d_1 > 0$ and a sequence $\alpha [t] < \exp(-a t^{d_1}) $ such that, for all $i= 1, \dots, N$ and $(g,\tilde g) \in \mathbb{G}^2$
		such that $g\neq \tilde g$, $\{ x_{it}'( \beta_{\tilde g, t}^0 - \beta_{g,t}^0)\}_t$,
		$\{ x_{it}'( \beta_{\tilde g, t}^0 - \beta_{g,t}^0) \epsilon_{it}\}_t$ are strongly mixing processes with mixing coefficients $\alpha [t]$.
		Moreover, $E (x_{it}'( \beta_{\tilde g, t}^0 - \beta_{g,t}^0) \epsilon_{it} ) =0$.
		
			\item
		\label{as-xb-tail}
		There exist constants $b_x>0$, $b_e >0$, $d_{2x} >0$ and $d_{2e}$ such that $\Pr ( | x_{it}'( \beta_{\tilde g, t}^0 - \beta_{g,t}^0) | > m ) \le \exp (1- (m/b_x))^{d_{2x}}$ and
		$\Pr ( | x_{it}'( \beta_{\tilde g, t}^0 - \beta_{g,t}^0) \epsilon_{it} | > m ) \le \exp (1- (m/b_e))^{d_{2e}}$, for any $i$, $t$ and $m >0$.
	\end{enumerate}
	\end{assumption}
	Assumption \ref{a: tail} is used to bound the maximum clustering error by placing restrictions on the tail behavior of the variables and the dependence structure. For example, Assumption \ref{a: tail}.\ref{as-x-4moment} may be violated when the tail of the distribution of $x_{i t}$ is so heavy that it does not possess a finite fourth moment. Another form of violation occurs when $x_{i t}$ exhibits a unit root. 	Assumption \ref{a: tail}.\ref{as-xb-mixing} requires that the mixing coefficient $\alpha [t]$ decays exponentially fast. This assumption enables us to apply exponential inequalities in \citet[Lemma B.5]{bonhomme&manresa2015} (which is based on \citet{rio2017asymptotic}).

	To examine the impact of the estimation error in the group structure on the coefficient estimates, we define 
	\begin{align*}
	\mathring \beta
	= \arg\min_{\beta \in \mathcal{B}^{GT}}
	\left( \frac{1}{N T}
	\sum_{i=1}^N \sum_{t=1}^T
	(y_{it}  - x_{it} ' \beta_{g_i^0 ,t} )^2
	+ \lambda  \sum_{g\in \mathbb{G}}
	\sum_{t=2}^T \dot w_{g,t} \left\Vert \beta_{g, t} - \beta_{g, t-1}
	\right\Vert \right) .
	\end{align*}
	Note that $\mathring \beta$ is the estimator of $\beta$ when the group memberships (i.e., $\gamma^0$) are known.
	Denote $N_g$ as the number of units in group $g$, i.e. $N_g = \sum_{i=1}^N \mathbf{1} \{ g_i^0 = g\}$ for $g \in \mathbb{G}$. 
	With the assumptions stated above, Lemma \ref{lem-db-cb-h}
	states that the impact of the estimation error in the group structure is limited, and thus the difference between $\hat \beta$ and $\mathring \beta$ is small.
	\begin{lemma}
		\label{lem-db-cb-h}
		Suppose that Assumptions \ref{a: basic}, \ref{a: iden}, and
		\ref{a: tail} are satisfied. Suppose also that $N_g / N \to \pi_g$ for some $0 < \pi_g < 1 $ for all $g\in \mathbb{G} $.  As $N,T \to \infty$, for any $\delta >0$,
		it holds that
		\begin{align}\label{eq:lemma1}
		\hat \beta_{g,t} = \mathring \beta_{g,t} + o_p (T^{-\delta}),
		\end{align}
		for all $g$ and $t$.
	\end{lemma}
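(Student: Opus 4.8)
The plan is to follow the two‑step strategy of \citet{bonhomme&manresa2015}: first show that the group‑membership estimator $\hat\gamma$ produced by \eqref{eq-objective-function} coincides with the truth $\gamma^0$ (up to relabeling) on an event whose complement has probability $o(T^{-\delta})$ for every $\delta>0$, and then observe that on that event $\hat\beta$ and $\mathring\beta$ solve the same optimization problem and hence coincide. Throughout, the key simplifying observation is that the penalty term in \eqref{eq-objective-function} does not depend on $\gamma$, so at the optimum each $\hat g_i$ must minimize $T^{-1}\sum_{t}(y_{it}-x_{it}'\hat\beta_{g,t})^2$ over $g\in\mathbb{G}$; classification is therefore driven entirely by the sum‑of‑squared‑residuals term, exactly as in the GFE analysis.

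First I would establish a preliminary consistency statement for the penalized estimator with estimated groups. Comparing the value of the objective \eqref{eq-objective-function} at $(\hat\beta,\hat\gamma)$ with its value at $(\beta^0,\gamma^0)$, the penalty is nonnegative and, using the consistency of the preliminary GFE‑type estimate $\dot\beta$ (Appendix~\ref{sec-gfe-proof}) together with the maintained conditions on $\lambda$, $\kappa$, the number of breaks and the minimum break size, the penalty evaluated at $\beta^0$ is asymptotically negligible. Hence the averaged sum of squared residuals at $(\hat\beta,\hat\gamma)$ is asymptotically no larger than at the truth, and an M‑estimation argument under Assumptions~\ref{a: basic} and~\ref{a: iden} (exogeneity, the moment bounds, the no‑multicollinearity condition \ref{a: iden}.\ref{as-m-eigen}, and the group‑separation condition \ref{a: iden}.\ref{as-d-bound}) yields that $\hat\beta$ is consistent for $\beta^0$ group by group and that the fraction of misclassified units $N^{-1}\sum_i \mathbf 1\{\hat g_i\neq g_i^0\}$ converges to zero in probability.

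Next I would upgrade this to $\Pr(\hat\gamma\neq\gamma^0)=o(T^{-\delta})$ for every $\delta>0$. If $\hat g_i=\tilde g\neq g_i^0=g$, then $T^{-1}\sum_t(y_{it}-x_{it}'\hat\beta_{\tilde g,t})^2\le T^{-1}\sum_t(y_{it}-x_{it}'\hat\beta_{g,t})^2$; expanding $y_{it}=x_{it}'\beta_{g,t}^0+\epsilon_{it}$ and using the uniform consistency of $\hat\beta$ shows this forces
\[
D_{g\tilde g i}\;\le\; o_p(1)\;+\;\Big|\tfrac2T\textstyle\sum_t x_{it}'(\beta^0_{\tilde g,t}-\beta^0_{g,t})\,\epsilon_{it}\Big|\;+\;(\text{lower-order terms}).
\]
By Assumption~\ref{a: iden}.\ref{as-d-bound}, $D_{g\tilde g i}$ is bounded away from zero, while by the exponential tail and mixing conditions of Assumption~\ref{a: tail}, applied through the exponential inequalities of \citet[Lemma B.5]{bonhomme&manresa2015}, the mean‑zero average on the right exceeds any fixed constant only with probability $o(T^{-\delta})$ uniformly in $i$; the remainder terms can be handled on the same high‑probability event where $\hat\beta$ is uniformly close to $\beta^0$. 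A union bound over the $N$ units and the finitely many group pairs (absorbed using the maintained relation between $N$ and $T$) then delivers $\Pr(\hat\gamma\neq\gamma^0)=o(T^{-\delta})$.

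Finally, on the event $\{\hat\gamma=\gamma^0\}$ (after matching labels), the map $\beta\mapsto$ \eqref{eq-objective-function} with $\gamma$ fixed at $\gamma^0$ is exactly the objective defining $\mathring\beta$, and its quadratic part has a positive definite Hessian by Assumption~\ref{a: iden}.\ref{as-m-eigen}, so the minimizer is unique and $\hat\beta=\mathring\beta$ there. Since the complementary event has probability $o(T^{-\delta})$ and both $\hat\beta$ and $\mathring\beta$ lie in the compact set $\mathcal B^{GT}$, equation~\eqref{eq:lemma1} follows. I expect the main obstacle to be the upgrade step: propagating the consistency of $\hat\beta$---which now involves coefficient trajectories that are heterogeneous and carry a possibly growing number of breaks---into a per‑unit tail bound on the misclassification event that is uniform in $i$, and doing so through the group‑separation measure $D_{g\tilde g i}$ rather than through a finite‑dimensional parameter as in \citet{bonhomme&manresa2015}.
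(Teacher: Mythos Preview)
Your strategy---showing $\Pr(\hat\gamma\neq\gamma^0)=o(T^{-\delta})$ and then arguing $\hat\beta=\mathring\beta$ exactly on the complementary event---is clean, but it relies on a hypothesis the lemma does not have. The union bound over the $N$ units turns a per-unit tail bound of order $O(T^{-\delta'})$ into $N\cdot O(T^{-\delta'})$, and absorbing the factor $N$ requires something like $N/T^{\delta}\to 0$. You invoke ``the maintained relation between $N$ and $T$,'' but Lemma~\ref{lem-db-cb-h} assumes only Assumptions~\ref{a: basic}--\ref{a: tail} and $N_g/N\to\pi_g$; the rate condition $N/T^{\delta}\to 0$ is first imposed in Theorem~\ref{lem-break-c-agfl}, not here. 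So as written your argument does not establish the lemma under its own hypotheses (it would prove a weaker statement that happens to suffice downstream, but not the lemma as stated).

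The paper avoids this by never claiming perfect classification. It uses Lemma~\ref{lem-g-consistent}, which controls the \emph{average} misclassification rate $N^{-1}\sum_i\mathbf 1\{\hat g_i(\beta)\neq g_i^0\}=o_p(T^{-\delta})$ uniformly over $\beta\in\mathcal N_\eta$, with no union bound over $i$. Because the penalty does not depend on $\gamma$, one has $\mathring Q(\beta)-\hat Q(\beta)=\check Q(\beta)-\dot Q(\beta)$, and the right-hand side is $o_p(T^{-\delta})$ at both $\hat\beta$ and $\mathring\beta$ once these are shown to lie in $\mathcal N_\eta$. Optimality then gives $0\le\mathring Q(\hat\beta)-\mathring Q(\mathring\beta)\le o_p(T^{-\delta})$. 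The remaining work is to invert this objective-function gap into a parameter gap in the presence of the nondifferentiable penalty: the paper expands $\mathring Q(\hat\beta)-\mathring Q(\mathring\beta)$, uses the first-order (subgradient) condition for $\mathring\beta$ to show that the cross term and the penalty difference together are nonnegative, and is left with the pure quadratic $T^{-1}\sum_g(\hat\beta_g-\mathring\beta_g)'M(\gamma^0,g,g)(\hat\beta_g-\mathring\beta_g)$, which the eigenvalue bound in Assumption~\ref{a: iden}.\ref{as-m-eigen} converts to the conclusion. Your route, had the rate condition been available, would bypass this inversion step entirely; the paper's route is more work but delivers the lemma without that extra assumption.
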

	Note that $\delta$ in the theorem can be arbitrarily large, and we obtain this ``super-consistency'' result because we model heterogeneity via discrete grouped pattern with the number of groups being fixed and finite. This also enables us to detect the breaks even when the true group memberships are unknown. 
	
	Next, we show that our method detects breaks correctly.
	Recall that $m_g$ denotes the number of breaks for group $g$ 
	and $m_g^0$ is the true number of breaks for this group.
	$\mathcal{T}_{m_g,g} = \{ T_{g,1}, \dots, T_{g ,m} \}$ denotes a set of break dates
	and $\mathcal{T}_{m_g^0 ,g}^0 = \{T_{g,1}^0, \dots, T_{g, m_g^0}^0\}$ is the set of true break dates.
	$\mathcal{T}_{m_g^0, g}^{0c}$ denotes the complement of $\mathcal{T}_{m_g^0 ,g}^0$, representing the set of time with no breaks.
	Note that $\beta_{g,t}^0 - \beta_{g, t-1}^0 =0$ if $t \in \mathcal{T}_{m_g^0, g}^{0c}$.
	Let $\alpha_{g,1} = \beta_{g,1}$ and $\alpha_{g,j} = \beta_{g, T_{g,j-1}}$ for $j = 2, \dots, m_g^0+1$.
	Let
	$
	J_{\min} = \min_{g \in \mathbb{G}, 1 \le j \le m_g^0} \| \alpha_{g, j+1} - \alpha_{g, j}\|
	$
	be the minimum break size.
	We make the following assumption that is similar to Assumption A.2 in \citet{qian&su2016}.
	\begin{assumption}\ \\
	\label{a: break}\vspace*{-0.7cm}
	\begin{enumerate}
			\item
		\label{as-jmin}
		$ \sqrt{ N} T\lambda \left(\sum_{g\in \mathbb{G}} m_g^0\right)J_{\min}^{-\kappa}  = O_p (1)$.

			\item
		\label{as-lambda}
		$\sqrt{N} T \lambda N^{-\kappa /2} \to_p \infty$.
		
			\item
		\label{as-njmin}
		$\sqrt{N} J_{\min} \to \infty$.
	\end{enumerate}
	\end{assumption}
	Assumption \ref{a: break}.\ref{as-jmin} allows the total number of breaks of all groups $\sum_{g\in\mathbb{G}}m_g^0$ to diverge to infinity
	at a slow rate. Assumptions \ref{a: break}.\ref{as-jmin} and \ref{a: break}.\ref{as-lambda} are used to show the consistent detection of breaks by AGFL. 
	Assumption \ref{a: break}.\ref{as-njmin} states that break sizes are not very small so that all breaks can be identified.
	Note that it still allows the break size to tend to zero as long as the rate is slower than $\sqrt{N}$.
	
	In the following, we show that our method can consistently select the number of breaks and estimate the breakpoints even when the group membership is unknown. Consistent break detection in AGFL requires that the weights be adaptive and correctly estimated, which further requires the preliminary estimates used to construct the weights to be consistent. Hence, we first show that the preliminary estimator is $\sqrt{N}$-consistent.
		\begin{theorem}
		\label{thm-gfe-ad}
		Suppose that Assumptions \ref{a: basic}, \ref{a: iden}
 and \ref{a: tail} hold. Suppose that $N_g /N \to \pi_g > 0$ for any $g \in \mathbb{G}$.
		Then, it follows that for all $g$ and $t$,
		\begin{align}
			\dot \beta_{g,t} - \beta_{g,t}^0 = O_p\left(\frac{1}{\sqrt{N}}\right).
		\end{align}
	\end{theorem}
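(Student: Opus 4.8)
The plan is to follow the strategy of \citet{bonhomme&manresa2015}: show that, after relabeling groups, the GFE grouping $\dot\gamma$ coincides with the true grouping $\gamma^0$ with probability tending to one, so that $\dot\beta$ is asymptotically equivalent to the oracle estimator that knows $\gamma^0$; the $\sqrt N$-rate then follows from a period-by-period least-squares analysis of that oracle estimator. Since the group labels are identified only up to a permutation, throughout I fix the permutation matching $\dot\gamma$ to $\gamma^0$ as closely as possible.

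First I would establish averaged consistency. Because $(\dot\beta,\dot\gamma)$ minimizes the unpenalized sum of squares, evaluating the criterion at $(\beta^0,\gamma^0)$ gives $\frac{1}{NT}\sum_{i,t}(y_{it}-x_{it}'\dot\beta_{\dot g_i,t})^2\le\frac{1}{NT}\sum_{i,t}\epsilon_{it}^2$. Expanding the left-hand side around $\beta^0$, bounding the cross term $\frac{1}{NT}\sum_{i,t}\epsilon_{it}x_{it}'(\dot\beta_{\dot g_i,t}-\beta_{g_i^0,t}^0)$ by Cauchy--Schwarz, and using Assumptions \ref{a: basic}.\ref{as-exogenous}--\ref{a: basic}.\ref{as-x-4moment} to control the averages of $\|\epsilon_{it}x_{it}\|^2$ and $\|x_{it}\|^4$ yields $\frac{1}{NT}\sum_{i,t}\big(x_{it}'(\dot\beta_{\dot g_i,t}-\beta_{g_i^0,t}^0)\big)^2=o_p(1)$, which by Assumption \ref{a: iden}.\ref{as-m-eigen} upgrades to $\frac{1}{NT}\sum_{i,t}\|\dot\beta_{\dot g_i,t}-\beta_{g_i^0,t}^0\|^2=o_p(1)$. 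Next, if a unit $i$ is assigned to $\tilde g\neq g_i^0$, then because $\dot g_i$ minimizes $\sum_t(y_{it}-x_{it}'\dot\beta_{\cdot,t})^2$ over groups, a Pythagorean decomposition combined with the previous display forces $\frac1T\sum_t(x_{it}'(\dot\beta_{\tilde g,t}-\beta_{g_i^0,t}^0))^2$ to be small on average, whereas Assumption \ref{a: iden}.\ref{as-d-bound} keeps $\frac1T\sum_t(x_{it}'(\beta_{\tilde g,t}^0-\beta_{g_i^0,t}^0))^2$ away from zero; hence only an $o_p(1)$ fraction of units can be misclassified. To sharpen this I would invoke the exponential inequalities of \citet[Lemma B.5]{bonhomme&manresa2015} --- applicable by the tail and mixing conditions in Assumption \ref{a: tail} --- which give $\sup_i\Pr(\dot g_i\neq g_i^0)=O(T^{-\delta})$ for every $\delta>0$, so that, under the maintained rate relation between $N$ and $T$, $\Pr(\dot\gamma\neq\gamma^0)\to0$ and in particular the number of misclassified units is $o_p(\sqrt N)$.

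Then, on the event $\{\dot\gamma=\gamma^0\}$ the absence of a penalty in \eqref{eq-gfe} decouples the criterion across $(g,t)$, so that $\dot\beta_{g,t}=\big(\sum_{i:g_i^0=g}x_{it}x_{it}'\big)^{-1}\sum_{i:g_i^0=g}x_{it}y_{it}$ and
\begin{align*}
\dot\beta_{g,t}-\beta_{g,t}^0=\Big(\tfrac1{N_g}\sum_{i:g_i^0=g}x_{it}x_{it}'\Big)^{-1}\tfrac1{N_g}\sum_{i:g_i^0=g}x_{it}\epsilon_{it}.
\end{align*}
The inverse is $O_p(1)$ because Assumption \ref{a: iden}.\ref{as-m-eigen} at $\gamma=\gamma^0$, $\tilde g=g$ bounds its minimum eigenvalue away from zero while Assumption \ref{a: basic}.\ref{as-x-4moment} bounds it above, and $E\big\|\tfrac1{N_g}\sum_{i:g_i^0=g}x_{it}\epsilon_{it}\big\|^2\le\tfrac1{N_g^2}\sum_{i,j}|E(\epsilon_{it}\epsilon_{jt}x_{it}'x_{jt})|=O(1/N)$ by Assumptions \ref{a: basic}.\ref{as-exogenous} and \ref{a: basic}.\ref{as-ij-bound}, so the second factor is $O_p(1/\sqrt N)$; this gives $\dot\beta_{g,t}-\beta_{g,t}^0=O_p(1/\sqrt N)$ on this event. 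On its complement, the at most $o_p(\sqrt N)$ misclassified units have bounded moments and perturb the normal equations for group $g$ by a term of order $o_p(1/\sqrt N)$, which is absorbed; the conclusion therefore holds for each $(g,t)$, and the same tail bounds make it uniform over $t$ if needed.

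The hard part will be the classification step: upgrading ``an $o_p(1)$ fraction of units misclassified'' to a count that is $o_p(\sqrt N)$, so that misclassification does not contaminate the $\sqrt N$-rate. This is exactly where the strong tail and mixing restrictions of Assumption \ref{a: tail} and the Rio-type exponential inequalities are used, and it also requires care about uniformity over $i$, coordination of the $(N,T)$ rates, and the permutation indeterminacy of the group labels.
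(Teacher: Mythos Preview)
Your proposal is correct and follows essentially the same approach as the paper: the paper's proof simply combines Lemma~\ref{lem-cb} (the oracle estimator $\check\beta_{g,t}$ computed under known $\gamma^0$ is $\sqrt N$-consistent via period-by-period least squares) with Lemma~\ref{lem-db-cb} (the GFE estimator differs from the oracle by $o_p(T^{-\delta})$, which in turn rests on the classification bound of Lemma~\ref{lem-g-consistent} obtained from the Rio-type exponential inequalities you cite). Your identification of the classification step as the hard part, and of Assumption~\ref{a: tail} as the place where the tail and mixing restrictions do the work, matches the paper exactly; the only cosmetic slip is that Assumption~\ref{a: basic}.\ref{as-ij-bound} places the absolute value outside the time average and so does not directly deliver the per-$t$ variance bound you write, though the full Assumption~\ref{a: basic} suffices for Lemma~\ref{lem-cb}.
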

This theorem guarantees that the weight in the penalty, $\dot w_{g,t}$, becomes large when there is no break (i.e., $\beta_{g,t}-\beta_{g,t-1}$), which further enables us to detect periods without breaks. Let $\hat \theta_{g,t} = \hat \beta_{g,t} - \hat \beta_{g,t-1}$. 
	\begin{theorem}
		\label{lem-break-c-agfl}
		Suppose that Assumptions \ref{a: basic}, \ref{a: iden}, \ref{a: tail}, and \ref{a: break} hold. Suppose that $N_g /N \to \pi_g > 0$ for any $g \in \mathbb{G}$.
		It follows that
		\begin{align*}
		\Pr \left( \left\| \hat \theta_{g,t} \right\| = 0, \forall t \in \mathcal{T}_{m_g^0,g}^{0c}, g \in \mathbb{G}\right) \to 1
		\end{align*}
		as $N ,T \to \infty$ with $N /T^\delta \to 0$ for some $\delta$.
	\end{theorem}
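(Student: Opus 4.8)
The plan is to reduce the statement to the known‑membership case and then apply the Karush--Kuhn--Tucker (KKT) characterisation of the adaptive group fused Lasso to rule out spurious break dates. I would first argue that $\Pr(\hat\gamma=\gamma^0)\to1$ under $N/T^\delta\to0$: this is the classification step underlying Lemma~\ref{lem-db-cb-h}, and uses Assumptions~\ref{a: iden}.\ref{as-d-bound} and~\ref{a: tail} together with the rate restriction exactly as in \citet{bonhomme&manresa2015}. On the event $\{\hat\gamma=\gamma^0\}$ the least‑squares part of \eqref{eq-objective-function} evaluated at $(\beta,\hat\gamma)$ equals its value at $(\beta,\gamma^0)$, and Assumption~\ref{a: iden}.\ref{as-m-eigen} makes that objective strictly convex in $\beta$ with probability approaching one, so $\hat\beta=\mathring\beta$ and hence $\hat\theta_{g,t}=\mathring\theta_{g,t}:=\mathring\beta_{g,t}-\mathring\beta_{g,t-1}$. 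Since $G$ is fixed, it then suffices to show, for each $g$, that $\Pr(\mathring\theta_{g,t}=0,\ \forall t\in\mathcal{T}_{m_g^0,g}^{0c})\to1$.

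Fix $g$ and reparametrise the group‑$g$ objective in first differences $\theta_{g,1}=\beta_{g,1}$, $\theta_{g,t}=\beta_{g,t}-\beta_{g,t-1}$ for $t\ge2$, which turns the penalty into a weighted group‑Lasso penalty $\lambda\sum_{t\ge2}\dot w_{g,t}\|\theta_{g,t}\|$ over a convex quadratic loss. The subdifferential optimality condition then shows that, for $t\in\mathcal{T}_{m_g^0,g}^{0c}$, one has $\mathring\theta_{g,t}\neq0$ only if $\|S_{g,t}\|=\lambda\dot w_{g,t}$, where $S_{g,t}=\frac{2}{NT}\sum_{i:g_i^0=g}\sum_{s=t}^{T}x_{is}(y_{is}-x_{is}'\mathring\beta_{g,s})$ is the gradient of the loss in $\theta_{g,t}$. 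Hence the conclusion follows once I show $\max_{t\in\mathcal{T}_{m_g^0,g}^{0c}}\|S_{g,t}\|<\min_{t\in\mathcal{T}_{m_g^0,g}^{0c}}\lambda\dot w_{g,t}$ with probability tending to one.

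For the penalty side, $t\in\mathcal{T}_{m_g^0,g}^{0c}$ means $\beta_{g,t}^0-\beta_{g,t-1}^0=0$, so Theorem~\ref{thm-gfe-ad} gives $\dot\beta_{g,t}-\dot\beta_{g,t-1}=O_p(N^{-1/2})$ and therefore $\dot w_{g,t}=\|\dot\beta_{g,t}-\dot\beta_{g,t-1}\|^{-\kappa}\ge c\,N^{\kappa/2}$ with probability approaching one, which makes $\lambda\dot w_{g,t}$ at least of order $\lambda N^{\kappa/2}$, a rate driven to infinity relative to $(\sqrt{N}T)^{-1}N^{\kappa}$ by Assumption~\ref{a: break}.\ref{as-lambda}. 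For the score side I would decompose the residual as $y_{is}-x_{is}'\mathring\beta_{g,s}=\epsilon_{is}+x_{is}'(\beta_{g,s}^0-\mathring\beta_{g,s})$: the estimation‑error contribution to $S_{g,t}$ is bounded in norm by $\big(\frac{1}{NT}\sum_{i:g_i^0=g}\sum_s\|x_{is}\|^2\big)\cdot\max_s\|\mathring\beta_{g,s}-\beta_{g,s}^0\|$, which is $O_p(1)$ (Assumption~\ref{a: basic}) times the uniform consistency rate of the oracle estimator $\mathring\beta$, itself obtained from Assumptions~\ref{a: iden} and~\ref{a: break}.\ref{as-jmin} along the lines of \citet{qian&su2016}; the stochastic contribution $\frac{2}{NT}\sum_{i:g_i^0=g}\sum_{s=t}^{T}x_{is}\epsilon_{is}$ must be controlled \emph{uniformly in $t$}, which is where the moment bounds of Assumption~\ref{a: basic} combine with the strong‑mixing and exponential‑tail restrictions of Assumption~\ref{a: tail} through a maximal inequality of the kind used in \citet[Lemma B.5]{bonhomme&manresa2015}. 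One then checks, using Assumption~\ref{a: break} and $N/T^\delta\to0$, that $\max_{t\in\mathcal{T}_{m_g^0,g}^{0c}}\|S_{g,t}\|$ is of strictly smaller order than $\min_{t\in\mathcal{T}_{m_g^0,g}^{0c}}\lambda\dot w_{g,t}$; the required strict inequality then holds with probability tending to one, forces $\mathring\theta_{g,t}=0$ at every no‑break date, and a union bound over the finitely many $g\in\mathbb{G}$ completes the proof.

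I expect the main obstacle to be the interplay between membership estimation and the shrinkage step. Exact sparsity of $\hat\theta$ can only be inherited from $\mathring\theta$ if $\hat\beta$ coincides \emph{exactly} with $\mathring\beta$ on a high‑probability event; this is precisely what the ``super‑consistency'' $o_p(T^{-\delta})$ in Lemma~\ref{lem-db-cb-h} (equivalently $\Pr(\hat\gamma=\gamma^0)\to1$) and the restriction $N/T^\delta\to0$ buy us, since a weaker classification guarantee would leave a small but nonzero perturbation that would have to be absorbed directly into the KKT conditions. The second delicate point is internal to each group: the score bound must hold \emph{uniformly over all $T$ candidate break dates}, not merely at a fixed $t$, which is why the exponential‑tail and mixing conditions in Assumption~\ref{a: tail} are indispensable, together with the calibration in Assumption~\ref{a: break} that forces the adaptive weight to blow up fast enough at no‑break periods to dominate the score.
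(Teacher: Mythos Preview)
Your approach is essentially the same as the paper's: both reduce to the oracle case by showing $\Pr(\hat\gamma=\gamma^0)\to1$ (via the classification consistency underlying Lemma~\ref{lem-g-consistent} together with $N/T^\delta\to0$), so that $\hat\theta=\mathring\theta$ on a high-probability event, and then establish the no-break conclusion for $\mathring\theta$. The paper simply invokes the known-membership result (Lemma~\ref{lem-break-c}) as a black box, whereas you sketch its proof inline via the KKT/subdifferential argument; the two are equivalent in substance. One minor point worth correcting in your write-up: the uniform control of the score $S_{g,t}$ in the oracle step does not actually require the mixing and exponential-tail conditions of Assumption~\ref{a: tail}---the paper's Lemma~\ref{lem-break-c} uses only Assumptions~\ref{a: basic}.\ref{as-ij-bound}, \ref{a: basic}.\ref{as-x-4moment}, \ref{a: iden}.\ref{as-m-eigen} and~\ref{a: break} (moment bounds suffice, following \citet{qian&su2016}); Assumption~\ref{a: tail} is needed solely for the classification step.
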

	This theorem states that our method consistently spots dates on which there is no break. 
	This theorem requires that $N / T^{\delta} \to 0$ for some $\delta$. As $\delta$ is arbitrary, this condition is satisfied as long as $N$ is of geometric order $T$ but it does not hold if $N$ is of exponential order $T$.
	Note that the probability in this theorem is unconditional in the sense that it does not depend on knowing the true group membership. We also establish selection consistency as below.
		
	\begin{theorem}
		\label{lem-agafl-date}
		Suppose that Assumptions \ref{a: basic}, \ref{a: iden},
		\ref{a: tail}, and
		\ref{a: break} hold. Suppose that $N_g /N \to \pi_g > 0$ for any $g \in \mathbb{G}$.
		It holds that, as $N, T \to \infty$ with $N/T^\delta \to 0$ for some $\delta >0$,
		\begin{align*}
		\Pr (\hat m_g = m_{g}^0, \forall g \in \mathbb{G}) \to 1,
		\end{align*}
		and
		$
		\Pr \left( \hat T_{g,j} = T_{g,j}^0, \forall j \in \{1, \dots, m_g^0\}, g \in \mathbb{G} \mid \hat m_g =m_g^0, \forall g \in \mathbb{G}\right) \to 1.
		$
	\end{theorem}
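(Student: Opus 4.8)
\noindent The plan is to combine Theorem~\ref{lem-break-c-agfl}, which already excludes spurious breaks, with a complementary ``no under-detection'' statement, and then to read off both conclusions from elementary set inclusions. Let $E_N$ denote the event in Theorem~\ref{lem-break-c-agfl}, so $\Pr(E_N)\to 1$; on $E_N$ one has $\hat{\mathcal{T}}_{\hat m_g,g}\subseteq\mathcal{T}_{m_g^0,g}^{0}$ for every $g$, hence $\hat m_g\le m_g^0$, and if in addition $\hat m_g=m_g^0$ the inclusion becomes an equality between finite sets of the same cardinality, so that $\hat T_{g,j}=T_{g,j}^0$ for all $j$ and $g$. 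Writing $B_N=\{\hat m_g=m_g^0\ \text{for all }g\in\mathbb{G}\}$, this shows $E_N\cap B_N\subseteq\{\hat T_{g,j}=T_{g,j}^0\ \text{for all }j,g\}$, so once $\Pr(B_N)\to 1$ is established the conditional assertion follows from $\Pr(\{\hat T_{g,j}=T_{g,j}^0\ \text{for all }j,g\}\mid B_N)\ge(\Pr(E_N)+\Pr(B_N)-1)/\Pr(B_N)\to 1$. Everything therefore reduces to proving $\Pr(B_N)\to 1$.

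\noindent For this, I would show that the ``no under-detection'' event $F_N=\{\|\hat\theta_{g,T_{g,j}^0}\|>0\ \text{for all }j\in\{1,\dots,m_g^0\},\ g\in\mathbb{G}\}$ satisfies $\Pr(F_N)\to 1$, since on $E_N\cap F_N$ the estimated and true break sets coincide, whence $\hat m_g=m_g^0$. The route is through the oracle estimator $\mathring\beta$: Lemma~\ref{lem-db-cb-h} gives $\hat\theta_{g,t}=\mathring\theta_{g,t}+o_p(T^{-\delta})$ uniformly in $g$ and $t$ for an arbitrarily large $\delta$, where $\mathring\theta_{g,t}=\mathring\beta_{g,t}-\mathring\beta_{g,t-1}$, while $\mathring\beta$ is, group by group, exactly the AGFL estimator of \citet{qian&su2016}, which under the present assumptions is uniformly consistent at a rate no slower than $N^{-1/2}$ (by arguments parallel to Theorem~\ref{thm-gfe-ad}). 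Hence at a true break date $\|\mathring\theta_{g,T_{g,j}^0}\|\ge\|\alpha_{g,j+1}^{0}-\alpha_{g,j}^{0}\|-O_p(N^{-1/2})\ge J_{\min}-O_p(N^{-1/2})$, which by Assumption~\ref{a: break}.\ref{as-njmin} equals $(1-o_p(1))J_{\min}$. Moreover, the hypothesis $N/T^{\delta_0}\to 0$ for some $\delta_0>0$ together with $\sqrt N J_{\min}\to\infty$ makes $J_{\min}$ of strictly larger order than $T^{-\delta_0/2}$, so choosing the (arbitrary) Lemma-\ref{lem-db-cb-h} exponent $\delta>\delta_0/2$ renders the perturbation $o_p(T^{-\delta})$ negligible relative to $J_{\min}$; therefore $\|\hat\theta_{g,T_{g,j}^0}\|\ge\frac{1}{2}J_{\min}>0$ with probability tending to one, uniformly over all $g$ and $j$ --- the union bound over the possibly diverging $\sum_{g}m_g^0$ being absorbed by the $O(T^{-\delta})$-type probability bounds already exploited in Lemma~\ref{lem-db-cb-h} and the growth restriction of Assumption~\ref{a: break}.\ref{as-jmin}. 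This delivers $\Pr(E_N\cap F_N)\to 1$, hence $\Pr(B_N)\to 1$, which with the first paragraph finishes the argument.

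\noindent The step I expect to be hardest is the no-under-detection argument, in particular the bookkeeping that prevents the (arbitrarily fast but nonzero) approximation error of Lemma~\ref{lem-db-cb-h} from collapsing a genuine break whose magnitude $J_{\min}$ is itself allowed to shrink to zero: one must verify that the joint rate conditions $\sqrt N J_{\min}\to\infty$ and $N/T^{\delta_0}\to 0$ really do leave room for an admissible Lemma-\ref{lem-db-cb-h} exponent, and that the bound is uniform when $\sum_{g}m_g^0\to\infty$ so that the attendant union bounds remain summable --- which is where the tail restrictions of Assumption~\ref{a: tail} enter. A subsidiary point is to record the uniform $\sqrt N$-consistency (or faster) of the oracle penalized estimator $\mathring\beta$, which is not isolated as a separate statement but follows from the same machinery as Theorem~\ref{thm-gfe-ad} and \citet{qian&su2016}. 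Once these pieces are assembled, both the count consistency and the conditional location consistency follow purely from the set-inclusion bookkeeping of the first paragraph.
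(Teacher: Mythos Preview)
Your approach is correct but unnecessarily laborious, and the paper takes a much shorter route. The key observation you miss --- already exploited in the proof of Theorem~\ref{lem-break-c-agfl} --- is that on the event $\{\hat\gamma=\gamma^0\}$ one has $\hat\beta=\mathring\beta$ \emph{exactly}, not merely up to $o_p(T^{-\delta})$: once the group assignments coincide, both estimators minimize the same penalized criterion. Since the proof of Theorem~\ref{lem-break-c-agfl} already shows $\Pr(\hat\gamma\neq\gamma^0)=o(NT^{-\delta})\to 0$ under the rate condition $N/T^{\delta}\to 0$, every statement about $\mathring\beta$'s break detection transfers verbatim to $\hat\beta$ at the cost of an additive $\Pr(\hat\gamma\neq\gamma^0)=o(1)$. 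The paper then simply invokes the oracle analogue of the present theorem (the unnumbered lemma following Lemma~\ref{lem-break-c} in Appendix~\ref{sec-agfl-proof}, itself a group-wise application of Corollary~3.4 in \citet{qian&su2016}), which already delivers $\Pr(\mathring m_g=m_g^0\ \forall g)\to 1$ and the corresponding break-date consistency for $\mathring\beta$.

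This short-circuits precisely the step you identify as hardest: there is no need to control an $o_p(T^{-\delta})$ perturbation against a shrinking $J_{\min}$, no uniformity bookkeeping over a possibly diverging $\sum_g m_g^0$, and no separate ``no under-detection'' argument for $\hat\beta$. All of that work is already embedded in the oracle result for $\mathring\beta$; the only new ingredient is the high-probability event $\{\hat\gamma=\gamma^0\}$. Your decomposition into $E_N$ and $F_N$ and the set-inclusion argument of your first paragraph are perfectly sound, but they duplicate what the oracle lemma already packages, and your second paragraph's rate-matching calculation, while valid, is avoidable.
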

	
	This theorem states that the number of breaks and break dates are estimated consistently as $N$ and $T$ go to infinity at an appropriate rate. As above, the probabilities are both unconditional, implying that the selection consistency is established even when we do not know the group structure. 
	
	Fundamentally, Theorems~\ref{lem-break-c-agfl} and~\ref{lem-agafl-date} both result from the super-consistency of group membership estimation and consistency of AGFL applied to each group with known group memberships. The error introduced by estimating group memberships is negligible (Lemma~\ref{lem-db-cb-h}) and GAGFL is asymptotically equivalent to applying AGFL to each group with known group structure. The AGFL estimator applied to each \emph{true} group is pointwise consistent with the convergence rate of $1/\sqrt{N}$ (see Lemma~\ref{lem-lasso-c} in Appendix~\ref{sec-agfl-proof} or Theorem 3.2(ii) of \citet{qian&su2016}). We can thus obtain the pointwise consistency of our GAGFL estimator under unknown group structure.

	Lastly, we present the asymptotic distribution of $\hat \beta$. Put simply, this is the same as that of the least squares estimator for each group within each no-break regime with known group memberships and break dates. 
		To state the theorem, we introduce the following notation.
	Denote 
	$I_{g,j}$ as the number of time periods between $T_{g,j}^0$ and $T_{g,j+1}^0$ for group $g$, and denote $I_{\min}$ as the minimum number of periods for which there is no break, namely $ 	I_{\min} = \min_{g \in \mathbb{G}, 1 \le j \le m_g^0+1} \| T_{g,j}^0 - T_{g,j-1}^0 \| $.
	Let
	$
	\Sigma_{x,g,j} = \plim_{T, N_g \to \infty} 1/(N_gI_{g, j})\sum_{g_i^0 = g}
	\sum_{t=T_{g,j}^0}^{T_{g,j+1}^0 -1} x_{it} x_{it}',
	$
	and
	\begin{align*}
	\Omega_{g, h, j, j'} = \lim_{T, N_g , N_h \to \infty}
	\frac{1}{\sqrt{N_g} \sqrt{N_h}} \frac{1}{\sqrt{I_{g, j}} \sqrt{I_{h, j'}} }
	\sum_{g_i^0 = g} \sum_{g_{i'}^0 = h}
	\sum_{t=T_{g,j}^0}^{T_{g,j+1}^0 -1}
	\sum_{t'=T_{h,j'}^0}^{T_{h,j'+1}^0 -1}E( x_{i t} x_{i' t'}' \epsilon_{i t} \epsilon_{i' t'}).
	\end{align*}
	Let $\Omega_{g,g}$ be a $(m_g^0+1) k \times (m_g^0+1) k $ matrix whose $(j,j')$-th $k\times k$ block is $\Omega_{g, g, j', j'}$,
	and $\Sigma_{x,g}$ be a $ (m_g^0 +1) k \times (m_g^0+1) k$ block diagonal matrix whose $t$-th diagonal block is $\Sigma_{x,g ,j}$.
	Further, let $\Omega$ be a $\sum_{g=1}^G (m_g^0 +1) k  \times \sum_{g=1}^G (m_g^0 +1) k  $ matrix whose $(g,h)$-th $ (m_g^0+1)k\times (m_h^0+1)k$ block is $\Omega_{g, h}$,
	and $\Sigma_{x}$ be a $ \sum_{g=1}^G (m_g^0 +1) k  \times \sum_{g=1}^G (m_g^0 +1) k $
	block diagonal matrix whose $g$-th diagonal block is $\Sigma_{x,g}$.
	We require the following extra assumptions for the asymptotic distribution.
	\begin{assumption}
		\label{as-cb-break}
		Suppose that $\Sigma_{x} $ and $\Omega$ are well defined, their minimum eigenvalues are bounded away from zero, and their maximum eigenvalues are bounded uniformly over $T$.
		$N_g /N \to \pi_g > 0$ for any $g \in \mathbb{G}$.
		Let
		\begin{align*}
		d_{g, NT} =  \frac{1}{\sqrt{N_g}}
		\sum_{g_i^0 =g} \left( \sum_{t=1}^{T_{g,1}^2-1 } x_{it} \epsilon_{it}/\sqrt{I_{g,1}} ,\dots, \sum_{t=T_{g, m_g^0}^0}^{T}x_{i t} \epsilon_{i t} / \sqrt{I_{g, m_g^0+1}}\right)'.
		\end{align*}
		For an $l \times \sum_{g=1}^G (m_g^0 +1) k$ matrix $D$, where $l$ does not depend on $T$ and $\lim_{T \to \infty} D \Omega D'$ exists and is positive definite,
		$D ( d_{1, N T}' , \dots, d_{G, N T}')' \to_d N (0, \lim_{T \to \infty} D\Omega D')$.
	\end{assumption}
	\begin{assumption}
		\label{as-jmin-imin}
		$N \sum_{g=1}^G (m_g^0) \lambda^2 I_{\min}^{-1} J_{\min}^{-2\kappa} =o_p (1)$.
	\end{assumption}

	Assumption \ref{as-cb-break} simply states that the standard assumptions for least squares are satisfied for each group and each span of periods between two breaks. The purpose of introducing $D$ is to analyze a finite dimensional vector of linear combinations of elements of $\hat \alpha$. Note that the dimension of $\hat \alpha$ is potentially increasing in $T$ and its asymptotic distribution is hard to discuss. We instead examine finite dimensional objects. Assumption \ref{as-jmin-imin} is a technical assumption on the break sizes.

	The following shows that the GAGFL slope coefficient estimator of group $g$ in regime $j$ is consistent and asymptotically normal with the convergence rate of $\sqrt{N_gI_{g,j}}$.
	
	\begin{theorem}
		\label{thm-mrb-ad-agfl}
		Suppose that Assumptions \ref{a: basic}, \ref{a: iden},
		\ref{a: tail},
		\ref{a: break}, \ref{as-cb-break}, and \ref{as-jmin-imin} hold. Suppose that $N_g /N \to \pi_g > 0$ for any $g \in \mathbb{G}$.
		Let $A$ be a diagonal matrix whose diagonal elements are \\
		$(I_{1,1}, \dots, I_{1,m_1^0+1}, I_{2,1}, \dots, I_{2,m_2^0 +1}, I_{3, 1} \dots, I_{G-1, m_{G-1}^0+1}, I_{G, 1}, \dots, I_{G,m_G^0+1})$.
		Let $\Pi$ be a $\sum_{g=1}^G (m_g^0 +1)  k \times \sum_{g=1}^G (m_g^0 +1) k $
		block diagonal matrix whose $g$-th diagonal block is an $(m_g^0+1)k \times (m_g^0 +1)k$ diagonal matrix with the elements being $\pi_g$.
		
		Conditional on $\hat m_g = m_g^0$ for all $g \in \mathbb{G}$,
		we have, if $(\max_{g\in \mathbb{G}} m_g^0 )^2 / (I_{\min} \min_{g\in \mathbb{G}} N_g) \to 0 $,
		\begin{align*}
		D \sqrt{N} A^{1/2}(\hat \alpha - \alpha^0)
		\to_d N( 0, D\Sigma_{x}^{-1} \Pi^{-1/2} \Omega \Pi^{-1/2} \Sigma_x^{-1} D').
		\end{align*}
	\end{theorem}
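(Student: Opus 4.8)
The plan is to reduce the statement, in two stages, to the asymptotic distribution of a least squares estimator computed separately for each group and each no-break regime, and then to invoke the triangular-array central limit theorem in Assumption~\ref{as-cb-break}.

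The first reduction removes the classification error. By Lemma~\ref{lem-db-cb-h}, $\hat\beta_{g,t}=\mathring\beta_{g,t}+o_p(T^{-\delta})$ for all $g,t$ and every $\delta>0$, where $\mathring\beta$ is the penalized estimator built with the true memberships $\gamma^0$. Since $\hat m_g=m_g^0$ is available only when $N/T^{\delta}\to0$ for some $\delta>0$ (Theorem~\ref{lem-agafl-date}), $N$ is at most of polynomial order in $T$, so taking $\delta$ large enough gives $\sqrt{N_gI_{g,j}}\,o_p(T^{-\delta})=o_p(1)$; it therefore suffices to derive the stated limit for $\mathring\alpha$. The second reduction removes the break-dating error: by Theorem~\ref{lem-agafl-date} and the known-membership break-recovery result (which is \citet{qian&su2016} applied group by group; see also Lemma~\ref{lem-lasso-c}), on an event $\mathcal E_{NT}$ with $\Pr(\mathcal E_{NT}\mid \hat m_g=m_g^0\ \forall g)\to1$ the estimated break dates coincide with the true ones, so $\mathring\beta_{g,t}$ is constant and equal to $\mathring\alpha_{g,j}$ on each true regime $[T_{g,j}^0,T_{g,j+1}^0)$.

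On $\mathcal E_{NT}$ I would write the subgradient conditions of the known-membership penalized objective with respect to each $\beta_{g,t}$, sum them over $t$ within regime $j$, and use the telescoping of the penalty subgradients: all interior terms cancel, leaving only the boundary contributions at the two adjacent true break dates (with the convention that boundary terms at $t=1$ and $t=T+1$ vanish). This gives
\begin{align*}
\Big(\sum_{g_i^0=g}\sum_{t=T_{g,j}^0}^{T_{g,j+1}^0-1}x_{it}x_{it}'\Big)(\mathring\alpha_{g,j}-\alpha_{g,j}^0)
=\sum_{g_i^0=g}\sum_{t=T_{g,j}^0}^{T_{g,j+1}^0-1}x_{it}\epsilon_{it}\pm\tfrac{NT\lambda}{2}\,r_{g,j},
\end{align*}
where $r_{g,j}=\dot w_{g,T_{g,j}^0}\hat s_{g,T_{g,j}^0}-\dot w_{g,T_{g,j+1}^0}\hat s_{g,T_{g,j+1}^0}$ with $\hat s$ unit-norm subgradients of $\|\cdot\|$ at the true break dates. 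Dividing by $N_gI_{g,j}$, inverting the sample second-moment matrix (which converges to $\Sigma_{x,g,j}$ with minimum eigenvalue bounded away from zero by Assumptions~\ref{a: basic}, \ref{a: iden} and~\ref{as-cb-break}) and rescaling by $\sqrt{N_gI_{g,j}}$ yields $\sqrt{N_gI_{g,j}}(\mathring\alpha_{g,j}-\alpha_{g,j}^0)=\Sigma_{x,g,j}^{-1}(N_gI_{g,j})^{-1/2}\sum_{g_i^0=g}\sum_t x_{it}\epsilon_{it}$ plus a bias term of order $NT\lambda\,(N_gI_{g,j})^{-1/2}\|r_{g,j}\|$ and a remainder from the second-moment approximation.

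The main obstacle is showing that, after stacking over all $(g,j)$ and left-multiplying by $D$, the bias and the remainder are $o_p(1)$ uniformly over the (possibly growing) number of regimes. For the bias I would use Theorem~\ref{thm-gfe-ad} ($\dot\beta$ is $\sqrt N$-consistent) with $\sqrt N J_{\min}\to\infty$ (Assumption~\ref{a: break}.\ref{as-njmin}) to get $\dot w_{g,t}=O_p(J_{\min}^{-\kappa})$ at true break dates, hence $\|r_{g,j}\|=O_p(J_{\min}^{-\kappa})$; boundedness of $\|D\|$ (Assumption~\ref{as-cb-break}) and $I_{g,j}\ge I_{\min}$ then bound the stacked $D$-weighted bias by $O_p\big(\sqrt{\sum_g m_g^0}\cdot NT\lambda\,(NI_{\min})^{-1/2}J_{\min}^{-\kappa}\big)$, which is $o_p(1)$ by Assumption~\ref{a: break}.\ref{as-jmin} together with $I_{\min}\to\infty$, while the approximation error from replacing each of the $\sum_g(m_g^0+1)$ second-moment blocks by its limit is handled by Assumption~\ref{as-jmin-imin} and the rate condition $(\max_g m_g^0)^2/(I_{\min}\min_g N_g)\to0$ in the statement. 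The surviving leading term, after using $N_g/N\to\pi_g$ and the fact that $\Sigma_x$ and $\Pi$ are block-diagonal and hence commute, equals $D\,\Sigma_x^{-1}\Pi^{-1/2}(d_{1,NT}',\dots,d_{G,NT}')'$, so Assumption~\ref{as-cb-break} applied with the matrix $D\Sigma_x^{-1}\Pi^{-1/2}$ gives convergence to $N(0,D\Sigma_x^{-1}\Pi^{-1/2}\Omega\Pi^{-1/2}\Sigma_x^{-1}D')$. Combining with the first reduction ($\hat\alpha=\mathring\alpha+o_p((N_gI_{g,j})^{-1/2})$ on $\mathcal E_{NT}$) and $\Pr(\mathcal E_{NT}\mid\hat m_g=m_g^0\ \forall g)\to1$ transfers the same limit to $\hat\alpha$.
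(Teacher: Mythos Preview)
Your proposal is correct and follows essentially the same two-stage reduction as the paper: first pass from $\hat\alpha$ to the known-membership estimator $\mathring\alpha$ via Lemma~\ref{lem-db-cb-h}, then establish the limiting distribution of $\mathring\alpha$. The paper's own proof is even terser---it simply invokes Lemma~\ref{lem-db-cb-h} and Lemma~\ref{lem-mrb-ad} (the latter stating the asymptotic distribution of $\mathring\alpha$ under known memberships, with proof deferred to the supplement)---so your subgradient-telescoping derivation and bias/remainder bookkeeping amount to a sketch of what the proof of Lemma~\ref{lem-mrb-ad} contains. One minor remark: for the first reduction the paper (see the proof of Theorem~\ref{lem-break-c-agfl}) actually has available the stronger fact $\Pr(\hat\gamma=\gamma^0)\to 1$, on which event $\hat\beta=\mathring\beta$ exactly, so one can avoid the pointwise $o_p(T^{-\delta})$ argument and any uniformity concern altogether; your route via the rate in Lemma~\ref{lem-db-cb-h} also works but is slightly less direct.
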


 We point out that these theoretical results also hold if we replace $\hat \beta$ with $\beta^{(0)}$ from Algorithm~\ref{al-kmeans}. This is because the initial group assignment in the algorithm, $\dot \gamma$, is consistent. Note that for $\hat \beta$, the consistency of $\dot \gamma$ is not fundamental (although it does provide rapid numerical convergence). What is crucial is the consistency of $\dot \beta $, because it guarantees that $\dot w_{g,t}$'s have appropriate orders, which makes it possible to detect breaks consistently. In this paper, we focus on $\hat \beta$ because we find that the iterative estimator $\hat \beta_{g,t}$ typically outperforms $\beta^{(0)}$ in finite samples due to possible refinement.

	\subsection{Choosing the number of groups and the tuning parameter for the Lasso penalty}
	\label{sec-choose-tuning}
	
	To implement GAGFL, we need to choose the tuning parameter in AGFL for each group, and at the same time specify the number of groups. We discuss these two issues in turn.

	First, to select the tuning parameter $\lambda$ of the Lasso penalty in AFGL, we follow \citet{qian&su2016} to minimize the following information criterion (IC):
	$$
	IC(\lambda)=\frac{1}{NT}\sum_{j=1}^{m+1}\sum_{t=T_j-1+1}^{T_j}\sum_{i=1}^N(y_{it}-x'_{it}\hat{\alpha}_{g_i,j})^2+\rho_{NT}k(m_{\lambda}+1),
	$$
	where $\hat{\alpha}_{g_i,j}$ is the post-Lasso estimate of the coefficients for each of $g$ and $j$, $m_\lambda$ is the number of breaks associated with the tuning parameter $\lambda$, $\rho_{NT}$ determines the amount of penalty on the number of breaks, and we choose $\rho_{NT}=c\ln(NT)/\sqrt{NT}$ with $c=0.05$ following \citet{qian&su2016}. We verify via simulation and applications that the performance of our method is robust to the choice of $c$ as long as it lies in a reasonable range. 
	
	Next, we discuss how to choose the number of groups. 
	In this paper, we focus on using an information criterion to determine the number of groups, $G$.\footnote{The literature also suggests a Lagrange multiplier test \citep{lu&su2017} to determine $G$.} We follow \citet{bonhomme&manresa2015} to consider the following Bayesian information criterion (BIC):
	\begin{equation}\label{eq:BIC}
	BIC(G)=\frac{1}{NT}\sum_{j=1}^{m+1}\sum_{t=T_j-1+1}^{T_j}\sum_{i=1}^N(y_{it}-x'_{it}\hat{\alpha}_{g_i,j})^2+\hat{\sigma}^2\frac{n_p(G)+N}{NT}\ln NT,
	\end{equation}
	where $\hat{\sigma}^2$ is a scaling parameter and can be obtained by an estimate of the variance of $\epsilon_{it}$, and $n_p(G)$ is the total number of estimated coefficients.
	The information criterion represents a tradeoff between model fitness and the number of parameters. One caution is that this tradeoff is more complicated in our model because increasing $G$ does not always lead to a larger number of parameters and a better fit in our case. It is actually possible that a larger value of $G$ results in fewer breaks in each group, so that the total number of parameters decreases and the model fit worsens. 
	Nevertheless, we find that more parameters correspond to a better fit in our simulation experiments. Furthermore, we can show that with an appropriate choice of the tuning parameter in the penalty terms, the model that coincides with the data generation process produces the lowest information criterion value.
	
	In principle, it is possible to replace $\hat{\alpha}_{g_i,j}$ in \eqref{eq:BIC} by initial estimates of coefficients ($\dot \beta$) that are fully time varying as defined in \eqref{eq-gfe}. 
	An advantage of using initial estimates to compute the BIC is that both the number of parameters and model fit are monotonically increasing in $G$. 
	However, the disadvantage is that less efficient coefficient estimates may result in less accurate selection of $G$ in finite samples. Simulation results (provided in Section S.1.1 of the supplement) indicate that the BIC based on the final estimates outperforms that based on the initial estimates.

	\section{Fixed effects model}
	\label{sec-fe}

	In this section, we consider a model with individual-specific fixed effects. The estimation uses first-differenced data and requires separate theoretical analysis. Here, we provide the assumptions and summarize the theoretical properties, while the proofs and additional details are in the supplement.
		
	In addition to the time-varying grouped fixed effect, we allow an additive time-invariant individual fixed effect $\alpha_i$ arbitrarily correlated with the regressors. We consider the following model:
	\begin{equation*}
	y_{it}=\alpha_i+x'_{it}\beta_{g_i,t}+\epsilon_{it},\quad i = 1,\ldots,N, \quad t = 1,\ldots, T.
	\end{equation*}
	The individual fixed effects, $\alpha_i$, can be eliminated by first differencing:
	\begin{equation}\label{eq:first-differece}
	\Delta y_{it}=x_{it}'\beta_{g_i,t}-x_{i,t-1}'\beta_{g_i,t-1}+\Delta\epsilon_{it},
	\end{equation}
	where $\Delta y_{it}=y_{it}-y_{it-1}$ for $i=1,\ldots,N$ and $t=2,\ldots, T$.
	The model is estimated by applying the GAGFL method to the transformed data.
	{Note that in our setting, it is possible to identify fully time-varying coefficients $\beta_{g,t}$, even though there are only $T-1$ time periods available because of differencing. To see this, suppose that group membership is known. Then, \eqref{eq:first-differece} can be understood as a cross-sectional regression model with dependent variable $\Delta y_{it}$ and regressors $(x_{it}', x_{i,t-1}')'$ for a given $t$. Both $\beta_{g,t}$ and $\beta_{g,t-1}$ can be estimated from the cross-sectional regression. The actual identification argument can be more involved because group memberships are unknown. Nevertheless, as shown by this rough argument, the availability of cross-sectional information is the key to identifying the coefficients even when they are fully time varying.}
	
	The GAGFL estimator for fixed effects models is defined as
	\begin{align*}
	(\hat \beta, \hat \gamma)
	= \argmin_{(\beta, \gamma) \in \mathcal{B}^{GT} \times \mathbb{G}^N}
	 \frac{1}{N T}
	\sum_{i=1}^N \sum_{t=2}^T
	(\Delta y_{it} - x_{it} '\beta_{g_i ,t} + x_{i,t-1}' \beta_{g_i, t-1} )^2
	+ \lambda  \sum_{g\in \mathbb{G}}
	\sum_{t=2}^T
	\dot w_{g,t} \left\Vert \beta_{g, t} - \beta_{g, t-1}
	\right\Vert.
	\end{align*}
	The results of Section \ref{subsec: asym} are not directly applicable to fixed effects models. For example, both $\beta_{g,t}$ and $\beta_{g,t-1}$ enter the equation and we cannot analyze two time periods separately even when there is a break between these two periods. Nevertheless, we can extend our analysis to fixed effects models. We first state the modified assumptions needed. 

\begin{assumption}\ \\
	\label{a: fe-basic}\vspace*{-0.5cm}
	\begin{enumerate}
		\item 
		\label{as-fe-compact}
		$\mathcal{B}$ is compact.
		
		\item 
		\label{as-fe-exogenous}
		$E(\Delta \epsilon_{it} x_{it} ) = E(\Delta \epsilon_{it} x_{i,t-1} ) =0 $
		for all $i$ and $t$.
		
		\item
		\label{as-fe-ij-bound}
		$1/N \sum_{i=1}^N \sum_{j=1}^N
		\left|
		1/T\sum_{t=2}^T
		E \left(\Delta  \epsilon_{it} \Delta \epsilon_{j t}  x_{it}' x_{j t} \right)
		\right|
		< M.
		$
		
		\item
		\label{as-fe-ijcov-bound}
		$
		\left|
		1/N^2 \sum_{i=1}^N \sum_{j=1}^N
		1/T\sum_{t=2}^T \sum_{s=2}^T
		Cov \left( \Delta \epsilon_{it} \Delta \epsilon_{j t}  x_{it}' x_{j t} ,
		\Delta \epsilon_{is} \Delta \epsilon_{j s}  x_{is}' x_{j s} \right)
		\right|
		< M.
		$
		
		\item	
		\label{as-fe-x-4moment}
		There exists $M >0$ such that for any $N$ and $T$,
		$
		1/(NT) \sum_{i=1}^N \sum_{t=1}^T E (\left\| x_{it} \right\|^4 )< M.
		$		
	\end{enumerate}
\end{assumption}
This assumption can be implied by Assumption~\ref{a: basic} except for \ref{a: fe-basic}.\ref{as-fe-exogenous}. 
However, we continue to make this assumption to facilitate our theory.

\begin{assumption}\ \\
	\label{a: fe-iden}\vspace*{-0.5cm}
	\begin{enumerate}
		\item 
		\label{as-fe-m-eigen}
		Let
		\begin{align*}
		M^F ( \gamma, g, \tilde g)
		=& \frac{1}{N}
		\sum_{i=1}^N \mathbf{1} \left\{ g_i^0 = g \right\}
		\mathbf{1} \left\{ g_i = \tilde g \right\} \\
		& \times
		\begin{pmatrix}
		x_{i1} x_{i1}' & -x_{i1} x_{i2}' & 0 & \dots & 0 \\
		-x_{i2} x_{i1} & 2 x_{i2} x_{i2}' & - x_{i2} x_{i3} ' & \dots & \dots \\
		0 & - x_{i3} x_{i2}  & \dots & \dots & 0 \\
		\dots & \dots & \dots & 2 x_{i,T-1} x_{i,T-1}'&  - x_{i,T-1} x_{i T} \\
		0 & \dots  & 0 & - x_{i T} x_{i,T-1} & x_{i T} x_{i T}'
		\end{pmatrix}.
		\end{align*}
		Let $\hat \rho^F ( \gamma , g , \tilde g )$
		be the minimum eigenvalue of $M^F(\gamma, g, \tilde g)$.
		There exists a $\hat \rho^F $  such that $\hat \rho^F \to_p \rho^F >0$ and
		$\forall g$,
		$
		\min_{ \gamma \in \mathbb{G}^N}
		\max_{\tilde g \in \mathbb{G}}
		\hat \rho^F ( \gamma, g, \tilde g) > \hat \rho^F.
		$
		
		\item
		\label{as-fe-d-bound}
		Let
		$
		D_{g\tilde g i}^F
		= 1/T \sum_{i=1}^N \sum_{t=2}^T \left( x_{it}'( \beta_{g, t}^0 - \beta_{\tilde g, t}^0) - x_{i,t-1}' ( \beta_{g, t-1}^0 - \beta_{ \tilde g, t-1}^0) 
		\right)^2. 
		$
		For all $g \neq \tilde g$, there exists a $c_{g, \tilde g}^F > 0$ such that
		$
		\plim_{N, T \to  \infty} 1/N \sum_{i=1}^N D_{g \tilde g i}^F > c_{g,\tilde g}^F
		$
		and for all $i$,
		$
		\plim_{ T  \to \infty} D_{g \tilde g i}^F > c_{g,\tilde g}^F.
		$
		
	\end{enumerate}
\end{assumption}

\begin{assumption}\ \\
	\label{a: fe-tail}\vspace*{-0.5cm}
	\begin{enumerate}
		\item 
		\label{as-fe-ex2-tail}
		There exists a constant $M_{ex}^*$ such that as $N, T \to \infty$, for all $\delta > 0$, 	\\	$\sup_{1 \le i \le N}
		\Pr \left(   \sum_{t=2}^T \left\Vert \Delta \epsilon_{it} x_{it} \right\Vert^2 /T \ge M_{ex}^* \right)
		= O ( T^{-\delta}) $,
		and \\
		$
		\sup_{1 \le i \le N}
		\Pr \left(\sum_{t=2}^T \left\Vert \Delta \epsilon_{it} x_{i,t-1} \right\Vert^2  /T \ge M_{ex}^* \right)
		= O ( T^{-\delta}).
		$
		
		\item
		\label{as-fe-x4-tail}
		There exists a constant $M_{x}^*$ such that as $N, T \to \infty$, for all $\delta >0$, \\ $		\sup_{1 \le i \le N}
		\Pr \left( \sum_{t=1}^T \left\Vert x_{it} \right\Vert^4  /T \ge M_x^* \right)
		= O ( T^{-\delta})$.
		
		\item
		\label{as-fe-xb-mixing}
		There exist constants $a > 0$ and $d_1 > 0$ and a sequence $\alpha [t] < \exp(-a t^{d_1}) $ such that, for all $i= 1, \dots, N$ and $(g,\tilde g) \in \mathbb{G}^2$
		 $g\neq \tilde g$, $\{ x_{it}'( \beta_{\tilde g, t}^0 - \beta_{g,t}^0) -x_{i,t-1}'( \beta_{\tilde, t-1}^0 - \beta_{g,t-1}^0)\}_t$ and
		$\{ (x_{it}'( \beta_{\tilde g, t}^0 - \beta_{g,t}^0) -x_{i,t-1}'( \beta_{\tilde g, t-1}^0 - \beta_{g,t-1}^0)) \Delta \epsilon_{it}\}_t$ are strongly mixing process with mixing coefficients $\alpha [t]$.
		Moreover, $E (x_{it} \Delta \epsilon_{it})=0$ and $E(x_{i,t-1} \Delta \epsilon_{it}) =0$.

		\item
		\label{as-fe-xb-tail}
		There exist constants $b_x>0$, $b_e >0$, $d_{2x} >0$ and $d_{2e}$ such that $\Pr ( | x_{it}'( \beta_{\tilde g, t}^0 - \beta_{g,t}^0) - x_{i,t-1}'( \beta_{\tilde g, t-1}^0 - \beta_{g,t-1}^0) | > m ) \le \exp (1- (m/b_x))^{d_{2x}}$ and
		$\Pr ( | (x_{it}'( \beta_{\tilde g, t}^0 - \beta_{g,t}^0) - x_{i,t-1}'( \beta_{\tilde g, t-1}^0 - \beta_{g,t-1}^0)) \Delta  \epsilon_{it} | > m ) \le \exp (1- (m/b_e))^{d_{2e}}$, for any $i$, $t$ and $m >0$.
		
	\end{enumerate}
\end{assumption}

\begin{assumption}\ \\
	\label{a: ad-qs}\vspace*{-0.5cm}
	\begin{enumerate}
		\item $\{ (x_{i1}, \epsilon_{i1}), \dots, (x_{iT}, \epsilon_{iT}) \} $'s are independent over $i$.
		\item $\max_{1 \le i \le N} \max_{1 \le t \le T} E( \left\| x_{it} \right\|^{2\tau_0} ) < C < \infty $ and $\max_{1 \le i \le N} \max_{1 \le t \le T} E( \left\| \epsilon_{it} \right\|^{2\tau_0} ) < C < \infty $ for some $C$ and $\tau_0 \ge 2 $.
		\item For $\tau_0$ that satisfies Assumption~\ref{a: ad-qs}.2, there exists $\varepsilon_0 >0$ such that $N^{1-\tau_0} T (\ln T)^{\epsilon_0 \tau_0 } \to 0 $ as $N,T \to \infty$. 
	\end{enumerate}
\end{assumption}

	Under these new sets of assumptions\footnote{
	Assumption~\ref{a: ad-qs} can be implied by Assumptions~\ref{a: break} and \ref{as-jmin-imin}. Nevertheless, we introduce this assumption to directly apply the results of \citet{qian&su2016} for AGFL in the presence of individual fixed effects.}, we show that the GAGFL estimator applied to first-differenced models has asymptotic properties similar to those in the case of level models without individual-specific intercepts. As above, we first show that the estimation error in the group structure has a limited impact on the estimation of the coefficients.
Let
\begin{align*}
\mathring \beta
= \arg\min_{\beta \in \mathcal{B}^{GT}}
\left( \frac{1}{N T}
\sum_{i=1}^N \sum_{t=2}^T
(\Delta y_{it}  - x_{it} ' \beta_{g_i^0 ,t} + x_{i,t-1} ' \beta_{g_i^0, t-1})^2
+ \lambda  \sum_{g\in \mathbb{G}}
\sum_{t=2}^T \dot w_{g,t} \left\Vert \beta_{g, t} - \beta_{g, t-1}
\right\Vert \right) .
\end{align*}

\begin{lemma}
	\label{lem-fe-db-cb-h}
	Suppose that Assumptions \ref{a: fe-basic}, \ref{a: fe-iden},
	and \ref{a: fe-tail}
	are satisfied. As $N,T \to \infty$, for any $\delta >0$,
	it holds that
	$
	\hat \beta_{g,t} = \mathring \beta_{g,t} + o_p (T^{-\delta}),
	$
	for all $g$ and $t$.
\end{lemma}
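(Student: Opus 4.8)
The plan is to transcribe the proof of Lemma~\ref{lem-db-cb-h} to the first-differenced design; the only genuinely new feature is that $\beta_{g,t}$ and $\beta_{g,t-1}$ now enter the period-$t$ summand jointly, so the quadratic part of the objective is block-\emph{tridiagonal}, rather than block-diagonal, in $\beta$. Before the main argument I would record the first-differenced analogue of Theorem~\ref{thm-gfe-ad} (stated in the supplement): it gives $\sqrt N$-consistency of the preliminary estimator $\dot\beta$ and hence guarantees that the data-driven weights $\dot w_{g,t}$ carry the correct stochastic orders. Since these weights appear identically in the objectives defining $\hat\beta$ and $\mathring\beta$, they never enter the comparison directly and can be treated as fixed throughout.

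First I would control misclassification: I would show that, uniformly in $i$, the probability that a unit with $g_i^0 = g$ is assigned to some $\tilde g \ne g$ is $O(T^{-\delta})$ for every $\delta>0$. If such a misclassification occurs, the sum of squared first-differenced residuals of unit $i$ under $\tilde g$ cannot exceed that under $g$; expanding this inequality and isolating the deterministic gap $x_{it}'(\beta^0_{g,t}-\beta^0_{\tilde g,t}) - x_{i,t-1}'(\beta^0_{g,t-1}-\beta^0_{\tilde g,t-1})$, which is bounded below by $c^F_{g,\tilde g}>0$ through Assumption~\ref{a: fe-iden}.\ref{as-fe-d-bound}, forces a family of mean-zero averages built from $\Delta\epsilon_{it}$ and from the preliminary estimation error to be bounded away from zero. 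The tail and exponential-mixing conditions of Assumption~\ref{a: fe-tail}, deliberately phrased for exactly these differenced quantities, then let me invoke the exponential inequality of \citet[Lemma~B.5]{bonhomme&manresa2015} (itself based on \citet{rio2017asymptotic}) to bound the probability of each such event by $O(T^{-\delta})$; a union bound over the finitely many pairs $(g,\tilde g)$ gives $\sup_i\Pr(\hat g_i\ne g_i^0)=O(T^{-\delta})$, so by Markov's inequality the fraction of misclassified units $N^{-1}\sum_i\mathbf{1}\{\hat g_i\ne g_i^0\}$ is $O_p(T^{-\delta})$ for every $\delta$.

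Next I would pass from misclassification to the coefficient bound. Writing $Q(\beta,\gamma)$ for the first-differenced GAGFL objective, the difference $Q(\beta,\hat\gamma)-Q(\beta,\gamma^0)$ involves only the summands of misclassified units, and each such summand is bounded by compactness of $\mathcal B$ and the fourth-moment condition of Assumption~\ref{a: fe-basic}, so $\sup_\beta|Q(\beta,\hat\gamma)-Q(\beta,\gamma^0)|=O_p(T^{-\delta})$. The sandwich $Q(\hat\beta,\gamma^0)\le Q(\hat\beta,\hat\gamma)+O_p(T^{-\delta})\le Q(\mathring\beta,\hat\gamma)+O_p(T^{-\delta})\le Q(\mathring\beta,\gamma^0)+O_p(T^{-\delta})$ then gives $0\le Q(\hat\beta,\gamma^0)-Q(\mathring\beta,\gamma^0)=O_p(T^{-\delta})$. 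Finally, the Hessian of the quadratic part of $Q(\cdot,\gamma^0)$ is proportional to $M^F(\gamma^0,\cdot,\cdot)$, whose minimum eigenvalue is bounded away from zero uniformly over group assignments by Assumption~\ref{a: fe-iden}.\ref{as-fe-m-eigen}, and the penalty term is convex; hence $Q(\cdot,\gamma^0)$ is strongly convex around $\mathring\beta$ and $\|\hat\beta-\mathring\beta\|^2 \lesssim Q(\hat\beta,\gamma^0)-Q(\mathring\beta,\gamma^0)=O_p(T^{-\delta})$. Since $\delta$ is arbitrary, this is $o_p(T^{-\delta'})$ componentwise for every $\delta'$, which is the claim.

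The main obstacle is the tridiagonal structure produced by first differencing: it blocks a period-by-period argument — two adjacent periods cannot be treated separately, even across a break — and it is the reason the identification condition is imposed on the tridiagonal $M^F$ and the separation and tail conditions on the differenced predicted-value gaps. Confirming that Assumption~\ref{a: fe-iden}.\ref{as-fe-m-eigen} really yields a uniform-in-$\gamma$ lower bound on the relevant Hessian, and that the exponential inequality still applies to the differenced processes, is where the careful work sits; with those in hand, the chain of inequalities above is routine.
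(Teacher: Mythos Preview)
Your proposal is correct and follows essentially the same route as the paper's proof of Lemma~\ref{lem-db-cb-h} (which the supplement transcribes to the first-differenced setting): bound the misclassification rate by $o_p(T^{-\delta})$ via the exponential inequality, sandwich to obtain $\mathring Q(\hat\beta)-\mathring Q(\mathring\beta)=o_p(T^{-\delta})$, and then convert this to a coefficient bound through the minimum-eigenvalue condition on $M^F$. The only cosmetic difference is that the paper writes out the KKT/subgradient conditions for $\mathring\beta$ explicitly to show the penalty contribution is nonnegative, whereas you invoke strong convexity of the (quadratic $+$ convex) objective directly; these are the same inequality.
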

Then, we can state exactly the same theorems as Theorems~\ref{lem-break-c-agfl} and~\ref{lem-agafl-date}, but under Assumptions \ref{a: break}, \ref{a: fe-basic}, \ref{a: fe-iden},  
\ref{a: fe-tail}, and \ref{a: ad-qs}, regarding the break detection.
Finally, under the correct estimation of break and group memberships, we can similarly show that the coefficient estimates of GAGFL are asymptotically equivalent to the least squares estimates under the true breakpoints and group memberships. This asymptotic equivalence and the proofs of theorems in this section are in the supplement.

\section{Extensions}
	\label{sec-extension}

		There are various directions in which to extend our proposed method. We present those extensions that are relevant in practice and analyzed relatively easily. In particular, we consider models in which a subset of coefficients are fully time varying and models in which a subset of coefficients are homogeneous and/or time invariant. Note that the two extensions considered here are indeed special cases of our model. However, our estimation method needs modification to incorporate their special features.

	\paragraph{Models with fully time-varying coefficients and time fixed effects.}

	Our model has variants with fully time-varying coefficients in special cases, but the estimation needs modification. Without loss of generality, assume that the first $k_1$ explanatory variables $x_{1,it}$ have fully time-varying coefficients, $\beta_{1, g}$, and the remaining $k-k_1$ covariates $x_{2,it}$ correspond to coefficients, $\beta_{2, g, t}$, which may exhibit structural breaks. Then, we can estimate the model by minimizing the following adjusted objective function
	\begin{align*}
	\argmin_{(\beta, \gamma) \in \mathcal{B}^{GT} \times \mathbb{G}^N}
	\frac{1}{N T} \sum_{i=1}^N \sum_{t=1}^T
	(y_{it}  - x_{1,it} ' \beta_{1, g_i, t}-x_{2,it}'\beta_{2, g_i, t} )^2
	+ \lambda \sum_{g \in \mathbb{G}} \sum_{t=2}^T
	\dot w_{g,t} \left\Vert \beta_{2, g, t} - \beta_{2, g, t-1}
	\right\Vert.\nonumber
	\end{align*}
	The resulting coefficient estimates $\hat\beta_{g,t}$ satisfy~\eqref{eq:lemma1} in Lemma~\ref{lem-db-cb-h} under the same set of assumptions. Similar theorems such as Theorems~\ref{lem-break-c-agfl} and~\ref{lem-agafl-date} hold for the $k-k_1$ subset of the estimated coefficients $\hat{\beta}_{2,g,t}$ with slightly modified assumptions. A complete theoretical analysis of this algorithm is in the supplement, in which we also apply it to an application. 
	An important special case is a model with group-specific time fixed effects, i.e.
	\begin{align*}
	y_{it} = \lambda_{g_i,t} + z_{it}'\delta_{g_i,t} + \epsilon_{it},
	\end{align*}
	where $\lambda_{g,t}$ is a group-specific time effect for group $g$ and period $t$, $z_{it}$ is the explanatory variables, and $\delta_{g_i,t}$ is the associated coefficients characterized by structural breaks.
	We assume that $\lambda_{g,t}$ changes at every period, and we do not penalize a change in $\lambda_{g,t}$.

	\paragraph{Models with partially homogeneous and/or time-invariant coefficients.}
		
	Our model can also incorporate the case of partially time-invariant coefficients. Without loss of generality, assume that the first $k_1$ explanatory variables $x_{1,it}$ have time-invariant coefficients, $\beta_{1, g}$, and the remaining $k-k_1$ covariates $x_{2,it}$ correspond to time-varying coefficients, $\beta_{2, g, t}$. Then, we can estimate the model by minimizing the following adjusted objective function
	\begin{align*}
	\argmin_{(\beta, \gamma) \in \mathcal{B}^{GT} \times \mathbb{G}^N}
	\frac{1}{N T} \sum_{i=1}^N \sum_{t=1}^T
	(y_{it}  - x_{1,it} ' \beta_{1, g_i}-x_{2,it}'\beta_{2, g_i, t} )^2
	+ \lambda \sum_{g \in \mathbb{G}} \sum_{t=2}^T
	\dot w_{g,t} \left\Vert \beta_{2, g, t} - \beta_{2, g, t-1}
	\right\Vert.\nonumber
	\end{align*}
	The iterative algorithm remains the same except that the AGFL only penalizes a part of the coefficient vector $\beta_{2,g,t}$. 

	If the coefficients of $x_{1,it}$ are not only time invariant but also homogeneous, then the objective function in this case becomes
	\begin{align*}
	\argmin_{(\beta, \gamma) \in \mathcal{B}^{GT} \times \mathbb{G}^N}
	\frac{1}{N T} \sum_{i=1}^N \sum_{t=1}^T
	(y_{it}  - x_{1,it} ' \beta_{1}-x_{2,it}'\beta_{2, g_i, t} )^2
	+ \lambda \sum_{g \in \mathbb{G}} \sum_{t=2}^T
	\dot w_{g,t} \left\Vert \beta_{2, g, t} - \beta_{2, g, t-1}
	\right\Vert.\nonumber
	\end{align*}
	To find a minimizer for this objective function, we adjust the iterative algorithm by separating the estimation of $\beta_1$ and $\beta_{2,g,t}$ such that a penalized minimization is applied only to $\beta_{2,g,t}$ and an estimate of $\beta_1$ is updated using all the observations  rather than observations in a group.

	\section{Monte Carlo simulation}\label{sec:simulation}
	This section evaluates the finite sample performance of the proposed GAGFL estimator.
	In particular, we investigate if GAGFL can correctly classify units
	and whether it can effectively detect structural breaks.
	We also examine the importance of taking heterogeneity of structural breaks into account.

	\subsection{Data generation process}\label{sec:bh-dgp}
	
	We study four data generation processes differing in the distribution of errors, the specification of fixed effects, and the inclusion of lagged dependent variables.
	\begin{itemize}
		\item[] \textbf{[DGP.1]} We generate the data from the following regression model:
		\begin{equation}
		y_{it}= x_{it}\beta_{g_i,t}+\epsilon_{it},\nonumber
		\end{equation}
		where
		$x_{it}\sim \textrm{ i.i.d. } N(0,1)$ and
		$\epsilon_{it}\sim \textrm{ i.i.d. } N(0,\sigma_\epsilon^2)$.
		We consider $\sigma_\epsilon=(0.5,0.75)$.
		There are three groups.
		Let $N_j$, $j= 1,2,3$, denote the number of units in group $j$.
		Note that $N= N_1 + N_2 + N_3$.
		We fix the ratio of units among groups such that $N_1:N_2:N_3=0.3:0.3:0.4$.
		The coefficients in the three groups are given, respectively, by
		\begin{align*}
		\beta_{1,t}=\left\{\begin{array}{ll}
		1 & \text{if } 1\leq t < \lfloor T/2\rfloor\\
		2 & \text{if } \lfloor T/2\rfloor\leq t < \lfloor 5T/6\rfloor\\
		3 & \text{if } \lfloor 5T/6\rfloor\leq t \leq T\\
		\end{array}\right.,
		\quad
		\beta_{2,t}=\left\{\begin{array}{ll}
		3 & \text{if } 1\leq t < \lfloor T/3\rfloor\\
		4 & \text{if } \lfloor T/3\rfloor\leq t < \lfloor 5T/6\rfloor\\
		5 & \text{if } \lfloor 5T/6\rfloor\leq t \leq T\\
		\end{array}\right.,
		\end{align*}
		where $\lfloor\cdot\rfloor$ takes the integer part, and
		$
		\beta_{3,t}=1.5 \quad \textrm{for all } 1\leq t\leq T$.
		
		The first group exhibits two structural breaks. Two breaks also characterize units in the second group, but they occur at different time points. For the third group, the slope coefficient is stable without a break. 
		
		\item[] \textbf{[DGP.2]} Same as DGP.1 except that $\epsilon_{it}$ follows an AR(1) process for each individual $i$: $\epsilon_{it}=	0.5\epsilon_{i,t-1} + u_{it}$, where $u_{it}\sim \textrm{ i.i.d. } N(0, 0.75)$.
		
		\item[] \textbf{[DGP.3]} Same as DGP.1 except that there is an additive individual fixed effect, namely
		$y_{it}= \mu_i+ x_{it}\beta_{g_i,t}+\epsilon_{it}$, where $\mu_i = T^{-1}\sum_{t=1}^T x_{it}$.
		
		\item[] \textbf{[DGP.4]} Same as DGP.1 except that a lagged dependent variable is included, namely
		$$
		y_{it}= y_{it-1}\tau_{g_i,t}+x_{it}\beta_{g_i,t}+\epsilon_{it},
		$$		
		where the coefficient of the lagged dependent variable is given by
		\begin{align*}
		\tau_{1,t}=\left\{\begin{array}{ll}
		0.2 & \text{if } 1\leq t < \lfloor T/2\rfloor\\
		0.8 & \text{if } \lfloor T/2\rfloor\leq t < \lfloor 5T/6\rfloor\\
		0.2 & \text{if } \lfloor 5T/6\rfloor\leq t \leq T\\
		\end{array}\right.,
		\quad
		\tau_{2,t}=\left\{\begin{array}{ll}
		-0.3 & \text{if } 1\leq t < \lfloor T/3\rfloor\\
		-0.6 & \text{if } \lfloor T/3\rfloor\leq t < \lfloor 5T/6\rfloor\\
		-0.9 & \text{if } \lfloor 5T/6\rfloor\leq t \leq T\\
		\end{array}\right.,
		\end{align*}
		and
		$\tau_{3,t}=0.5 \quad \textrm{for all } 1\leq t\leq T$.

	\end{itemize}
	
	DGP.1 is our benchmark case. DGP.2 exhibits serial correlation in the errors. DGP.3 includes individual fixed effects, and thus first-differenced data are used as in~\eqref{eq:first-differece}. DGP.4 follows a dynamic panel model. For each DGP, we consider two cross-sectional sample sizes, $N=(50,100)$, and three lengths of time series,
	$T=(10, 20, 40)$.
	In total, we have six combinations of cross-sectional sample size and length of time series.

	\subsection{Procedures}
	
	We estimate the model by GAGFL.
	We also examine the performance of the penalized least squares (PLS) by \citet{qian&su2016} ignoring heterogeneity, and that of \emph{common} break detection in heterogeneous panels by \citet{baltagi&qu&kao2016} (hereafter BFK). Note that PLS is equivalent to GAGFL with $G=1$.

	For GAGFL, the number of groups is chosen by minimizing the BIC defined in Section~\ref{sec-choose-tuning}.
	For both GAGFL and PLS, we follow \citet{qian&su2016} for the computational method and choice of tuning parameters.
	To estimate the breaks for a given grouped pattern, we employ the block-coordinate descent algorithm for solving the PLS. The tuning parameter $\lambda$ is selected by minimizing the information criterion in the interval of $[0.01,100]$, where the upper bound leads to breaks in all time points while the lower bound leads to no breaks.
	To construct the weights $\{\dot{\omega}_{g,t}\}$, we set $\kappa=2$ following the adaptive Lasso literature.
	In DGP.3, first differencing is employed to eliminate individual fixed effects for GAGFL and PLS (see Section \ref{sec-extension}).
		
	BFK detects breaks by minimizing the sum of squared residuals over distinct breakpoints, where the residuals are from the individual time series estimation. The method is applied to detect multiple \emph{common} breaks occurring for all individual units, although the slope coefficients are allowed to be individual specific. To implement this method, we need to specify the number of breaks, and we set the number of breaks equal to three, the true number for the \emph{pooled} data.\footnote{Groups 1 and 2 share a break at the same time $\lfloor5T/6\rfloor$, and each of them experiences another break at different times $\lfloor T/2\rfloor$ and $\lfloor T/3\rfloor$, respectively. Thus, there are in total three breaks for the pooled data.} We employ the  approach discussed by \citet{baltagi&qu&kao2016} (see also \citet{bai97et,bai2010}) that estimates multiple breakpoints sequentially. As the BFK method estimates individual time series separately, no transformation is needed under DGP.3.
		In this case, breaks are only allowed in the slope coefficient $\beta$, but not in the intercept.
	
	\subsection{Evaluation criteria}

	We evaluate the performance of the proposed method from five different perspectives:
	selecting the right number of groups, clustering, determining the number of breaks, the break date estimates, and the coefficient estimates.
	
	First, we evaluate the performance of the BIC in determining the number of groups by computing the empirical probability of selecting a particular number.
	Second, we measure the clustering accuracy by the average of the misclassification frequency ($\widehat{g}_i\neq g^0_i$) across replications. Let $I(\cdot)$ be the indicator function. The misclassification frequency (MF) is the ratio of misclassified units to the total number of units, i.e.
	$
	\textrm{MF} = 1/N\sum_{i=1}^N I(\widehat{g}_i\neq g^0_i).
	$
	
	The third and fourth criteria concern break estimation.
	The third criterion is the average frequency of correctly estimating the number of breaks. 
	The fourth criterion is the average Hausdorff error of break date estimates. This measure is also used by \citet{qian&su2016}. 
	The Hausdorff error is the Hausdorff distance (HD) between the estimated break dates and the true set of dates, i.e.
	$$
	\textrm{HD}(\widehat{T}_{g,\widehat{m}}^0,T_{g,m^0}^0)\equiv\max\{\mathcal{D}(\widehat{T}_{g,\widehat{m}}^0,T_{g,m^0}^0), \mathcal{D}(T_{g,m^0}^0,\widehat{T}_{g,\widehat{m}}^0)\},
	$$
	where $\mathcal{D}(A,B)\equiv\sup_{b\in B}\inf_{a\in A}|a-b|$ for any set $A$ and $B$. We report the Hausdorff error multiplied by 100 and divided by $T$ for each group, i.e. $100\times \textrm{HD}(\widehat{T}_{g,\widehat{m}}^0,T_{g,m^0}^0)/T$, averaged across the replications.
	
	Lastly, we evaluate the accuracy of the coefficient estimates using their root mean squared error (RMSE) and the coverage probability of the two-sided nominal 95\% confidence interval. We compute the overall root mean square error for all units at each time as
	$$
	\textrm{RMSE}(\widehat{\beta}_{it}) = \sqrt{\frac{1}{NT}\sum_{i=1}^N\sum_{t=1}^T(\widehat{\beta}_{it}-\beta_{it})^2}. $$
	The coverage probability is computed as
	$$
	\textrm{Coverage}(\widehat{\beta}_{it}) = \frac{1}{NT}\sum_{i=1}^N\sum_{t=1}^TI(\widehat{\beta}_{it}-1.96\widehat{\sigma}_{\beta,it}\leq \beta_{it}\leq\widehat{\beta}_{it}+1.96\widehat{\sigma}_{\beta,it}),
	$$
	where $\widehat{\sigma}_{\beta,it}$ is the
	estimated standard deviation of $\widehat{\beta}_{it}$. We average all evaluation measures across 1,000 replications.
	
	\subsection{Determining the number of groups}
	As the implementation of our method requires the specification of the number of groups, we first examine how well the BIC proposed in Section~\ref{sec-choose-tuning} performs in selecting this number. To implement the BIC, we choose the scaling parameter $\hat{\sigma}^2$ to be the sum of squared residuals obtained by plugging in the estimates from the homogeneous panel, i.e. $G=1$.\footnote{While this may not yield a consistent estimate of the variance of residuals, the main role of $\hat{\sigma}^2$ is to scale the penalty term, such that it is invariant to the variation of the data. In fact, we find that choosing $\hat{\sigma}^2$ as the sum of squared residuals obtained under $G_{max}$ groups can lead to rather unstable results that vary across the specifications of $G_{max}$.}

\begin{table}[htp]
	\begin{center}\caption{Group number selection frequency using BIC when $G_0=3$}\label{tab:ic}\small
		\begin{tabular}{cccccccccccccccccc}
			\hline\hline
			$ \sigma_\epsilon$ & $N$ & $T$ & 1 & 2 & 3 & 4 & 5 && & 1 & 2 & 3 & 4 & 5\\
			\hline
			&     &       & \multicolumn{5}{c}{DGP.1} && & \multicolumn{5}{c}{DGP.2}\\
			\cline{4-8} \cline{11-15}
			0.5      & 50  & 10    & 0.000	& 0.002	& \textbf{0.996}& 0.002	&0.000 &&							& 0.000	& 0.000	& \textbf{1.000}& 0.000	&0.000\\
			&50   & 20    & 0.000	& 0.000	& \textbf{0.988}& 0.012	&0.000&&						    & 0.000	& 0.000	& \textbf{0.996}& 0.004	&0.000\\
			&50   & 40    & 0.000	& 0.000 & \textbf{0.987}& 0.011	&0.002&&  						    & 0.000	& 0.000	& \textbf{0.996}& 0.002	&0.002\\
			&100  & 10    & 0.000	& 0.000	& \textbf{0.998}& 0.000	&0.002&&						    & 0.000	& 0.000	& \textbf{0.998}& 0.002 &0.000\\	
			&100  & 20    & 0.000  & 0.000 & \textbf{0.994}& 0.006 &0.000&&  						    & 0.000	& 0.000	& \textbf{0.993}& 0.007	&0.000\\
			&100  & 40    & 0.000  & 0.000 & \textbf{0.992}& 0.008 &0.000&&         					& 0.000	& 0.000	& \textbf{0.996}& 0.004	&0.000\\
			\\
			0.75     & 50  & 10   & 0.000	& 0.020	& \textbf{0.974}& 0.004	&0.002&&							& 0.000	& 0.007	& \textbf{0.990}& 0.000	&0.003\\
			&50   & 20   & 0.000	& 0.006	& \textbf{0.982}& 0.012	&0.000&&     						& 0.000	& 0.004	& \textbf{0.995}& 0.001	&0.000\\
			&50   & 40   & 0.000	& 0.002 & \textbf{0.988}& 0.010	&0.000&&   						    & 0.000	& 0.000	& \textbf{0.998}& 0.002	&0.000\\
			&100  & 10   & 0.000	& 0.000	& \textbf{0.998}& 0.002	&0.000&&  							& 0.000	& 0.000	& \textbf{0.997}& 0.003 &0.000\\	
			&100  & 20   & 0.000   & 0.000 & \textbf{0.997}& 0.003 &0.000&&   							& 0.000	& 0.000	& \textbf{0.998}& 0.002	&0.000\\
			&100  & 40   & 0.000   & 0.000 & \textbf{0.984}& 0.014 &0.002&&   							& 0.000	& 0.000	& \textbf{0.994}& 0.006 &0.000\\
			\hline
			&     &       & \multicolumn{5}{c}{DGP.3} && & \multicolumn{5}{c}{DGP.4}\\
			\cline{4-8} \cline{11-15}                                                     		
			0.5      & 50  & 10    & 0.000	& 0.005	& \textbf{0.989}& 0.006	&0.000 &&							& 0.000	& 0.062	& \textbf{0.938}& 0.000	&0.000\\       		
			&50   & 20    & 0.000	& 0.000	& \textbf{0.991}& 0.009	&0.000&&						    & 0.000	& 0.030	& \textbf{0.962}& 0.008	&0.000\\     			
			&50   & 40    & 0.000	& 0.000 & \textbf{0.990}& 0.010	&0.000&&  						    & 0.000	& 0.000	& \textbf{0.996}& 0.002	&0.002\\    		
			&100  & 10    & 0.000	& 0.000	& \textbf{0.990}& 0.008	&0.002&&						    & 0.000	& 0.004	& \textbf{0.991}& 0.005 &0.000\\	    			
			&100  & 20    & 0.000  & 0.000 & \textbf{0.994}& 0.006 &0.000&&  						    & 0.000	& 0.001	& \textbf{0.999}& 0.000	&0.000\\   			
			&100  & 40    & 0.000  & 0.000 & \textbf{0.992}& 0.008 &0.000&&         					& 0.000	& 0.000	& \textbf{0.999}& 0.001	&0.000\\ 		
			\\                                                                                                                           		
			0.75     & 50  & 10   & 0.000	& 0.002	& \textbf{0.884}& 0.100	&0.014 &&							& 0.000	& 0.139	& \textbf{0.861}& 0.000	&0.000\\       		
			&50   & 20   & 0.000	& 0.000	& \textbf{0.946}& 0.048	&0.006&&     						& 0.000	& 0.041	& \textbf{0.959}& 0.000	&0.000\\         		
			&50   & 40   & 0.000	& 0.016 & \textbf{0.976}& 0.008	&0.000&&   						    & 0.000	& 0.000	& \textbf{0.998}& 0.002	&0.000\\        		
			&100  & 10   & 0.000	& 0.016	& \textbf{0.972}& 0.012	&0.000&&  							& 0.000	& 0.016	& \textbf{0.984}& 0.000 &0.000\\	         		
			&100  & 20   & 0.000   & 0.020 & \textbf{0.982}& 0.016 &0.000&&   							& 0.000	& 0.003	& \textbf{0.997}& 0.000	&0.000\\     		
			&100  & 40   & 0.000   & 0.000 & \textbf{0.986}& 0.012 &0.002&&   							& 0.000	& 0.000	& \textbf{0.995}& 0.005 &0.000\\     				
			\hline
		\end{tabular}
	\end{center}
\end{table}

	Table~\ref{tab:ic} presents the empirical probability that a particular number of groups, ranging from $G=1$ to 5, is selected according to the BIC. Recall that the true number is three.
	The results indicate that the BIC can identify the correct group size with a high probability. In DGP.1 and DGP.2, the BIC selected the correct number of groups in more than 97\% of the cases, even when $T=10$ and $\sigma_\epsilon=0.75$. In DGP.3, the correct selection rate is also high when $\sigma_\epsilon=0.5$, although it is relatively low when $\sigma_\epsilon=0.75$ and the sample sizes are small, possibly because of first differencing. Introducing lagged dependent variables deteriorates performance slightly, although the correct selection rate still exceeds 86\%, even in the worst case. 
	In Section S.1.1 of the supplement, we examine the BIC constructed from initial estimates. This continues to work well but generally performs more poorly than that defined in~\eqref{eq:BIC} using the final estimates.

	\subsection{Clustering, break detection, and point estimation}
	
	The previous section suggests that the information criteria are useful in determining the number of groups. Given the true number of groups, we now examine the accuracy of clustering, break detection, and coefficient estimation.
	
	\subsubsection*{Clustering accuracy}
	We first investigate the clustering accuracy of the proposed estimator. Table~\ref{tab:clustering} presents the averaged misclustering frequencies.
	This shows that the proposed method can correctly classify a large proportion of units. The misclassification frequency generally reduces with $T$ but not with $N$.
	In DGP.1 and DGP.2, GAGFL can correctly classify more than 95\% of individuals even with a relatively short time series
	($T=10$) and large errors ($\sigma_\epsilon=0.75$).
	Allowing individual fixed effects in DGP.3 leads to slightly less accurate clustering. Nevertheless, GAGFL still manages to limit the misclassification frequency to less than 7\% for a short $T$, and the rate drops quickly as $T$ increases. Introducing dynamic effects as in DGP.4 scarcely affects the clustering, and more than 96\% of individuals can be correctly classified.
	These results demonstrate that GAGFL can effectively capture grouped patterns of heterogeneity.

	\begin{table}[t]
		\begin{center}
			\caption{Average misclassification frequency}\label{tab:clustering}\small
			\begin{tabular}{lllcccccccccc}
				\hline\hline
				&& \multicolumn{3}{c}{$N=50$} && \multicolumn{3}{c}{$N=100$}\\
				&& $T=10$ & $T=20$  & $T=40$ && $T=10$ & $T=20$ & $T=40$\\
				\hline
				DGP.1                        &$\sigma_\epsilon=0.5$         &0.0104   & 0.0026  & 0.0010&& 0.0097 & 0.0015 & 0.0000\\
				&$\sigma_\epsilon=0.75$        &0.0448   & 0.0177  & 0.0027&& 0.0377 & 0.0140 & 0.0042\\
				\\
				DGP.2                        &$\sigma_\epsilon=0.5$         &0.0048   & 0.0025  & 0.0010&& 0.0040 & 0.0022 & 0.0001\\
				&$\sigma_\epsilon=0.75$        &0.0296   & 0.0076  & 0.0028&& 0.0206 & 0.0081 & 0.0042\\
				\\
				DGP.3                        &$\sigma_\epsilon=0.5$         &0.0179   & 0.0024  & 0.0001&& 0.0171 & 0.0028 & 0.0020\\
				&$\sigma_\epsilon=0.75$        &0.0663   & 0.0240  & 0.0041&& 0.0484 & 0.0161 & 0.0013\\
				\\
				DGP.4                        &$\sigma_\epsilon=0.5$         &0.0074   & 0.0005  & 0.0024&& 0.0059 & 0.0004 & 0.0040\\
				&$\sigma_\epsilon=0.75$        &0.0357   & 0.0114  & 0.0016&& 0.0327 & 0.0046 & 0.0002\\			
				\hline
			\end{tabular}
		\end{center}
	\end{table}

	\subsubsection*{Break estimation accuracy}
	
	We now examine the accuracy of structural break detection.
	We compare the estimated number of breaks for GAGFL and PLS, and compare the accuracy of the estimated breakpoints for GAGFL, PLS, and BFK. Recall that the latter two approaches both assume common breaks to all units at the same time, although BFK allows individually heterogeneous coefficients.

	\begin{table}[htp]
		\begin{center}
			\caption{Average frequency of correct estimation of the number of breaks}\label{tab:breaknumberfreq}\small
			\begin{tabular}{lllcccccccccc}
				\hline\hline
				&&                     & \multicolumn{3}{c}{$N=50$} && \multicolumn{3}{c}{$N=100$}\\
				&\multicolumn{2}{c}{Group (True break num.)}  & $T=10$ & $T=20$ & $T=40$ && $T=10$ & $T=20$ & $T=40$\\
				\hline
				&&                     &\multicolumn{5}{c}{DGP.1}\\
				$\sigma_\epsilon=0.5$ & GAGFL    &G1 ($m_{1,0}^0=2$)               &0.996  & 0.994   & 0.996  && 0.995 & 1.000 & 0.997 \\
				&&G2 ($m_{2,0}^0=2$)               &0.997  & 0.997   & 1.000  && 1.000 & 1.000 & 0.991\\
				&&G3 ($m_{3,0}^0=0$)               &0.996  & 0.999   & 0.999  && 0.985 & 0.994 & 1.000\\
				& PLS        & All individuals                        &0.310  & 0.263   & 0.201  && 0.407 & 0.386 & 0.223\\
				\\
				$\sigma_\epsilon=0.75$ & GAGFL &G1 ($m_{1,0}^0=2$)                 &0.786  & 0.867   & 0.962 && 0.962 & 0.991  & 0.994\\
				&&G2 ($m_{2,0}^0=2$)              &0.852  & 0.914   & 0.978 && 0.986 & 0.997  & 0.996\\
				&&G3 ($m_{3,0}^0=0$)              &0.812  & 0.956   & 0.994 && 0.964 & 0.993  & 0.998\\
				& PLS       & All individuals                         &0.268  & 0.248   & 0.164 && 0.389 & 0.361  & 0.196\\
				\hline
				&&                     &\multicolumn{5}{c}{DGP.2}\\
				$\sigma_\epsilon=0.5$ & GAGFL    &G1 ($m_{1,0}^0=2$)               &0.997  & 0.996   &  0.996&& 0.999 & 1.000  & 0.999\\
				&&G2 ($m_{2,0}^0=2$)               &1.000  & 1.000   &  1.000&& 1.000 & 1.000  & 0.999\\
				&&G3 ($m_{3,0}^0=0$)               &0.998  & 0.998   &  0.999&& 0.995 & 0.994  & 0.997\\
				& PLS        & All individuals                        &0.308  & 0.260   &  0.201&& 0.416 & 0.386  & 0.249\\
				\\
				$\sigma_\epsilon=0.75$ & GAGFL &G1 ($m_{1,0}^0=2$)                &0.926  & 0.976   &  0.979&& 0.992 & 1.000  & 0.998\\
				&&G2 ($m_{2,0}^0=2$)              &0.957  & 0.973   &  0.983&& 0.995 & 1.000  & 1.000\\
				&&G3 ($m_{3,0}^0=0$)              &0.929  & 0.986   &  0.994&& 0.987 & 0.995  & 0.998\\
				& PLS       & All individuals                         &0.295  & 0.242   &  0.193&& 0.408 & 0.410  & 0.240\\
				\hline
				
				&&                     &\multicolumn{5}{c}{DGP.3}\\
				$\sigma_\epsilon=0.5$ & GAGFL    &G1 ($m_{1,0}^0=2$)               &0.789  & 0.904   &  0.945&& 0.933 & 0.979  & 0.993\\
				&&G2 ($m_{2,0}^0=2$)               &0.793  & 0.903   &  0.941&& 0.945 & 0.991  & 0.992\\
				&&G3 ($m_{3,0}^0=0$)               &0.787  & 0.939   &  0.989&& 0.926 & 0.985  & 0.985\\
				& PLS        & All individuals                        &0.103  & 0.039   &  0.019&& 0.159 & 0.082  & 0.050\\
				\\
				$\sigma_\epsilon=0.75$ & GAGFL &G1 ($m_{1,0}^0=2$)                &0.268  & 0.362   &  0.597&& 0.539 & 0.730  & 0.873\\
				&&G2 ($m_{2,0}^0=2$)              &0.341  & 0.382   &  0.551&& 0.609 & 0.773  & 0.902\\
				&&G3 ($m_{3,0}^0=0$)              &0.227  & 0.455   &  0.812&& 0.543 & 0.772  & 0.947\\
				& PLS       & All individuals                         &0.095  & 0.022   &  0.008&& 0.129 & 0.064  & 0.033\\
				\hline
				
				&&                     &\multicolumn{5}{c}{DGP.4}\\
				$\sigma_\epsilon=0.5$ & GAGFL    &G1 ($m_{1,0}^0=2$)                &0.986  & 0.985   &  0.989&& 0.997 & 0.999  & 0.999\\
				&&G2 ($m_{2,0}^0=2$)                                                                             &0.977  & 0.982    &  0.976&& 1.000 & 0.999  & 1.000\\
				&&G3 ($m_{3,0}^0=0$)                                                                             &0.990  & 0.996   &  0.997&& 0.996 & 0.999  & 0.999\\
				& PLS        & All individuals                                                                          &0.051  & 0.016  &  0.002&& 0.077 & 0.013  & 0.020\\
				\\
				$\sigma_\epsilon=0.75$ & GAGFL &G1 ($m_{1,0}^0=2$)                 &0.830  & 0.877   &  0.873&& 0.986 & 0.990  & 0.992\\
				&&G2 ($m_{2,0}^0=2$)                                                                             &0.832  & 0.849   &  0.780&& 0.989 & 0.991  & 0.988\\
				&&G3 ($m_{3,0}^0=0$)                                                                             &0.943  & 0.988   &  0.995&& 0.993 & 0.996  & 0.998\\
				& PLS       & All individuals                                                                           &0.044  & 0.011   &  0.002&& 0.066 & 0.012  & 0.020\\
				\hline			
			\end{tabular}
		\end{center}
	\end{table}
	
	Table~\ref{tab:breaknumberfreq} presents the average frequencies of correctly estimating the number of breaks for each group in our method, and the same frequency for all units in PLS.\footnote{BFK is not compared here because the true number of breaks is assigned to estimate the breakpoints.}
	Because the panel is heterogeneous, the frequency of PLS is calculated by comparing the estimated number of breaks with the true number of breaks in the pooled data: three in our case.
	This shows that when errors are of a moderate size ($\sigma_\epsilon=0.5$), our method almost perfectly detects the correct number of breaks (with a correct detection frequency of more than 97\%) except in DGP.3.
	With individual fixed effects in DGP.3, the frequency is about 78\% in the small sample with $N=50$ and $T=10$, but this frequency quickly improves to more than 90\% when $T=20$ or when $N=100$.
	In the case with relatively large errors ($\sigma_\epsilon=0.75$),
	GAGFL still works well, although less accurately than in the case of $\sigma_\epsilon=0.5$.
	When the sample size is small ($N=50$ and $T=10$),
	it can correctly estimate the number of breaks in at least 78\% of the cases in DGP.1, 92\% in DGP.2, 22\% in DGP.3, and 83\% in DGP.4. Unreported results suggested that when GAGFL fails to detect the correct number of breaks, it typically overestimates them. Again, the correct detection frequency increases quickly with $N$ and $T$. For example, it quickly reaches 90\% on average in DGP.3 when $N=100$ and $T=40$. In contrast, the correct detection frequency remains less than 40\% for PLS in DGP.1 and DGP.2, and even lower in DGP.3 and DPG.4. Moreover, the frequency does not seem to increase with the sample size. 

	\begin{table}[htp]
		\begin{center}
			\caption{Hausdorff error of break date estimates}\label{tab:hd-error}\small
			\begin{tabular}{lllcccccccccc}
				\hline\hline
				&&                     & \multicolumn{3}{c}{$N=50$} && \multicolumn{3}{c}{$N=100$}\\
				&\multicolumn{2}{c}{Group (True break num.)}  & $T=10$ & $T=20$ & $T=40$ && $T=10$ & $T=20$ & $T=40$\\
				\hline
				&&                     &\multicolumn{5}{c}{DGP.1}\\
				$\sigma_\epsilon=0.5$ & GAGFL    &G1 ($m_{1,0}^0=2$)               &0.0015  & 0.0014  & 0.0014  && 0.0007 & 0.0008  & 0.0023\\
				&&G2 ($m_{2,0}^0=2$)                                                                           &0.0013  & 0.0009  & 0.0006  && 0.0006 & 0.0008  & 0.0009\\
				& BFK                                & All individuals                                           &0.1445  & 0.1695  & 0.1895  && 0.1333 & 0.1351  & 0.1738\\
				& PLS                                & All individuals                                           &0.1682  & 0.1357  & 0.1318   && 0.1628 & 0.1182  & 0.1056\\
				\\
				$\sigma_\epsilon=0.75$ & GAGFL &G1 ($m_{1,0}^0=2$)               & 0.0399 & 0.0216   & 0.0061  &&   0.0057 & 0.0018  & 0.0011\\
				&&G2 ($m_{2,0}^0=2$)                                                                          & 0.0231 & 0.0137   & 0.0035  &&   0.0018 & 0.0007  & 0.0006\\
				& BFK       & All individuals                                                                   & 0.1835 & 0.2683   & 0.2447  &&   0.1585 & 0.2452  & 0.2254\\
				& PLS       & All individuals                                                                    & 0.1670 & 0.1418   & 0.1349  &&   0.1621 &  0.1211 & 0.1087\\
				\hline
				&&                     &\multicolumn{5}{c}{DGP.2}\\
				$\sigma_\epsilon=0.5$ & GAGFL    &G1 ($m_{1,0}^0=2$)               &0.0003  & 0.0004  & 0.0021   && 0.0004 & 0.0005 & 0.0010\\
				&&G2 ($m_{2,0}^0=2$)                                                                           &0.0002  & 0.0003  & 0.0004  && 0.0000 & 0.0003 & 0.0005\\
				& BFK       & All individuals                                                                     &0.1406  & 0.1320  & 0.1399  && 0.1370 & 0.1129 & 0.1171\\
				& PLS                                & All individuals                                            &0.1707  & 0.1372  & 0.1308  && 0.1594 & 0.1212 & 0.1035\\
				\\
				$\sigma_\epsilon=0.75$ & GAGFL &G1 ($m_{1,0}^0=2$)               &0.0135  & 0.0052  & 0.0050  && 0.0029 & 0.0005  & 0.0006\\
				&&G2 ($m_{2,0}^0=2$)                                                                          &0.0085  & 0.0037  & 0.0034  && 0.0008 & 0.0005  & 0.0006\\
				& BFK       & All individuals                                                                    &0.1825  & 0.2019  & 0.2216  && 0.1713 & 0.1610  & 0.1985\\
				& PLS       & All individuals                                                                    &0.1700  & 0.1407  & 0.1338  && 0.1613 & 0.1220  & 0.1067\\
				\hline
				
				&&                     &\multicolumn{5}{c}{DGP.3}\\
				$\sigma_\epsilon=0.5$ & GAGFL    &G1 ($m_{1,0}^0=2$)               &0.0306  & 0.0148  & 0.0077   && 0.0089 & 0.0032 & 0.0032\\
				&&G2 ($m_{2,0}^0=2$)                                                                           &0.0309  & 0.0116  & 0.0080  && 0.0078 & 0.0010 & 0.0028\\
				& BFK       & All individuals                                                                    &0.1344  & 0.2283  & 0.2035  && 0.1211 & 0.1772 & 0.1965\\
				& PLS                                & All individuals                                           &0.1632  & 0.1604  & 0.1682  && 0.1536 & 0.1368 & 0.1316\\
				\\
				$\sigma_\epsilon=0.75$ & GAGFL &G1 ($m_{1,0}^0=2$)               &0.1101  & 0.1016  & 0.0631  && 0.0622 & 0.0447  & 0.0179\\
				&&G2 ($m_{2,0}^0=2$)                                                                          &0.0794  & 0.0880  & 0.0651  && 0.0483 & 0.0301  & 0.0114\\
				& BFK       & All individuals                                                                    &0.1817  & 0.3442  & 0.2552  && 0.1657 & 0.3443  & 0.2348\\
				& PLS       & All individuals                                                                     &0.1625  & 0.1649  & 0.1776  && 0.1509 & 0.1445  & 0.1379\\
				\hline
				
				&&                     &\multicolumn{5}{c}{DGP.4}\\
				$\sigma_\epsilon=0.5$ & GAGFL    &G1 ($m_{1,0}^0=2$)               &0.0033  & 0.0028  & 0.0013  && 0.0006 & 0.0005 & 0.0004\\
				&&G2 ($m_{2,0}^0=2$)                                                                           &0.0036  & 0.0029  & 0.0030  && 0.0002 & 0.0001 & 0.0002\\
				& BFK       & All individuals                                                                        &0.1176  & 0.1678  & 0.3437    && 0.1517 & 0.1833  & 0.3255\\
				& PLS                                & All individuals                                               &0.1782  & 0.1630  &  0.1661  && 0.1681 & 0.1511 & 0.1507 \\
				\\
				$\sigma_\epsilon=0.75$ & GAGFL &G1 ($m_{1,0}^0=2$)               &0.0315  & 0.0186  & 0.0192  && 0.0028 & 0.0012  & 0.0014\\
				&&G2 ($m_{2,0}^0=2$)                                                                          &0.0287  & 0.0244 & 0.0349  && 0.0019 & 0.0019  & 0.0015\\
				& BFK       & All individuals                                                                        &0.1417  & 0.1538  & 0.4750  && 0.1539 & 0.1704  & 0.4750\\
				& PLS       & All individuals                                                                        &0.1778  & 0.1684  & 0.1702  && 0.1687 & 0.1534  & 0.1570 \\
				\hline		
			\end{tabular}
		\end{center}
		\footnotesize{\emph{Notes:} HD ratios of GAGFL estimates for G3 (with no breaks) not reported because all zero.}
	\end{table}

	As another measure of break estimation accuracy in Table~\ref{tab:hd-error} we report the Hausdorff errors between the true break dates and those estimated by GAGFL, PLS, and BFK, conditional on the correction estimation of the number of breaks. 
	We can see that the Hausdorff errors of GAGFL are much smaller than those of PLS and BFK in all cases.
Although BFK allows for heterogeneous coefficients, it has difficulty detecting breaks in small samples because heterogeneous breaks become less visible by treating them as common, which is also noted by \citet{baltagi&qu&kao2016}.
	These results jointly illustrate the importance of accounting for heterogeneity in breaks, and show that our method can detect the number of breaks correctly and identify the breakpoints precisely, even when the error variance is large.

	\subsubsection*{Coefficient estimation accuracy}
	
	Finally, we compare the accuracy of the coefficient estimates obtained from GAGFL, PLS, and BFK. Table~\ref{tab:coef} presents the average RMSE and coverage probability of the coefficient estimates of the three methods across 1,000 replications.\footnote{To compute the average statistics, we map the estimated coefficients in a regime-group pair to each $(i,t)$ observation based on the estimated group memberships and break dates, and then compute the average RMSE and coverage probability of the entire coefficient vector. These average statistics facilitate comparison with the estimates produced by the two competing methods, PLS and BFK, because they do not produce group-specific estimates. Moreover, the average statistics also reflect the accuracy of membership and break date estimation.}
	To conserve space,
	for DGP.4 we report only the results for the coefficient of the lagged dependent variable.
	In general, we can see that the RMSE of GAGFL is much smaller than that of PLS and BFK. 
	Increasing the sample size reduces the RMSE and improves the coverage probability of GAGFL. In contrast, the RMSE of PLS does not improve as the sample size increases,
	and its coverage probability remains low. BFK has better coverage probabilities than PLS (although still lower than GAGFL), but at the cost of a much larger RMSE caused by its particularly large standard deviation. This is because BFK uses individual time series estimation, which can be rather inefficient in finite samples.

\begin{table}[htp]
	\begin{center}
		\caption{Root mean squared error and coverage probability of coefficient estimates}\label{tab:coef}
		\begin{tabular}{ccccccccccc}\hline\hline
			&       &       &       & \multicolumn{3}{c}{RMSE} &       & \multicolumn{3}{c}{Coverage} \\
			\cmidrule{5-7}\cmidrule{9-11}          & \multicolumn{1}{c}{$\sigma$} & \multicolumn{1}{c}{$N$} & \multicolumn{1}{c}{$T$} & \multicolumn{1}{c}{GAGFL} & \multicolumn{1}{c}{PLS} & \multicolumn{1}{c}{BFK} &       & \multicolumn{1}{c}{GAGFL} & \multicolumn{1}{c}{PLS} & \multicolumn{1}{c}{BFK} \\
			\midrule
			\multicolumn{1}{l}{DGP.1} 
			& 0.5   & 50    & 10    & 0.1161 & 1.4424 & 3.9507 &       & 0.9237 & 0.1091 & 0.5817 \\
			&       & 50    & 20    & 0.0611 & 1.3676 & 4.4311 &       & 0.9349 & 0.0582 & 0.7007 \\
			&       & 50    & 40    & 0.0388 & 1.3680 & 0.7098 &       & 0.9477 & 0.0369 & 0.7126 \\
			&       & 100   & 10    & 0.1022 & 1.4491 & 5.5612 &       & 0.9265 & 0.0531 & 0.5902 \\
			&       & 100   & 20    & 0.0493 & 1.3770 & 5.1966 &       & 0.9408 & 0.0217 & 0.7183 \\
			&       & 100   & 40    & 0.0429 & 1.3803 & 3.0278 &       & 0.9394 & 0.0068 & 0.7146 \\
			&       &       &       &       &       &       &       &       &       &  \\
			& 0.75  & 50    & 10    & 0.2347 & 1.4444 & 8.0057 &       & 0.8396 & 0.1158 & 0.5775 \\
			&       & 50    & 20    & 0.1612 & 1.3693 & 7.3826 &       & 0.8940 & 0.0628 & 0.6807 \\
			&       & 50    & 40    & 0.0771 & 1.3692 & 2.8830 &       & 0.9401 & 0.0398 & 0.7156 \\
			&       & 100   & 10    & 0.1916 & 1.4501 & 3.7916 &       & 0.8860 & 0.0568 & 0.5856 \\
			&       & 100   & 20    & 0.1051 & 1.3778 & 7.5887 &       & 0.9267 & 0.0238 & 0.6948 \\
			&       & 100   & 40    & 0.0467 & 1.3807 & 4.5125 &       & 0.9406 & 0.0086 & 0.7314 \\
			\midrule
			\multicolumn{1}{l}{DGP.2} 
			& 0.5   & 50    & 10    & 0.0787 & 1.4421 & 9.1041 &       & 0.9276 & 0.1050 & 0.5819 \\
			&       & 50    & 20    & 0.0435 & 1.3672 & 1.8993 &       & 0.9441 & 0.0557 & 0.7095 \\
			&       & 50    & 40    & 0.0270 & 1.3676 & 4.5627 &       & 0.9435 & 0.0384 & 0.7005 \\
			&       & 100   & 10    & 0.0708 & 1.4490 & 1.2410 &       & 0.9367 & 0.0498 & 0.5930 \\
			&       & 100   & 20    & 0.0347 & 1.3770 & 2.3579 &       & 0.9444 & 0.0212 & 0.7143 \\
			&       & 100   & 40    & 0.0428 & 1.3802 & 1.6657 &       & 0.9438 & 0.0065 & 0.6984 \\
			&       &       &       &       &       &       &       &       &       &  \\
			& 0.75  & 50    & 10    & 0.1725 & 1.4434 & 28.9436 &       & 0.8826 & 0.1119 & 0.5658 \\
			&       & 50    & 20    & 0.0892 & 1.3683 & 12.6888 &       & 0.9333 & 0.0592 & 0.6948 \\
			&       & 50    & 40    & 0.0718 & 1.3684 & 12.2756 &       & 0.9330 & 0.0395 & 0.7046 \\
			&       & 100   & 10    & 0.1441 & 1.4497 & 10.0073 &       & 0.9152 & 0.0528 & 0.5797 \\
			&       & 100   & 20    & 0.0732 & 1.3774 & 5.8872  &       & 0.9412 & 0.0226 & 0.7115 \\
			&       & 100   & 40    & 0.0321 & 1.3805 & 7.7923  &       & 0.9417 & 0.0071 & 0.7235 \\
			\hline
		\end{tabular}%
	\end{center}
\end{table}

\begin{table}[htp]
	\begin{center}
		Table~\ref{tab:coef} (cont.): Root mean squared error and coverage probability of coefficient estimates\\
		\begin{tabular}{ccccccccccccccc}\hline\hline
			&       &       &       & \multicolumn{3}{c}{RMSE} &       & \multicolumn{3}{c}{Coverage} \\
			\cmidrule{5-7}\cmidrule{9-11}          & \multicolumn{1}{c}{$\sigma$} & \multicolumn{1}{c}{$N$} & \multicolumn{1}{c}{$T$} & \multicolumn{1}{c}{GAGFL} & \multicolumn{1}{c}{PLS} & \multicolumn{1}{c}{BFK} &       & \multicolumn{1}{c}{GAGFL} & \multicolumn{1}{c}{PLS} & \multicolumn{1}{c}{BFK} \\
			\midrule
			\multicolumn{1}{l}{DGP.3} 
			& 0.5   & 50    & 10    & 0.1588 & 1.4535 & 13.3880 &       & 0.8444 & 0.1766 & 0.4305 \\
			&       & 50    & 20    & 0.0747 & 1.3791 & 18.8648 &       & 0.8919 & 0.1280 & 0.5751 \\
			&       & 50    & 40    & 0.0438 & 1.3803 & 1.8619  &       & 0.9175 & 0.1023 & 0.6808 \\
			&       & 100   & 10    & 0.1339 & 1.4539 & 27.3479 &       & 0.8870 & 0.0929 & 0.4392 \\
			&       & 100   & 20    & 0.0573 & 1.3822 & 6.7030  &       & 0.9298 & 0.0539 & 0.5951 \\
			&       & 100   & 40    & 0.0301 & 1.3846 & 15.0779 &       & 0.9319 & 0.0446 & 0.6745 \\
			&       &       &       &        &        &         &       &        &       &  \\
			& 0.75  & 50    & 10    & 0.3212 & 1.4573 & 34.3459 &       & 0.7375 & 0.1877 & 0.4420 \\
			&       & 50    & 20    & 0.2010 & 1.3826 & 25.1017 &       & 0.7744 & 0.1426 & 0.5780 \\
			&       & 50    & 40    & 0.1027 & 1.3839 & 23.9771 &       & 0.8755 & 0.1135 & 0.6787 \\
			&       & 100   & 10    & 0.2428 & 1.4555 & 54.6257 &       & 0.7987 & 0.0990 & 0.4463 \\
			&       & 100   & 20    & 0.1294 & 1.3840 & 12.1645 &       & 0.8806 & 0.0596 & 0.5836 \\
			&       & 100   & 40    & 0.0564 & 1.3861 & 22.4831 &       & 0.9150 & 0.0480 & 0.6837 \\
			\hline
			\multicolumn{1}{l}{DGP.4}
			& 0.5  & 50    & 10     & 0.0478 & 0.7358 & 15.7084 &       & 0.9140 & 0.1170 & 0.5257 \\
			&       & 50    & 20    & 0.0426 & 0.6919 &  0.1803 &       & 0.9416 & 0.0580 & 0.7191 \\
			&       & 50    & 40    & 0.0329 & 0.6909 &  6.1932 &       & 0.9478 & 0.0324 & 0.6311 \\
			&       & 100   & 10    & 0.0317 & 0.7359 & 15.5052 &       & 0.9312 & 0.0474 & 0.5281 \\
			&       & 100   & 20    & 0.0156 & 0.6924 &  0.1819 &       & 0.9386 & 0.0198 & 0.7125 \\
			&       & 100   & 40    & 0.0096 & 0.6936 &  3.5848 &       & 0.9482 & 0.0093 & 0.6429 \\
			&       &       &       &        &        &         &       &       &       &  \\
			& 0.75  & 50    & 10    & 0.0766 & 0.7259 & 10.8228 &       & 0.8771 & 0.1150 & 0.5292    \\
			&       & 50    & 20    & 0.0391 & 0.6827 & 0.2200  &       & 0.9060 & 0.0597 & 0.7391 \\
			&       & 50    & 40    & 0.0368 & 0.6821 & 5.9803  &       & 0.9327 & 0.0320 & 0.4573 \\
			&       & 100   & 10    & 0.0552 & 0.7262 & 23.8383 &       & 0.9107 & 0.0483 & 0.5339 \\
			&       & 100   & 20    & 0.0299 & 0.6833 & 0.2212  &       & 0.9444 & 0.0177 & 0.7345 \\
			&       & 100   & 40    & 0.0178 & 0.6850 & 4.0357  &       & 0.9435 & 0.0091 & 0.6109 \\
			\hline
		\end{tabular}%
	\end{center}
	{\footnotesize\emph{Notes:} For DGP.4, we report only the statistics associated with the estimated coefficient of the lagged dependent variable as the accuracy of the coefficient estimate of the exogenous variable is very similar.}
\end{table}

	\subsection{Extensions of simulation}
	\label{S: sim ext}
	
	We consider four extensions of the simulation designs. This section briefly presents the designs and results of the extensions, while additional details are in Sections S.1.2--1.5 of the supplement.
	
	First, we consider the cases where the regressors are group dependent. Allowing for group-dependent regressors does not affect the clustering accuracy, no matter whether the group structure of regressors coincides with the structure of coefficients, except in DGP.3. In DGP.3 with group-dependent regressors, the misclassification frequency tends to be higher than in the case of independent regressors. 
	
	Second, we consider the case where breaks are small and the groups are more alike. In this case, clustering and break detection become more difficult, and we detect higher misclassification frequency and less accurate estimates of break dates. However, performance rapidly improves as the sample size increases. 
		
	To better understand how cross-sectional variation plays a role in affecting the performance of GAGFL, we consider the case of unequally sized groups, say $N_1:N_2:N_3 = 0.1 : 0.8 : 0.1$, such that some groups contain very few units. 
	In Table~\ref{tab:simulation-extension}, we summarize the average misclassification frequency, the frequency of correct estimation of the number of breaks, and the Hausdorff error of the break date estimates of GAGFL in the presence of small groups.\footnote{To conserve space, Table~\ref{tab:simulation-extension} provides the results in the leading case of DGP.3, while the more full results are in the supplement.} Although small group size results in less accurate estimation due to a lack of cross-sectional variation, we find that increasing the sample size improves the classification accuracy. In particular, accuracy improves significantly as $N$ increases because cross-sectional variation in the small groups is increased, which improves the coefficient estimates and further indirectly improves classification. 
	
	Finally, we consider the case where the break dates are close. We generate breaks in the first group that occur at $\lfloor T/2\rfloor$ and $\lfloor 2T/3\rfloor$, and in the second group at $\lfloor T/3\rfloor $ and $\lfloor T/2\rfloor$, where $\lfloor\cdot\rfloor$ takes the integer part. Now, the difference between the two break dates in both groups is just $\lfloor T/6\rfloor$, i.e. $1$ when $T=10$, $3$ when $T=20$, and $6$ when $T=40$. For the third group, the slope coefficient is stable without a break. The performance of GAGFL is summarized in the bottom panel of Table~\ref{tab:simulation-extension}. This shows that shrinking the interval between the two breaks barely affects the misclassification frequency and the accuracy of break estimation. This indicates that as long as there are sufficient individual units in each group, we can consistently estimate the slope coefficients (and further the groups and breaks), even when the two breaks are consecutive.
	
	We conclude this section by commenting on the iterative feature of the algorithm. On average, the algorithm takes two to three steps to converge in our simulation. The number of iterations increases when $\sigma_\epsilon$ is large but decreases as $T$ increases. It also takes more steps to converge when we generate the data with closer break dates, a small degree of group heterogeneity and breaks, or small groups containing only a few units. Comparing the performance of the iterative and non-iterative estimates, we find the misclassification frequency of the iterative estimates consistently lower than the rate of the non-iterative estimates, suggesting that iteration improves clustering performance, sometimes greatly. 
	Consequently, iterative estimates produce lower RMSEs and higher coverage probabilities than non-iterative estimates.

	\begin{table}[t]
		\begin{center}
			\caption{Simulation extensions: GAGFL clustering and break detection (DGP.3)}\label{tab:simulation-extension}
			\begin{tabular}{llllcccccccccc}
				\hline\hline
				& \multicolumn{3}{c}{$N=50$} && \multicolumn{3}{c}{$N=100$}\\
				& $T=10$ & $T=20$  & $T=40$ && $T=10$ & $T=20$ & $T=40$\\
				\hline
				
				&\multicolumn{7}{c}{Groups with few units}\\
				Misclassification frequency           & 0.1872   & 0.1816   & 0.1133 && 0.1308  & 0.1283  & 0.1145\\  \\
				\multicolumn{7}{l}{Freq. of correct estimation of $m_{g}$}  \\
				$G_1$ ($m_{1,0}^0=2$)                & 0.2840   & 0.3120   & 0.3480 && 0.5320  & 0.6140  & 0.7020 \\
				$G_2$  ($m_{2,0}^0=2$)               & 0.8820   & 0.9500   & 0.9760 && 0.9220  & 0.9920  & 0.9920 \\
				$G_3$  ($m_{3,0}^0=0$)               & 0.0880   & 0.2320   & 0.5640 && 0.3660  & 0.5080  & 0.6820 \\ \\
				\multicolumn{7}{l}{Hausdorff error}  \\                               
				$G_1$ ($m_{1,0}^0=2$)                & 0.1439  & 0.1407  & 0.1267&& 0.0826 & 0.0581 & 0.0465\\
				$G_2$ ($m_{2,0}^0=2$)                & 0.0088  & 0.0029  & 0.0000&& 0.0060 & 0.0001 & 0.0005\\

				\hline
				
				&\multicolumn{7}{c}{Closer break dates} \\
				Misclassification frequency           & 0.0119  & 0.0004  & 0.0000&& 0.0095 & 0.0005 & 0.0000\\\\
				\multicolumn{7}{l}{Freq. of correct estimation of $m_{g}$}  \\
				$G_1$ ($m_{1,0}^0=2$)                & 0.8080   & 0.9120   & 0.9620 && 0.9520  & 0.9820  & 1.0000 \\
				$G_2$  ($m_{2,0}^0=2$)               & 0.8180   & 0.9040   & 0.9760 && 0.9620  & 0.9940  & 0.9980 \\
				$G_3$  ($m_{3,0}^0=0$)               & 0.7560   & 0.9280   & 0.9960 && 0.9240  & 1.0000  & 1.0000 \\ \\
				\multicolumn{7}{l}{Hausdorff error}  \\                               
				$G_1$ ($m_{1,0}^0=2$)                & 0.0335  & 0.0136  & 0.0036&& 0.0082 & 0.0037 & 0.0000\\
				$G_2$ ($m_{2,0}^0=2$)                & 0.0405  & 0.0158  & 0.0031&& 0.0084 & 0.0015 & 0.0001\\ 			
				\hline
				
				\hline
			\end{tabular}
		\end{center}
	\end{table}

	\section{Empirical application}\label{sec:empirical}
	We apply the proposed GAGFL estimator to revisit the relationship between democracy and income. This analysis was first undertaken by \citet{acemoglu&johnson&robinson&yared2008} in a standard panel framework, and revisited by \citet[BM hereafter]{bonhomme&manresa2015} using the GFE approach. As suggested by \citet{acemoglu&johnson&robinson&yared2008} and BM, countries experience economic and democratic development at certain ``critical junctures'', such as the end of feudalism, the industrialization age, or the process of colonization. These junctures may not only influence the average degree of democracy (captured by the intercept of the democracy--income regression), but also the degree of democracy persistence and the relationship between income and democracy (captured by the slope coefficients). This implies that the effect of income on democracy can shift discontinuously at those junctures. Moreover, the paths of economic and political development also diverge across countries. For example, industrialization may affect a proportion of Western countries, but not Asian countries, at least to a lesser extent or at a later stage. Hence, the democracy--income relationship is likely to shift at different historical junctures across countries, and even for countries affected by the same event, the extent of the effect can be distinct.
	
	We revisit the democracy--income relation by allowing heterogeneous structural breaks in a regression of democracy, measured by the Freedom House index, on its lagged value and the lagged value of the logarithm of GDP per capita, namely
	\begin{equation}
	\label{eq:demo-income-model}
	\textrm{democracy}_{it} = \alpha_{g_it}+\theta_{1,g_i, t}\textrm{democracy}_{i,t-1}+\theta_{2,g_i,t}\textrm{income}_{i,t-1}+\varepsilon_{it}.
	\end{equation}
One essential difference from BM's specification is that we allow the slope coefficients to have a grouped pattern and possible structural breaks. Note also that in our specification, $\alpha_{g_i t } $ may exhibit structural breaks and may not change at every time period. We compare our results with those of BM.\footnote{\citet{lu&su2017} also studied the same empirical data with group-specific slope coefficients, but they considered individual and time-specific two-way fixed effect panel models.}

	We follow BM in using a balanced subsample of the data of \citet{acemoglu&johnson&robinson&yared2008} that contains 90 countries for seven periods (five-year frequency over 1970--2000).
	To ensure similar variation across variables for grouping, we standardize democracy and income by subtracting their overall means and then divide them by the overall standard deviations, respectively.
	
	To implement GAGFL, we choose $\lambda_{\max} = 50$, which results in zero breaks in all groups, and $\lambda_{\min}=0.001$, which results in six breaks in all groups. We then choose the tuning parameter $\lambda$ by searching on the interval $[\lambda_{\min},\lambda_{\max}]$ with 200 evenly-distributed logarithmic grids. We follow \citet{qian&su2016} and use the same information criterion as in the simulation with $\rho_{N T} = 0.05\ln(NT)/\sqrt{NT}$ for determining the number of breaks. 
	To determine the number of groups, we employ the BIC as in the simulation. We let the number of groups vary from 1 to 10, and the minimum information criterion corresponds to four groups (i.e. $G=4$), which coincides with the group number specification of BM. 

	When we estimate~\eqref{eq:demo-income-model} with four groups, we have three groups with structural changes in intercepts and slope coefficients and one group without any break, and name them Groups 4.1--4.4. 
	Figure~\ref{fig:group-pattern-G4} displays the estimated grouped pattern when $G=4$, and the estimates of coefficients and structural breaks are reported in Table~\ref{tab:coefficient-G4}. We provide the post-Lasso estimates with their corresponding standard errors. The supplement presents the confidence sets of group membership using the method in \citet{DzemskiOkui2018}. The unit-wise confidence sets suggest that the group membership estimates are reasonably accurate, although the joint sets are wide.

	\begin{figure}[t]\caption{Estimates of group membership ($G=4$)}\label{fig:group-pattern-G4}\ \\
		\centering
		\includegraphics[width=\linewidth]{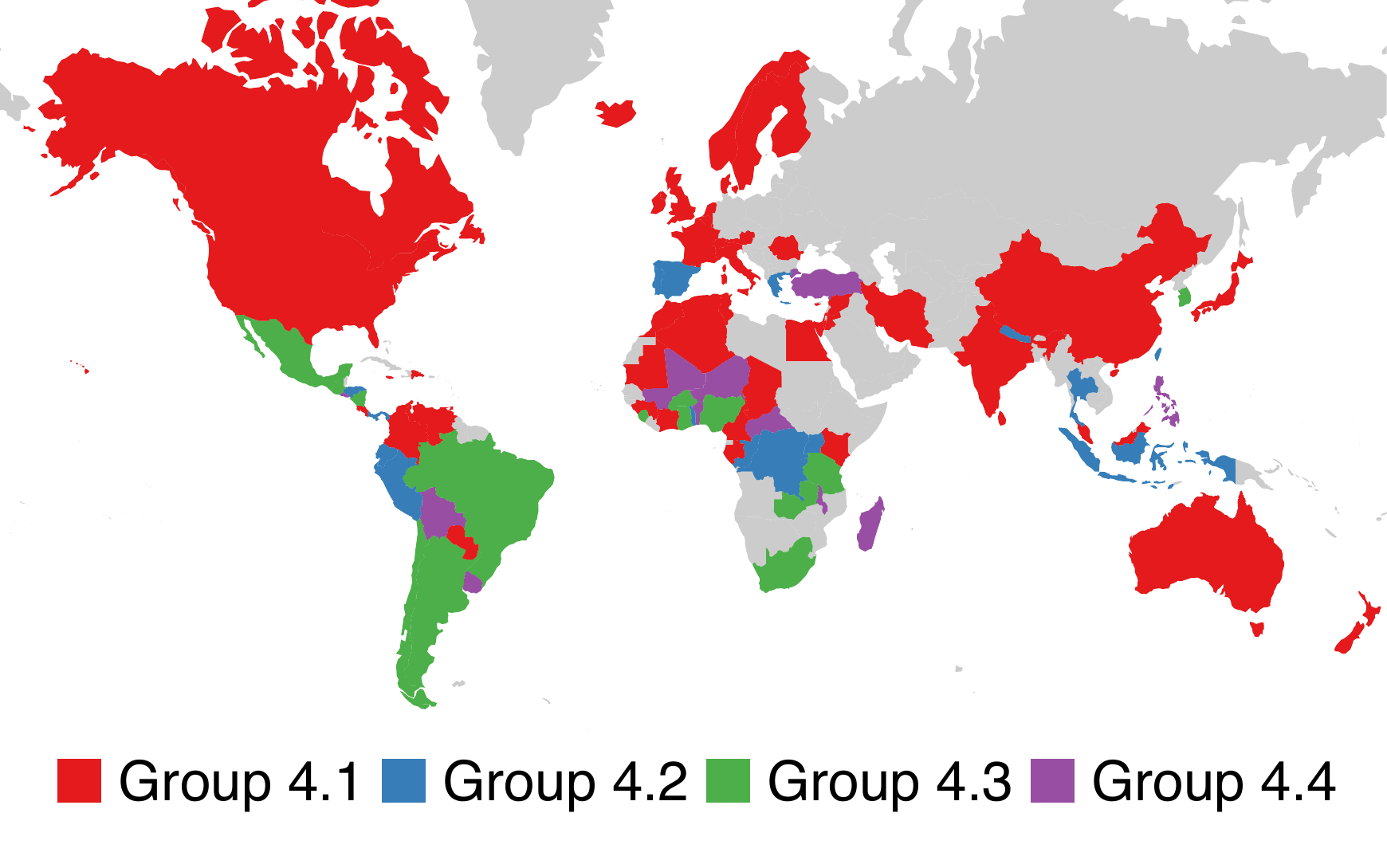}
	\end{figure}
\begin{table}[t]
	\centering
	\caption{Income and democracy: Coefficient and regime estimates of $G=4$}\label{tab:coefficient-G4}\small
	\begin{tabular}{rrrrrrrrr}
		\hline\hline
		& \multicolumn{1}{l|}{Regime} & \multicolumn{1}{c|}{1} & \multicolumn{1}{c|}{2} & \multicolumn{1}{c|}{3} & \multicolumn{1}{c|}{4} & \multicolumn{1}{c|}{5} & \multicolumn{1}{c|}{6} & \multicolumn{1}{c}{7} \\
		\hline
		\multicolumn{1}{l}{Group 4.1} & \multicolumn{1}{l|}{Intercept} & \multicolumn{7}{c}{$-0.0644$ $(0.0229)$} \\
		& \multicolumn{1}{l|}{Democracy$_{t-1}$} & \multicolumn{7}{c}{$0.8655$ $(0.0268)$} \\
		& \multicolumn{1}{l|}{Income$_{t-1}$} & \multicolumn{7}{c}{$0.1404$ $(0.0320)$} \\
		\hline
		\multicolumn{1}{l}{Group 4.2} & \multicolumn{1}{l|}{Intercept} & \multicolumn{1}{c|}{$-1.1291$ $( 0.0738)$} & \multicolumn{6}{c}{$0.1713$ $( 0.0472)$} \\
		& \multicolumn{1}{l|}{Democracy$_{t-1}$} & \multicolumn{1}{c|}{$-0.3189$ $(0.0935)$} & \multicolumn{6}{c}{$0.5172$ $( 0.0773)$} \\
		& \multicolumn{1}{l|}{Income$_{t-1}$} & \multicolumn{1}{c|}{$0.2590$ $(0.0645)$} & \multicolumn{6}{c}{$0.4172$ $(0.0756)$} \\
		\hline
		\multicolumn{1}{l}{Group 4.3} & \multicolumn{1}{l|}{Intercept} & \multicolumn{4}{c|}{$-0.3807$ $(0.1070)$} & \multicolumn{3}{c}{$0.2686$ $(0.0750)$} \\
		& \multicolumn{1}{l|}{Democracy$_{t-1}$} & \multicolumn{4}{c|}{$-0.1574$ $(0.1694)$} & \multicolumn{3}{c}{$0.2477$ $(0.1520)$} \\
		& \multicolumn{1}{l|}{Income$_{t-1}$} & \multicolumn{4}{c|}{$0.3425$ $(0.0925)$} & \multicolumn{3}{c}{$0.4239$ $(0.1450)$} \\
		\hline
		\multicolumn{1}{l}{Group 4.4} & \multicolumn{1}{l|}{Intercept} & \multicolumn{3}{c|}{$-0.4265$  $(0.1370)$} & \multicolumn{2}{c|}{$0.5290$ $(0.0914 )$} & \multicolumn{2}{c}{$0.4983$ $(0.1690)$} \\
		& \multicolumn{1}{l|}{Democracy$_{t-1}$} & \multicolumn{3}{c|}{$0.4796$ $(0.0884)$} & \multicolumn{2}{c|}{$0.3879$ $(0.1708)$} & \multicolumn{2}{c}{$0.0860$ $(0.0949)$} \\
		& \multicolumn{1}{l|}{Income$_{t-1}$} & \multicolumn{3}{c|}{$0.3019$ $(0.1235)$} & \multicolumn{2}{c|}{$0.8562$ $(0.1773)$} & \multicolumn{2}{c}{$-0.0551$ $(0.1495)$} \\
		\hline
		&       &       &       &       &       &       &       &  \\
	\end{tabular}%
\end{table}%
	Group 4.1 contains no breaks. 
	A significant feature of this group is strong dynamic persistence in the political system, while the income effect is relatively weak. This group contains a large number of the ``high-democracy'' group countries in BM such as the US and Switzerland. Nevertheless, the group also includes many countries in BM's ``low-democracy'' group, such as China, Iran, Cameroon, Guinea and several other African countries. This may appear counterintuitive at first glance, but there are both developed and developing countries in this group whose unit-wise confidence sets (see the supplement) are singleton, suggesting that this result is not an artifact of statistical error. Further examination reveals that this group structure is mainly driven by the persistence of democracy. Although democracy levels within Group 4.1 vary, a common feature is that their political systems are highly persistent, as reflected in the high value of $\theta_{1}$ (= 0.8655). This strong persistency separates countries of this group from other groups containing breaks in their democracy level. When we set $G=6$ corresponding to the second-lowest value of BIC, this group will be segmented (see the supplement). A large discrepancy in democracy and income levels across countries also explains the small intercept and weak income effect, $\theta_{2}$. 
	
	Group 4.2 is characterized by one structural break occurring at an early stage of the sample period (the mid-late 1970s). The average democracy level, persistence, and income effect all increase after this break. This group consists mainly of ``early transition'' countries in BM such as Greece, Nepal, Spain, and Thailand. It also contains a few ``low-democracy'' countries, including Burundi, Republic of Congo, Togo, etc. A further examination of these low-democracy countries shows that they all have a break in their democracy level at the beginning of the sample period, even though the level remains low in general. In this sense, the break and coefficient estimates produced by GAGFL well capture the political junctures in the mid-late 1970s of these countries.

	Group 4.3 also exhibits one structural break, but in the mid- to late 1980s. Again, the average democracy level, persistence, and income effect increase after the break, but to a lesser extent when compared with Group 4.2. 
	A large proportion of this group are ``late transition'' countries whose democratic reforms occurred at a later stage. It also includes countries that transitioned to highly democratic countries at multiple junctures (e.g., South Korea, Argentina, and Brazil) and those whose democracy level fluctuated greatly in the latter period (i.e., Sierra Leone and Nigeria). 
	
	Finally, Group 4.4 exhibits two breaks, one at the beginning of the 1980s and one at the beginning of the 1990s. After each break, dynamic democratic persistence decreases from 0.4796 to 0.3879 and then to 0.0860, and the income effect first strengthens from 0.3019 to 0.8562, and then weakens with an insignificant estimate $-0.0551$. Most countries in this group experienced changes/fluctuations in democracy in the middle and late phase of the period. However, none has a singleton confidence set (see the supplement) and thus it might be difficult to make a strong argument about this group. 
	
	We also examine the democracy--income relation under alternative specifications. First, we consider the results with $G=6$ which is the second-lowest value of BIC.
	In this case, the group with stable coefficients under $G=4$, i.e. Group 4.1, is further divided. Recall that Group 4.1 contains countries with different levels of democracy. When we set $G=6$, the lowly and highly democratic countries are clearly separated (e.g., China and the US), although their slope coefficients are all stable over time. 
	Second, we consider the setup with a fully time-varying intercept, where only slope coefficients are penalized but the $\alpha_{g_i,t}$s are not. The results are generally in line with those of regime-specific estimation (in Table~\ref{tab:coefficient-G4}). These indicate that average democracy does not fluctuate in every time period, but exhibits at most one discontinuous break during the sample period. 
	Moreover, the fully time-varying intercepts are much less significant than those reported in Table~\ref{tab:coefficient-G4}.
		Thus, in this particular application, the fully time-varying estimation may be rather inefficient compared to regime-specific estimation. We provide the complete results of these alternative specifications in the supplement.

	In general, our procedure detects a large degree of heterogeneity in the income effect of democracy.
	Although compatible to some extent, our grouping differs from \citet{bonhomme&manresa2015}. This is mainly because we form groups based on the whole coefficient vector and allow structural change. The grouped pattern thus not only reflects the magnitude difference of the coefficient estimates, but also the heterogeneity in structural breaks. This allows us to provide additional insights not captured by GFE. In particular, we show that the persistence of the democracy level and the income effect both exhibit significant structural breaks in several of the groups, and that the breakpoints and the size of breaks differ markedly across groups.
	
	We also apply GAGFL to another application to study the determinants of the cross-country differences in savings behavior as in \citet{su&shi&phillips2016} and the analysis is in the supplement. We find more heterogeneity than \citet{su&shi&phillips2016} in the sense that countries differ in their transition points, the level and stability of their savings rates, and the effects of various determinants. One group of Asian countries features one structural break coinciding with the start of the 1997 Asian financial crisis, while the crisis hardly affects the remaining countries, and they thus have relatively stable coefficients. This application again confirms the importance of incorporating heterogeneous structural breaks. 
	
	\section{Conclusion}\label{sec:conclusion}
	
	Structural change in the relationships between variables often characterizes large panels with long time series because of important events, such as financial crises, technological progress, economic transition, etc. The effect of these events often differs across individuals. Some can be greatly influenced, while others in the sample may not be affected at all by events. Even for those affected, the impacts can be heterogeneous. 
	Importantly, failing to incorporate heterogeneity in structural breaks leads to incorrect breakpoint detection and imprecise coefficient estimates.
	
	In this paper, we propose a new model and estimation procedure for panel data with heterogeneous structural breaks. We model individual heterogeneity via a \emph{latent} grouped pattern such that individuals within the group share the same regression coefficients. For each group, we allow common structural breaks in the coefficients, while the number of breaks, the breakpoints, and/or break sizes can differ across groups. We develop a hybrid procedure of the GFE estimator and AGFL to estimate the model. With the proposed procedure, we can 1) consistently estimate the latent group membership structure, 2) automatically determine the number of breaks and consistently estimate the breakpoints for each group, and 3) consistently estimate the regression coefficients with group-specific structural breaks.
	
	An interesting extension of our work would be to allow the grouped pattern to change at disjoint time intervals. Such a flexible framework is desirable because the impact of significant events may completely change the economic mechanisms governing individuals and their group memberships. For example, the impact of the most recent global financial crisis was so enormous that it reshaped economies and thus changed the group structure of the world. One of the main difficulties is to restrict appropriately the structural breaks, such that there are sufficient observations within each regime to estimate the grouped pattern. This presents a challenge for future research.

	\appendix
	
	\section{Technical appendix}
	
	This appendix provides the proofs of the technical results in the main text.
	First, we discuss the asymptotic properties of the GFE-type estimator in Section \ref{sec-gfe-proof}.
	Note that this estimator is the preliminary estimator for our GAGFL procedure.
	We then establish the asymptotic properties of the AGFL estimator applied to each group
	in Section \ref{sec-agfl-proof}.
	The asymptotic properties of GAGFL are discussed in \ref{sec-agafl-proof}.
	They are based on the results in Sections \ref{sec-gfe-proof} and \ref{sec-agfl-proof}
	
	\subsection{Asymptotic properties of the GFE-type estimator}
	\label{sec-gfe-proof}
	
	We extend the results in \citet{bonhomme&manresa2015} to models with group-specific and time-varying coefficients. The arguments given in this section are very similar to those in \citet{bonhomme&manresa2015}. We thus present only the results and defer the proofs of the lemmas to the supplement.

	We first introduce some useful lemmas. Denote $M$ as a generic universal constant. Let 
	$$
	\dot Q_{NT} (\beta, \gamma)
	= \frac{1}{NT}
	\sum_{i=1}^N \sum_{t=1}^T
	(y_{it}  - x_{it} ' \beta_{g_i ,t} )^2,
	$$
	and
	\begin{align*}
	\tilde Q_{NT} (\beta, \gamma)
	= \frac{1}{NT}
	\sum_{i=1}^N \sum_{t=1}^T
	( x_{it}' (\beta_{g_i^0, t}^0
	-  \beta_{g_i, t}))^2
	+ \frac{1}{NT}
	\sum_{i=1}^N \sum_{t=1}^T
	\epsilon_{it}^2.
	\end{align*}

	\begin{lemma}
		\label{lem-qd-uniform}
		Suppose that Assumptions \ref{a: basic}.\ref{as-compact}--\ref{as-ijcov-bound} hold. Then, we have
		\begin{align*}
		\sup_{(\beta, \gamma) \in \mathcal{B}^{GT} \times \Gamma_{G}}
		\left| \dot Q_{NT} (\beta, \gamma) - \tilde Q_{NT}
		(\beta, \gamma)
		\right|
		= o_p(1).
		\end{align*}
	\end{lemma}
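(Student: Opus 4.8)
\emph{Proof proposal.}
The plan is to reduce the statement to a uniform law of large numbers for a single cross term and then dispose of the supremum over group memberships by a combinatorial argument. Substituting the model $y_{it}=x_{it}'\beta^0_{g_i^0,t}+\epsilon_{it}$ into $\dot Q_{NT}$ and using $y_{it}-x_{it}'\beta_{g_i,t}=x_{it}'(\beta^0_{g_i^0,t}-\beta_{g_i,t})+\epsilon_{it}$, the squares of the prediction error and of $\epsilon_{it}$ reproduce exactly $\tilde Q_{NT}(\beta,\gamma)$, and what remains is
\[
\dot Q_{NT}(\beta,\gamma)-\tilde Q_{NT}(\beta,\gamma)=\frac{2}{NT}\sum_{i=1}^N\sum_{t=1}^T\epsilon_{it}\,x_{it}'\big(\beta^0_{g_i^0,t}-\beta_{g_i,t}\big).
\]
Hence it suffices to prove that this cross term is $o_p(1)$ uniformly over $(\beta,\gamma)\in\mathcal{B}^{GT}\times\Gamma_G$.

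Fix $\gamma$ and group the units by the pair $(g_i^0,g_i)$, obtaining at most $G^2$ cells $C_{g\tilde g}=\{i:g_i^0=g,\ g_i=\tilde g\}$ on which the coefficient difference equals the common vector $\beta^0_{g,t}-\beta_{\tilde g,t}$, whose norm is at most $M$ since $\beta^0_{g,t},\beta_{\tilde g,t}\in\mathcal{B}$ and $\mathcal{B}$ is compact (Assumption \ref{a: basic}.\ref{as-compact}). The cross term is therefore bounded in absolute value by $M(NT)^{-1}\sum_{g,\tilde g}\sum_t\|\sum_{i\in C_{g\tilde g}}\epsilon_{it}x_{it}\|$. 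Applying the Cauchy--Schwarz inequality first over $t$ and then over the (at most $G^2$) cells bounds this by $MG(N\sqrt T)^{-1}\big(\sum_{g,\tilde g}V(C_{g\tilde g})\big)^{1/2}$, where $V(S):=\sum_t\|\sum_{i\in S}\epsilon_{it}x_{it}\|^2$.

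The heart of the argument is controlling $\sum_{g,\tilde g}V(C_{g\tilde g})$ uniformly over $\gamma$. Writing $w_i:=(\epsilon_{i1}x_{i1}',\dots,\epsilon_{iT}x_{iT}')'$ so that $V(S)=\|\sum_{i\in S}w_i\|^2$, and using that $\{C_{g\tilde g}\}$ partitions $\{1,\dots,N\}$, one has $\sum_{g,\tilde g}V(C_{g\tilde g})=\sum_{i=1}^N\|w_i\|^2+\sum_{g,\tilde g}\sum_{i\neq j\in C_{g\tilde g}}w_i'w_j\le\sum_{i=1}^N\|w_i\|^2+\sum_{i\neq j}|w_i'w_j|$, since each ordered pair $(i,j)$ enters at most one within-cell sum; crucially the right-hand side no longer involves $\gamma$. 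Now $\sum_i\|w_i\|^2=\sum_{i,t}\epsilon_{it}^2x_{it}'x_{it}=O_p(NT)$ by the $i=j$ case of Assumption \ref{a: basic}.\ref{as-ij-bound}, and, writing each $w_i'w_j=\sum_t\epsilon_{it}\epsilon_{jt}x_{it}'x_{jt}$ and bounding $E|w_i'w_j|\le|E(w_i'w_j)|+\mathrm{Var}(w_i'w_j)^{1/2}$, summation over $i,j$ gives $E\sum_{i\neq j}|w_i'w_j|\le\sum_{i,j}\big|\sum_t E(\epsilon_{it}\epsilon_{jt}x_{it}'x_{jt})\big|+N\big(\sum_{i,j}\mathrm{Var}(\sum_t\epsilon_{it}\epsilon_{jt}x_{it}'x_{jt})\big)^{1/2}=O(NT)+O(N^2\sqrt T)$ by Assumptions \ref{a: basic}.\ref{as-ij-bound} and \ref{a: basic}.\ref{as-ijcov-bound}; hence $\sum_{i\neq j}|w_i'w_j|=O_p(NT+N^2\sqrt T)$ by Markov's inequality. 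Plugging these rates back, $\sup_{\beta,\gamma}|\dot Q_{NT}-\tilde Q_{NT}|\le MG(N\sqrt T)^{-1}\big(O_p(NT)+O_p(N^2\sqrt T)\big)^{1/2}=O_p(N^{-1/2})+O_p(T^{-1/4})=o_p(1)$ as $N,T\to\infty$.

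I expect the uniformity over the group assignment $\gamma$ to be the only genuinely subtle step: its resolution hinges on the observation that after the two Cauchy--Schwarz reductions the residual $\gamma$-dependence reduces to ``which pairs of units share a cell,'' which is dominated by the $\gamma$-free quantity $\sum_{i\neq j}|w_i'w_j|$, so that no maximal inequality over the exponentially many configurations of $\gamma$ is needed --- only the moment and covariance bounds already in Assumption \ref{a: basic}. The remaining point requiring care is arranging the inequalities so that both of the final rates, $N^{-1/2}$ and $T^{-1/4}$, vanish.
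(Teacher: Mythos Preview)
Your proposal is correct and follows essentially the same approach as the paper's proof (which, as stated in Section~\ref{sec-gfe-proof}, is deferred to the supplement and mirrors \citet{bonhomme&manresa2015}). The expansion into the single cross term, the partition into the $G^2$ cells $C_{g\tilde g}=\{i:g_i^0=g,\ g_i=\tilde g\}$, the removal of the $\beta$-dependence via compactness of $\mathcal{B}$ and Cauchy--Schwarz, and the key device of eliminating the $\gamma$-dependence by bounding $\sum_{g,\tilde g}\sum_{i\neq j\in C_{g\tilde g}}w_i'w_j\le\sum_{i\neq j}|w_i'w_j|$ and then invoking Assumptions \ref{a: basic}.\ref{as-ij-bound} and \ref{a: basic}.\ref{as-ijcov-bound} to control the mean and variance pieces, are exactly the steps in the Bonhomme--Manresa argument adapted to vector regressors; the resulting rates $O_p(N^{-1/2})+O_p(T^{-1/4})$ are also the same.
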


	We consider the following HD in $\mathcal{B}^{GT}$ such that
	\begin{align*}
	d_H ( \beta^a , \beta^b)
	= \max \left\{ \max_{g \in \mathbb{G}}
	\left( \min_{\tilde g \in \mathbb{G}}
	\frac{1}{T} \sum_{t=1}^T \left\Vert \beta_{\tilde g, t}^a - \beta_{g, t}^b \right\Vert^2
	\right),
	\max_{ \tilde g \in \mathbb{G}}
	\left( \min_{ g \in \mathbb{G}}
	\frac{1}{T} \sum_{t=1}^T \left\Vert \beta_{\tilde g, t}^a - \beta_{g, t}^b \right\Vert^2
	\right)
	\right\}.
	\end{align*}

	\begin{lemma}
		\label{lem-hd-consistent}
		Suppose that Assumptions \ref{a: basic}.\ref{as-compact}--\ref{as-ijcov-bound} and \ref{a: iden} hold. Then, we have $
		d_H ( \dot \beta , \beta^0) = o_p(1),
		$
		where $\dot \beta$ is defined in \eqref{eq-gfe}.
	\end{lemma}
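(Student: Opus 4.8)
The plan is to adapt the standard consistency argument for grouped estimators in \citet{bonhomme&manresa2015} to our time‑varying, group‑specific coefficients. First I would use that $(\dot\beta,\dot\gamma)$ minimizes $\dot Q_{NT}$ over $\mathcal{B}^{GT}\times\mathbb{G}^N$, so $\dot Q_{NT}(\dot\beta,\dot\gamma)\le\dot Q_{NT}(\beta^0,\gamma^0)$. Since the predicted‑value term vanishes at the truth, $\tilde Q_{NT}(\beta^0,\gamma^0)=(NT)^{-1}\sum_{i,t}\epsilon_{it}^2$, and combining the previous inequality with the uniform approximation $\sup_{(\beta,\gamma)}|\dot Q_{NT}-\tilde Q_{NT}|=o_p(1)$ from Lemma~\ref{lem-qd-uniform} yields
\begin{equation*}
0\le\frac{1}{NT}\sum_{i=1}^N\sum_{t=1}^T\bigl(x_{it}'(\beta^0_{g_i^0,t}-\dot\beta_{\dot g_i,t})\bigr)^2=o_p(1).
\end{equation*}

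Second, I would turn this prediction‑error bound into a bound on the coefficients via the eigenvalue condition in Assumption~\ref{a: iden}.\ref{as-m-eigen}. Fix a true group $g$ and let $\tilde g(g)$ be a value of $\tilde g\in\mathbb{G}$ maximizing $\hat\rho(\dot\gamma,g,\tilde g)$, so that $\hat\rho(\dot\gamma,g,\tilde g(g))\ge\min_{\gamma}\max_{\tilde g}\hat\rho(\gamma,g,\tilde g)>\hat\rho\to_p\rho>0$. Restricting the double sum above to units with $g_i^0=g$ and $\dot g_i=\tilde g(g)$ and recognizing it as $T$ times the quadratic form of $M(\dot\gamma,g,\tilde g(g))$ in the stacked vector with blocks $\beta^0_{g,t}-\dot\beta_{\tilde g(g),t}$, the minimum‑eigenvalue bound gives
\begin{equation*}
\hat\rho\cdot\frac{1}{T}\sum_{t=1}^T\bigl\Vert\beta^0_{g,t}-\dot\beta_{\tilde g(g),t}\bigr\Vert^2\le\frac{1}{NT}\sum_{i=1}^N\sum_{t=1}^T\bigl(x_{it}'(\beta^0_{g_i^0,t}-\dot\beta_{\dot g_i,t})\bigr)^2=o_p(1),
\end{equation*}
hence $T^{-1}\sum_t\|\beta^0_{g,t}-\dot\beta_{\tilde g(g),t}\|^2=o_p(1)$ for each of the finitely many $g$. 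This already controls the true‑to‑estimated one‑sided term in $d_H(\dot\beta,\beta^0)$, i.e.\ $\max_g\min_{\tilde g}T^{-1}\sum_t\|\dot\beta_{\tilde g,t}-\beta^0_{g,t}\|^2=o_p(1)$.

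Third, to control the estimated‑to‑true one‑sided term I would show that, with probability approaching one, the assignment $g\mapsto\tilde g(g)$ is a bijection of $\mathbb{G}$. If $\tilde g(g_1)=\tilde g(g_2)$ for some $g_1\neq g_2$, the triangle inequality and the previous step force $T^{-1}\sum_t\|\beta^0_{g_1,t}-\beta^0_{g_2,t}\|^2=o_p(1)$; as this is a deterministic sequence it would then be $o(1)$, and combining Cauchy--Schwarz with the compactness of $\mathcal{B}$ (Assumption~\ref{a: basic}.\ref{as-compact}) and the fourth‑moment bound on $x_{it}$ (Assumption~\ref{a: basic}.\ref{as-x-4moment}) would give $N^{-1}\sum_i D_{g_1 g_2 i}=o_p(1)$, contradicting the group‑separation condition $\plim_{N,T\to\infty}N^{-1}\sum_i D_{g_1 g_2 i}>c_{g_1,g_2}>0$ of Assumption~\ref{a: iden}.\ref{as-d-bound}. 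Since there are finitely many pairs, a union bound gives $\Pr(g\mapsto\tilde g(g)\text{ is a bijection})\to1$; on that event every estimated group $\tilde g$ equals $\tilde g(g)$ for a unique true $g$, so $\min_g T^{-1}\sum_t\|\dot\beta_{\tilde g,t}-\beta^0_{g,t}\|^2=o_p(1)$, and taking the maximum over $\tilde g$ as well as over $g$ of the two one‑sided terms yields $d_H(\dot\beta,\beta^0)=o_p(1)$.

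I expect the main obstacle to be the second step: correctly identifying the estimated group that ``absorbs'' each true group and controlling the quadratic form while $\dot\gamma$ is random, which is exactly where the block structure of $M(\gamma,g,\tilde g)$ and its uniform‑in‑$\gamma$ eigenvalue lower bound are essential; the remaining pieces are the Bonhomme--Manresa reasoning transcribed to our setting with time‑varying group‑specific coefficients.
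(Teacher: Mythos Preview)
Your argument is correct and follows essentially the same route as the paper, which in turn adapts \citet{bonhomme&manresa2015}: use Lemma~\ref{lem-qd-uniform} to get $\tilde Q_{NT}(\dot\beta,\dot\gamma)-\tilde Q_{NT}(\beta^0,\gamma^0)=o_p(1)$, rewrite the first term of $\tilde Q_{NT}$ as the double sum of quadratic forms in $M(\dot\gamma,g,\tilde g)$, and invoke Assumption~\ref{a: iden}.\ref{as-m-eigen} to bound it below by $\hat\rho\,\max_g\min_{\tilde g}T^{-1}\sum_t\|\beta^0_{g,t}-\dot\beta_{\tilde g,t}\|^2$; then use the separation condition in Assumption~\ref{a: iden}.\ref{as-d-bound} to upgrade the one-sided closeness to a permutation. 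The paper's proof (deferred to the supplement) records exactly this inequality, which it later reuses verbatim in the proof of Lemma~\ref{lem-hd-consistent-h}.

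One small point: in your Step~3 you invoke Assumption~\ref{a: basic}.\ref{as-x-4moment} to pass from $T^{-1}\sum_t\|\beta^0_{g_1,t}-\beta^0_{g_2,t}\|^2=o(1)$ to $N^{-1}\sum_i D_{g_1 g_2 i}=o_p(1)$, but the lemma as stated only assumes \ref{a: basic}.\ref{as-compact}--\ref{as-ijcov-bound}. This is harmless for the overall development (the fourth-moment bound is assumed in all downstream results), but be aware that the bijection step does require a moment control on $x_{it}$ beyond what the lemma's hypotheses literally list. Also, tighten the phrasing in Step~3: rather than saying the event $\{\tilde g(g_1)=\tilde g(g_2)\}$ ``forces'' a deterministic sequence to be $o_p(1)$, argue directly that Assumption~\ref{a: iden}.\ref{as-d-bound} together with your moment bound implies $\liminf_T T^{-1}\sum_t\|\beta^0_{g_1,t}-\beta^0_{g_2,t}\|^2>0$, whence $\Pr(\tilde g(g_1)=\tilde g(g_2))\to 0$ by the triangle inequality and the $o_p(1)$ one-sided bound.
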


	The proof of Lemma \ref{lem-hd-consistent} shows that there exists a permutation $\sigma$ such that \\ $ \sum_{t=1}^T \left\Vert \beta_{ \sigma (g), t }^0 - \dot \beta_{g, t} \right\Vert^2 /T
	= o_p(1)$.
	We obtain $\sigma (g) =g$ by relabeling.

	Define
	$
	\mathcal{N}_{\eta}
	= \left\{ \beta \in \mathcal{B}^{GT}
	: 1/T \sum_{t=1}^T \left\Vert \beta_{g,t}^0 - \beta_{g,t} \right\Vert^2 < \eta,
	\forall g \in \mathbb{G}
	\right\}.
	$
	Let
	\begin{align}
	\hat g_i (\beta)
	= \arg \min_{g \in \mathbb{G}} \sum_{t=1}^T
	(y_{it}  - x_{it} ' \beta_{g ,t} )^2.
	\label{eq-hatg-def}
	\end{align}

	\begin{lemma}
		\label{lem-g-consistent}
		Suppose that Assumptions \ref{a: iden}.\ref{as-d-bound} and \ref{a: tail} are satisfied.
		For $\eta > 0$ small enough, we have $\forall \delta > 0$,
		$
		\sup_{\beta \in \mathcal{N}_{\eta}}
		1/N \sum_{i=1}^N
		\mathbf{1}\left\{ \hat g_i (\beta) \neq g_i^0 \right\}
		= o_p ( T^{-\delta}).
		$
	\end{lemma}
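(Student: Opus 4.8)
The plan is to mimic the misclassification bound of \citet[Appendix~B]{bonhomme&manresa2015}, the new ingredient being that the bound must hold uniformly over $\beta$ in the shrinking neighborhood $\mathcal{N}_\eta$. First I would observe that, by the definition \eqref{eq-hatg-def} of $\hat g_i(\beta)$, the event $\{\hat g_i(\beta)\neq g_i^0\}$ forces the existence of some $g\neq g_i^0$ with $\sum_{t}(y_{it}-x_{it}'\beta_{g,t})^2\le \sum_{t}(y_{it}-x_{it}'\beta_{g_i^0,t})^2$. Substituting $y_{it}=x_{it}'\beta_{g_i^0,t}^0+\epsilon_{it}$ and rearranging, this inequality is equivalent to
\begin{align*}
\frac1T\sum_{t=1}^T\bigl(x_{it}'(\beta_{g_i^0,t}^0-\beta_{g,t})\bigr)^2
\le \frac1T\sum_{t=1}^T\bigl(x_{it}'(\beta_{g_i^0,t}^0-\beta_{g_i^0,t})\bigr)^2
+\frac2T\sum_{t=1}^T\epsilon_{it}\,x_{it}'(\beta_{g,t}-\beta_{g_i^0,t}).
\end{align*}
So it suffices to show that, with probability $1-O(T^{-\delta})$ uniformly in $i$, this inequality fails for every $\beta\in\mathcal{N}_\eta$ and every $g\ne g_i^0$ once $\eta$ is small.

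Next I would bound the three terms, using throughout that the constraint $\beta\in\mathcal{N}_\eta$ contributes factors of order $\eta^{1/2}$ while the remaining randomness is controlled by Assumption~\ref{a: tail}. For the left-hand side, writing $\beta_{g_i^0,t}^0-\beta_{g,t}=(\beta_{g_i^0,t}^0-\beta_{g,t}^0)+(\beta_{g,t}^0-\beta_{g,t})$ and using $(a+b)^2\ge a^2/2-b^2$ gives a lower bound $\tfrac12 D_{g_i^0,g,i}-C\eta^{1/2}$, where the $\eta^{1/2}$ term uses Cauchy--Schwarz, boundedness of the parameter space, and the high-probability moment bound $T^{-1}\sum_t\|x_{it}\|^4\le M_x^*$ from Assumption~\ref{a: tail}.\ref{as-x4-tail}. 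Combining the pointwise-in-probability separation condition of Assumption~\ref{a: iden}.\ref{as-d-bound} with an exponential inequality applied to $D_{g_i^0,g,i}-E(D_{g_i^0,g,i})$ — legitimate since $\{x_{it}'(\beta_{g,t}^0-\beta_{\tilde g,t}^0)\}_t$ is strongly mixing with exponential tails by Assumptions~\ref{a: tail}.\ref{as-xb-mixing}--\ref{as-xb-tail}, hence so is its square — upgrades this to $D_{g_i^0,g,i}\ge c_{g_i^0,g}$ with probability $1-O(T^{-\delta})$. On the right-hand side the first term is $\le C'\eta^{1/2}$ on the event $T^{-1}\sum_t\|x_{it}\|^4\le M_x^*$; in the cross term I would again split $\beta_{g,t}-\beta_{g_i^0,t}$ into the ``true'' difference $\beta_{g,t}^0-\beta_{g_i^0,t}^0$ plus two $\mathcal{N}_\eta$-deviations, bounding the deviation contributions by $C''\eta^{1/2}$ via Cauchy--Schwarz and Assumption~\ref{a: tail}.\ref{as-ex2-tail}, and bounding the leading piece $|2T^{-1}\sum_t\epsilon_{it}x_{it}'(\beta_{g,t}^0-\beta_{g_i^0,t}^0)|$ by $O(T^{-\delta})$ with probability $1-O(T^{-\delta})$ using the exponential inequality of \citet[Lemma~B.5]{bonhomme&manresa2015} (which rests on \citet{rio2017asymptotic}), since by Assumptions~\ref{a: tail}.\ref{as-xb-mixing}--\ref{as-xb-tail} this sequence is zero-mean, strongly mixing, and exponential-tailed; the bound is uniform over the finitely many pairs $(g,\tilde g)$ and, by the $\sup_{1\le i\le N}$ in Assumption~\ref{a: tail}, over $i$.

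Assembling these, choose $\eta$ so small that $(C+C'+C'')\eta^{1/2}<\underline c/4$ with $\underline c=\min_{g\ne\tilde g}c_{g,\tilde g}>0$; then on the intersection of the displayed high-probability data events the inequality in the first step would require $\underline c/4\le O(T^{-\delta})$, impossible for $T$ large. Hence $\Pr(\exists\,\beta\in\mathcal{N}_\eta:\hat g_i(\beta)\neq g_i^0)=O(T^{-\delta})$ uniformly in $i$, so $E\bigl[\sup_{\beta\in\mathcal{N}_\eta}N^{-1}\sum_i\mathbf{1}\{\hat g_i(\beta)\ne g_i^0\}\bigr]=O(T^{-\delta})$ and the claim follows by Markov's inequality. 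The main obstacle is precisely the uniformity in $\beta$: the device I would rely on — which makes a covering argument over $\mathcal{N}_\eta$ unnecessary — is that every term above is controlled either by the neighborhood radius $\eta$ alone or by data-only events, so the exceptional null set can be taken independent of $\beta$. A secondary point requiring care is extracting the $T^{-\delta}$ rate (not mere $o(1)$) from the $\plim$-form separation condition, which is why the exponential concentration of Assumption~\ref{a: tail} is invoked even for the $D_{g_i^0,g,i}$ term.
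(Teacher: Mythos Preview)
Your proposal is correct and follows essentially the same route as the paper, which (as stated in Appendix~\ref{sec-gfe-proof}) defers the argument to the supplement and indicates it closely parallels \citet[Appendix~B]{bonhomme&manresa2015}: decompose the misclassification event into a separation term $D_{g_i^0,g,i}$, an $\eta$-controlled deviation, and a cross term handled by the exponential inequality of their Lemma~B.5 under the mixing and tail conditions of Assumption~\ref{a: tail}. Your emphasis on the fact that all exceptional sets are data-only (hence $\beta$-independent) is exactly the device that delivers the uniformity over $\mathcal{N}_\eta$; one small clarification is that for the leading cross term you only need it below a \emph{fixed} fraction of $\underline c$ with probability $1-O(T^{-\delta})$, not that the term itself be $O(T^{-\delta})$, and your use of compactness of $\mathcal{B}$ (Assumption~\ref{a: basic}.\ref{as-compact}) to turn the $\ell_2$-neighborhood into $\eta^{1/2}$ bounds is standard and implicitly relied upon in the paper as well.
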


	Let	
	$
	\check \beta
	= \arg\min_{\beta \in \mathcal{B}^{GT}}
	\sum_{i=1}^N \sum_{t=1}^T
	(y_{it}  - x_{it} ' \beta_{g_i^0 ,t} )^2.
	$
	Note that $\check \beta$ is the estimator of $\beta$ when the group memberships (i.e., $\gamma^0$) are known.

	\begin{lemma}
		\label{lem-cb}
		Suppose that Assumptions \ref{a: basic} and \ref{a: iden}.\ref{as-m-eigen} hold. Suppose that $N_g /N \to \pi_g > 0$ for any $g \in \mathbb{G}$.
		Then, it follows that for all $g$ and $t$,
		$
			\check \beta_{g,t} - \beta_{g,t}^0 = O_p\left(1/\sqrt{N}\right).
		$
	\end{lemma}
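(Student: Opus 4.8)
The plan is to exploit that the least-squares objective defining $\check\beta$ is fully separable across group--period pairs: since $\beta_{g,t}$ for distinct $(g,t)$ never interact, $\sum_{i}\sum_{t}(y_{it}-x_{it}'\beta_{g_i^0,t})^2=\sum_{g\in\mathbb G}\sum_{t=1}^{T}\sum_{i:g_i^0=g}(y_{it}-x_{it}'\beta_{g,t})^2$, so $\check\beta_{g,t}$ is just the $\mathcal B$-constrained least-squares fit of $\{y_{it}\}_{i:g_i^0=g}$ on $\{x_{it}\}_{i:g_i^0=g}$. I would first run the standard basic-inequality argument: from $\dot Q_{NT}(\check\beta,\gamma^0)\le\dot Q_{NT}(\beta^0,\gamma^0)$ together with $y_{it}=x_{it}'\beta_{g_i^0,t}^0+\epsilon_{it}$, writing $\delta_{g,t}=\check\beta_{g,t}-\beta_{g,t}^0$, one gets $\tfrac1{NT}\sum_i\sum_t(x_{it}'\delta_{g_i^0,t})^2\le\tfrac2{NT}\sum_i\sum_t\epsilon_{it}\,x_{it}'\delta_{g_i^0,t}$.

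Next I would bound the two sides. Regrouping by $(g,t)$ gives $\tfrac1{NT}\sum_i\sum_t(x_{it}'\delta_{g_i^0,t})^2=\tfrac1T\sum_{g,t}\delta_{g,t}'\big(\tfrac1N\sum_{i:g_i^0=g}x_{it}x_{it}'\big)\delta_{g,t}$. Applying Assumption~\ref{a: iden}.\ref{as-m-eigen} at $\gamma=\gamma^0$ — for which $M(\gamma^0,g,\tilde g)=0$ whenever $\tilde g\ne g$, so $\max_{\tilde g}\hat\rho(\gamma^0,g,\tilde g)=\hat\rho(\gamma^0,g,g)=\min_{1\le t\le T}\lambda_{\min}\big(\tfrac1N\sum_{i:g_i^0=g}x_{it}x_{it}'\big)$ — shows each of these Gram matrices has smallest eigenvalue at least $\hat\rho\to_p\rho>0$ with probability approaching one, so the left side is at least $\tfrac{\hat\rho}{T}\sum_{g,t}\|\delta_{g,t}\|^2$. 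For the right side, put $\bar u_{g,t}=\tfrac1N\sum_{i:g_i^0=g}\epsilon_{it}x_{it}$ and use Cauchy--Schwarz in the $(g,t)$ index: $\tfrac2{NT}\sum_i\sum_t\epsilon_{it}x_{it}'\delta_{g_i^0,t}=\tfrac2T\sum_{g,t}\bar u_{g,t}'\delta_{g,t}\le\tfrac2T\big(\sum_{g,t}\|\bar u_{g,t}\|^2\big)^{1/2}\big(\sum_{g,t}\|\delta_{g,t}\|^2\big)^{1/2}$. Combining yields $\sum_{g,t}\|\delta_{g,t}\|^2\le(4/\hat\rho^2)\sum_{g,t}\|\bar u_{g,t}\|^2$ on the event $\{\hat\rho>\rho/2\}$. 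A direct second-moment computation gives $E\sum_{g,t}\|\bar u_{g,t}\|^2=\tfrac1{N^2}\sum_{i,j:\,g_i^0=g_j^0}\sum_t E(\epsilon_{it}\epsilon_{jt}x_{it}'x_{jt})\le\tfrac TN\cdot\tfrac1N\sum_{i,j}\big|\tfrac1T\sum_t E(\epsilon_{it}\epsilon_{jt}x_{it}'x_{jt})\big|\le MT/N$ by Assumption~\ref{a: basic}.\ref{as-ij-bound}, so Markov's inequality gives $\sum_{g,t}\|\bar u_{g,t}\|^2=O_p(T/N)$ and hence $\tfrac1T\sum_{g,t}\|\check\beta_{g,t}-\beta_{g,t}^0\|^2=O_p(1/N)$.

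To upgrade this averaged bound to the pointwise statement, I would note it already implies $\check\beta_{g,t}$ is consistent, hence — since $\beta_{g,t}^0$ is interior to $\mathcal B$ — eventually an interior minimizer, so the first-order condition gives the explicit form $\check\beta_{g,t}-\beta_{g,t}^0=\big(\tfrac1N\sum_{i:g_i^0=g}x_{it}x_{it}'\big)^{-1}\bar u_{g,t}$. The inverse Gram matrix is $O_p(1)$ by the eigenvalue bound above (with $N_g/N\to\pi_g>0$), and $\bar u_{g,t}=O_p(N^{-1/2})$ from the single-period moment bound $E\|\bar u_{g,t}\|^2=\tfrac1{N^2}\sum_{i,j:\,g_i^0=g_j^0=g}E(\epsilon_{it}\epsilon_{jt}x_{it}'x_{jt})=O(1/N)$, the per-period counterpart of Assumption~\ref{a: basic}.\ref{as-ij-bound} (automatic in the leading i.i.d.-over-time case noted after Assumption~\ref{a: basic}). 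This delivers $\check\beta_{g,t}-\beta_{g,t}^0=O_p(1/\sqrt N)$ for every $g$ and $t$. The main obstacle is precisely this last passage from a time-averaged bound to a fixed-period bound: Assumption~\ref{a: basic}.\ref{as-ij-bound} is stated as a time average, so one must either invoke its natural single-period version or, alternatively, control $\max_t$ of the per-period errors via a tail/union-bound argument; everything else (separability, the basic inequality, the eigenvalue lower bound, Cauchy--Schwarz, Markov) is routine.
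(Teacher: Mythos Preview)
Your argument is essentially the standard one and is in line with what the paper does (the paper defers the detailed proof to the supplement, but its treatment of the closely related Lemma~\ref{lem-db-cb-h} in Appendix~\ref{sec-agafl-proof} uses exactly the same ingredients: separability across $(g,t)$, the block-diagonal matrix $M(\gamma^0,g,g)$, and the eigenvalue lower bound from Assumption~\ref{a: iden}.\ref{as-m-eigen}). The route via the basic inequality to the averaged bound $\tfrac1T\sum_{g,t}\|\check\beta_{g,t}-\beta_{g,t}^0\|^2=O_p(1/N)$, followed by the explicit OLS formula for the pointwise rate, is correct and is how this type of result is obtained in \citet{bonhomme&manresa2015} and \citet{qian&su2016}. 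Your computation $E\sum_{g,t}\|\bar u_{g,t}\|^2\le MT/N$ from Assumption~\ref{a: basic}.\ref{as-ij-bound} is correct because the indicator $\mathbf 1\{g_i^0=g_j^0\}$ does not depend on $t$, so the time sum factors out before one takes absolute values.

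You have also correctly flagged the one genuine subtlety: to pass from the time-averaged bound to the fixed-$(g,t)$ rate via the explicit formula $\check\beta_{g,t}-\beta_{g,t}^0=(N^{-1}\sum_{i:g_i^0=g}x_{it}x_{it}')^{-1}\bar u_{g,t}$, one needs a \emph{per-period} version of Assumption~\ref{a: basic}.\ref{as-ij-bound}, namely $\tfrac1N\sum_{i,j:g_i^0=g_j^0=g}E(\epsilon_{it}\epsilon_{jt}x_{it}'x_{jt})=O(1)$ for each fixed $t$, which is not literally implied by the time-averaged statement. This is not a flaw in your reasoning but a (minor) gap between the stated assumptions and what is actually needed; the paper's own discussion after Assumption~\ref{a: basic} makes clear that the intended primitive conditions (e.g., i.i.d.\ over time with finite fourth moments) deliver the per-period bound, and the supplement presumably invokes it implicitly. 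Your identification of this point is accurate and appropriately cautious.
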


	\begin{lemma}
		\label{lem-db-cb}
		Suppose that Assumptions \ref{a: basic}, \ref{a: iden},
		and \ref{a: tail} are satisfied. As $N,T \to \infty$, for any $\delta >0$,
		it holds that
		$
		\dot \beta_{g,t} = \check \beta_{g,t} + o_p (T^{-\delta}),
		$
		for all $g$ and $t$.
	\end{lemma}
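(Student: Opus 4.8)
The plan is to exploit the fact that the only source of discrepancy between $\dot\beta$ and $\check\beta$ is the estimated labelling. Given any $\gamma$, the objective in \eqref{eq-gfe} is separable across $i$, so the joint minimizer satisfies $\dot g_i=\hat g_i(\dot\beta)$ with $\hat g_i(\cdot)$ as in \eqref{eq-hatg-def}, and for each $(g,t)$ the minimizer $\dot\beta_{g,t}$ is the ordinary least squares coefficient of $y_{it}$ on $x_{it}$ over $\{i:\dot g_i=g\}$; likewise $\check\beta_{g,t}$ is the OLS coefficient over $\{i:g_i^0=g\}$. Both are genuine interior OLS estimates with probability approaching one, since $\check\beta_{g,t}\to_p\beta^0_{g,t}$ by Lemma~\ref{lem-cb}, $\dot\beta$ is consistent by Lemma~\ref{lem-hd-consistent}, and $\beta^0$ lies in the interior of $\mathcal{B}$. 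The first input I would record is the misclassification rate: after the relabelling produced in the proof of Lemma~\ref{lem-hd-consistent}, $\dot\beta\in\mathcal{N}_\eta$ with probability approaching one, so Lemma~\ref{lem-g-consistent} gives, for every $\delta>0$, $\bar p_{NT}:=N^{-1}\sum_{i=1}^N\mathbf{1}\{\dot g_i\neq g_i^0\}=o_p(T^{-\delta})$.

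Next I would set up the algebra. Writing $A_{g,t}=N^{-1}\sum_{i:\dot g_i=g}x_{it}x_{it}'$, $b_{g,t}=N^{-1}\sum_{i:\dot g_i=g}x_{it}y_{it}$, and $A^0_{g,t},b^0_{g,t}$ for the analogous sums over $\{i:g_i^0=g\}$, one has the identity
\begin{align*}
\dot\beta_{g,t}-\check\beta_{g,t}
= A_{g,t}^{-1}\bigl(b_{g,t}-b^0_{g,t}\bigr)
- A_{g,t}^{-1}\bigl(A_{g,t}-A^0_{g,t}\bigr)\check\beta_{g,t}.
\end{align*}
Since $\lvert\mathbf{1}\{\dot g_i=g\}-\mathbf{1}\{g_i^0=g\}\rvert\le\mathbf{1}\{\dot g_i\neq g_i^0\}$, $y_{it}x_{it}=x_{it}x_{it}'\beta^0_{g_i^0,t}+\epsilon_{it}x_{it}$, and $\|\beta^0_{g,t}\|\le M$ on the compact $\mathcal{B}$, the quantities to control are $\|A_{g,t}-A^0_{g,t}\|\le R^{(1)}_{g,t}$, $\|b_{g,t}-b^0_{g,t}\|\le M R^{(1)}_{g,t}+R^{(2)}_{g,t}$, and $\|A_{g,t}^{-1}\|$, where $R^{(1)}_{g,t}=N^{-1}\sum_i\mathbf{1}\{\dot g_i\neq g_i^0\}\|x_{it}\|^2$ and $R^{(2)}_{g,t}=N^{-1}\sum_i\mathbf{1}\{\dot g_i\neq g_i^0\}\|\epsilon_{it}x_{it}\|$; one also needs $\|\check\beta_{g,t}\|=O_p(1)$, which is immediate from Lemma~\ref{lem-cb} and compactness.

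The core of the argument is to bound $R^{(1)}_{g,t}$ and $R^{(2)}_{g,t}$ uniformly in $t$ (there are finitely many $g$). By Cauchy--Schwarz, $\sup_t R^{(1)}_{g,t}\le\bar p_{NT}^{1/2}\bigl(\sup_t N^{-1}\sum_i\|x_{it}\|^4\bigr)^{1/2}$ and $\sup_t R^{(2)}_{g,t}\le\bar p_{NT}^{1/2}\bigl(\sup_t N^{-1}\sum_i\|\epsilon_{it}x_{it}\|^2\bigr)^{1/2}$. Bounding the time-supremum crudely by the time-sum, $\sup_t N^{-1}\sum_i\|x_{it}\|^4\le T\cdot(NT)^{-1}\sum_{i,t}\|x_{it}\|^4=O_p(T)$ by Assumption~\ref{a: basic}.\ref{as-x-4moment} and Markov, and similarly $\sup_t N^{-1}\sum_i\|\epsilon_{it}x_{it}\|^2=O_p(T)$ using the diagonal term of Assumption~\ref{a: basic}.\ref{as-ij-bound}. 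Hence $\sup_t R^{(1)}_{g,t}$ and $\sup_t R^{(2)}_{g,t}$ equal $\bar p_{NT}^{1/2}O_p(T^{1/2})$, which, because $\bar p_{NT}=o_p(T^{-\delta})$ for arbitrary $\delta$, is $o_p(T^{-\delta'})$ for every $\delta'>0$. For the inverse, taking $\gamma=\gamma^0$ and $\tilde g=g$ in Assumption~\ref{a: iden}.\ref{as-m-eigen} shows $\lambda_{\min}(A^0_{g,t})\ge\hat\rho$ for all $t$ with $\hat\rho\to_p\rho>0$, so combined with $\sup_{g,t}\|A_{g,t}-A^0_{g,t}\|=o_p(1)$ a standard perturbation bound gives $\sup_{g,t}\|A_{g,t}^{-1}\|=O_p(1)$. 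Substituting into the displayed identity yields $\sup_{g,t}\|\dot\beta_{g,t}-\check\beta_{g,t}\|=O_p(1)\cdot o_p(T^{-\delta'})=o_p(T^{-\delta'})$ for every $\delta'>0$, which is the claim.

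The main obstacle is precisely this final bookkeeping step: one must verify that the polynomial-in-$T$ growth of the raw moment sums—unavoidable because Assumption~\ref{a: basic} controls only averaged, not uniform-in-$(i,t)$, moments—is dominated by the faster-than-any-polynomial decay of the misclassification frequency from Lemma~\ref{lem-g-consistent}. This is exactly where the "super-consistency" of group membership estimation is essential, and it is also why every bound must be kept uniform over $t$ rather than merely pointwise, since the number of time periods diverges.
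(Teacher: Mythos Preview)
Your argument is correct. The key input—Lemma~\ref{lem-g-consistent} applied at $\dot\beta$ once it has been shown to lie in $\mathcal{N}_\eta$—is the same one the paper uses, and your handling of the uniform-in-$t$ bounds (absorbing the crude $O_p(T^{1/2})$ factor into the super-polynomial decay of $\bar p_{NT}$) is exactly the right bookkeeping. Your use of Assumption~\ref{a: iden}.\ref{as-m-eigen} at $\gamma=\gamma^0$ is also fine: for $\tilde g\neq g$ the matrix $M(\gamma^0,g,\tilde g)$ vanishes, so the maximum over $\tilde g$ is attained at $\tilde g=g$ and delivers $\lambda_{\min}(A^0_{g,t})\ge\hat\rho$ for every $t$.

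The route, however, differs from the paper's. The paper (following \citet{bonhomme&manresa2015}) works at the level of objective functions rather than closed-form OLS solutions: it shows that $\check Q(\beta)-\dot Q(\beta)=o_p(T^{-\delta})$ when evaluated at $\dot\beta$ and at $\check\beta$, sandwiches to obtain $\check Q(\dot\beta)-\check Q(\check\beta)=o_p(T^{-\delta})$, and then expands $\check Q$ quadratically around its minimizer $\check\beta$ (the first-order term vanishing by the normal equations) to extract $\|\dot\beta_{g,t}-\check\beta_{g,t}\|$ via the eigenvalue bound in Assumption~\ref{a: iden}.\ref{as-m-eigen}. Your approach is more elementary—direct matrix perturbation of the explicit OLS formula—and arguably cleaner for this particular lemma. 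The paper's approach buys generality: because it never invokes a closed-form solution, the same argument transfers verbatim to the penalized objective in the proof of Lemma~\ref{lem-db-cb-h}, where no explicit formula for $\mathring\beta$ is available. Indeed, that proof explicitly cites ``the proof of Lemma~\ref{lem-db-cb}'' for the bound $\mathring Q(\hat\beta)-\hat Q(\hat\beta)=\check Q(\hat\beta)-\dot Q(\hat\beta)=o_p(T^{-\delta})$, which your OLS-based decomposition does not directly supply.
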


	We can now consider the rate of convergence of the elements of $\dot \beta$.

	\begin{proof}[Proof of Theorem \ref{thm-gfe-ad}]
	
		The theorem follows from Lemma \ref{lem-cb} and Lemma \ref{lem-db-cb}.
	\end{proof}

	\subsection{Asymptotic properties of the GAGFL estimator with known group membership}
	\label{sec-agfl-proof}

	The arguments here are very similar to those in \citet{qian&su2016}. While their results do not cover our case, our setting is indeed simpler because we do not have individual-specific intercepts. We thus present only the results and leave the proofs to the supplement.
	
	Let
	\begin{align*}
	\mathring Q (\beta)
	= \frac{1}{NT}\sum_{i=1}^N \sum_{t=1}^T
	(y_{it}  - x_{it} ' \beta_{g_i^0 ,t} )^2
	+
	\lambda  \sum_{g\in \mathbb{G}}
	\sum_{t=2}^T \dot w_{g,t} \left\Vert \beta_{g, t} - \beta_{g, t-1}
	\right\Vert .
	\end{align*}
	Note that $\mathring Q(\beta) = \hat Q (\beta, \gamma^0)$
	and that $\mathring \beta = \arg \min_{\beta \in \mathcal{B}^{GT}} \mathring Q (\beta)$.
	We derive the asymptotic distribution of $\mathring \beta$.

	\begin{lemma}
		\label{lem-lasso-c}
		Suppose that Assumptions \ref{a: basic}.\ref{as-ij-bound}, \ref{a: iden}.\ref{as-m-eigen}, and \ref{a: break}.\ref{as-jmin} hold. Suppose that $N_g /N \to \pi_g > 0$ for any $g \in \mathbb{G}$.
		We have, as $N,T \to \infty$,
		$
		\frac{1}{T} \left\| \mathring \beta_g - \beta_g^0 \right\|^2= O_p \left(1/N\right)
		$
		for any $g\in \mathbb{G}$.
		We also have, as $N,T \to \infty$,
		$
		\mathring \beta_{g,t} - \beta_{g,t}^0 = O_p \left(1/\sqrt{N}\right).
		$
		
	\end{lemma}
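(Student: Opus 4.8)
The plan is to adapt the argument of \citet{qian&su2016} (Theorem~3.2) to the present setting, which is simpler because there are no individual-specific intercepts; the two displayed claims follow from a basic-inequality step and a first-order-condition step, respectively.

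\textbf{$L^2$-consistency.} Since $\mathring\beta$ minimises $\mathring Q$, we have $\mathring Q(\mathring\beta)\le\mathring Q(\beta^0)$. Writing $y_{it}=x_{it}'\beta_{g_i^0,t}^0+\epsilon_{it}$, expanding the quadratic part, and setting $\Delta_{g,t}=\mathring\beta_{g,t}-\beta_{g,t}^0$, $\theta_{g,t}=\beta_{g,t}-\beta_{g,t-1}$, this gives
\begin{align*}
\frac1{NT}\sum_{i=1}^N\sum_{t=1}^T\bigl(x_{it}'\Delta_{g_i^0,t}\bigr)^2
\le \frac2{NT}\sum_{i=1}^N\sum_{t=1}^T\epsilon_{it}x_{it}'\Delta_{g_i^0,t}
+\lambda\sum_{g\in\mathbb{G}}\sum_{t=2}^T\dot w_{g,t}\bigl(\|\theta_{g,t}^0\|-\|\mathring\theta_{g,t}\|\bigr).
\end{align*}
For the stochastic term I would group the double sum by $g$, apply the Cauchy--Schwarz inequality in $t$, and use Assumptions~\ref{a: basic}.\ref{as-exogenous} and \ref{a: basic}.\ref{as-ij-bound} to obtain $\tfrac1T\sum_{t}\bigl\|\tfrac1N\sum_{g_i^0=g}\epsilon_{it}x_{it}\bigr\|^2=O_p(1/N)$, so that this term is $O_p(N^{-1/2})\sum_g b_g$ with $b_g^2=\tfrac1T\|\Delta_g\|^2$. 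For the penalty term, the no-break dates ($\theta_{g,t}^0=0$) contribute a non-positive amount, while on the $m_g^0$ true break dates the triangle inequality bounds the contribution by $\dot w_{g,t}(\|\Delta_{g,t}\|+\|\Delta_{g,t-1}\|)$; since Theorem~\ref{thm-gfe-ad} makes $\dot w_{g,t}=O_p(J_{\min}^{-\kappa})$ at true break dates, a further Cauchy--Schwarz over the (at most $2m_g^0$) terms together with Assumption~\ref{a: break}.\ref{as-jmin} turns it into $o_p(N^{-1/2})b_g$. Bounding the left-hand side below via the minimum-eigenvalue condition of Assumption~\ref{a: iden}.\ref{as-m-eigen} applied with $\gamma=\gamma^0$ and $\tilde g=g$ (so that $M(\gamma^0,g,g)$ supplies the relevant block-diagonal Gram structure), I arrive at $\hat\rho\sum_g b_g^2\le O_p(N^{-1/2})\sum_g b_g$; since $G$ is fixed this forces $\sum_g b_g^2=O_p(1/N)$, hence $\tfrac1T\|\mathring\beta_g-\beta_g^0\|^2=O_p(1/N)$ for every $g$.

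\textbf{Pointwise rate.} For this I would use the first-order (KKT) conditions of the penalized problem. The stationarity condition for $\mathring\beta_{g,t}$ relates $\bigl(\tfrac1N\sum_{g_i^0=g}x_{it}x_{it}'\bigr)\Delta_{g,t}$ to the empirical moment $\tfrac1N\sum_{g_i^0=g}x_{it}\epsilon_{it}=O_p(N^{-1/2})$ plus $\lambda$ times subgradients of the fused-Lasso penalty evaluated at $t$ and $t+1$. Summing this identity over $t$ in a single \emph{true} regime $[T_{g,j}^0,T_{g,j+1}^0)$, the penalty subgradients telescope and only boundary terms at the true break dates survive; at those dates the weights are $O_p(J_{\min}^{-\kappa})$, so by Assumption~\ref{a: break}.\ref{as-jmin} the surviving terms are $o_p(N^{-1/2})$. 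Combining this regime-aggregated identity with the $L^2$ rate just established (which controls the within-regime dispersion of $\mathring\beta$) and the no-multicollinearity condition restricted to the regime yields $\mathring\beta_{g,t}-\beta_{g,t}^0=O_p(N^{-1/2})$ for every $t$, the argument being identical in structure to the proof of Theorem~3.2(ii) of \citet{qian&su2016}.

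\textbf{Main obstacle.} The delicate point is the pointwise step: the adaptive weights $\dot w_{g,t}$ blow up at dates without a break, so a per-period bound read off directly from the first-order condition is useless there. The device that rescues the argument is to sum the first-order conditions over an entire true regime, so that those large interior weights cancel in the telescoping sum, leaving only controllable boundary terms, and then to exploit the $L^2$ bound to pass from the regime-averaged statement back to the per-period statement.
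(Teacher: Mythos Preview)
Your proposal is correct and takes essentially the same approach as the paper: both adapt Theorem~3.2 of \citet{qian&su2016} to the present simpler setting without individual intercepts, obtaining the $L^2$ rate from the basic inequality (with the quadratic form bounded below via Assumption~\ref{a: iden}.\ref{as-m-eigen} and the penalty difference controlled through Assumption~\ref{a: break}.\ref{as-jmin}) and the pointwise rate from the KKT conditions. Your identification of the main obstacle---that the adaptive weights $\dot w_{g,t}$ diverge at non-break dates so a naive per-period reading of the stationarity condition fails---and of the remedy via telescoping the subgradient terms is exactly the mechanism underlying the Qian--Su argument the paper invokes.
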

	
	Next, we consider the difference between coefficient estimates in two consecutive periods. Let $\mathring \theta_{g,1} = \mathring \beta_{g,1} $ and $\mathring \theta_{g,t} = \mathring \beta_{g,t} - \mathring \beta_{g, t-1}$.
	
	\begin{lemma}
		\label{lem-break-c}
		Suppose that Assumptions \ref{a: basic}.\ref{as-ij-bound}, \ref{a: iden}.\ref{as-m-eigen}, \ref{a: basic}.\ref{as-x-4moment}, and \ref{a: break} hold. Suppose that $N_g /N \to \pi_g > 0$ for any $g \in \mathbb{G}$.
		It follows that
		$
		\Pr \left( \left\| \mathring \theta_{g,t} \right\| = 0, \forall t \in \mathcal{T}_{m_g^0,g}^{0c}, g \in \mathbb{G}\right) \to 1
		$
		as $N \to \infty$.
	\end{lemma}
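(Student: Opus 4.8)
The plan is to argue by contradiction through the Karush--Kuhn--Tucker (KKT) characterization of the minimizer $\mathring\beta$ of the penalized criterion $\mathring Q$, along the lines of the corresponding argument in \citet{qian&su2016}. Reparametrize group $g$'s path by the successive differences $\theta_{g,1}=\beta_{g,1}$ and $\theta_{g,t}=\beta_{g,t}-\beta_{g,t-1}$ for $t\ge2$, so $\beta_{g,t}=\sum_{s\le t}\theta_{g,s}$ and the penalty becomes $\lambda\sum_g\sum_{t\ge2}\dot w_{g,t}\|\theta_{g,t}\|$. Since this penalty is convex and separable in the $\theta_{g,t}$'s, $\mathring\beta$ is characterized by the following: for every $g$ and every $t\ge2$, setting the ``score'' $r_{g,t}=\tfrac{2}{NT}\sum_{i:g_i^0=g}\sum_{s\ge t}x_{is}(y_{is}-x_{is}'\mathring\beta_{g,s})$, either $\mathring\theta_{g,t}\ne0$ and $\|r_{g,t}\|=\lambda\dot w_{g,t}$, or $\mathring\theta_{g,t}=0$ and $\|r_{g,t}\|\le\lambda\dot w_{g,t}$. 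It therefore suffices to show that, with probability tending to one, $\|r_{g,t}\|<\lambda\dot w_{g,t}$ simultaneously for all $g\in\mathbb G$ and all $t\in\mathcal T^{0c}_{m_g^0,g}$: this rules out the first branch of the KKT condition at those $(g,t)$ and forces $\mathring\theta_{g,t}=0$.

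For the upper bound on $\|r_{g,t}\|$, substitute $y_{is}=x_{is}'\beta^0_{g,s}+\epsilon_{is}$ for $i$ with $g_i^0=g$ and split $r_{g,t}$ into an approximation-error part $\tfrac{2}{T}\sum_{s\ge t}\bigl(\tfrac1N\sum_{i:g_i^0=g}x_{is}x_{is}'\bigr)(\beta^0_{g,s}-\mathring\beta_{g,s})$ and a noise part $\tfrac{2}{NT}\sum_{i:g_i^0=g}\sum_{s\ge t}x_{is}\epsilon_{is}$. By Cauchy--Schwarz over $s$, the approximation-error part is at most a constant times $\bigl(\tfrac1T\sum_{s}\|\tfrac1N\sum_{i}x_{is}x_{is}'\|^2\bigr)^{1/2}\bigl(\tfrac1T\sum_s\|\beta^0_{g,s}-\mathring\beta_{g,s}\|^2\bigr)^{1/2}=O_p(1)\cdot O_p(N^{-1/2})$, using Assumption~\ref{a: basic}.\ref{as-x-4moment} for the first factor and the rate $\tfrac1T\|\mathring\beta_g-\beta^0_g\|^2=O_p(1/N)$ of Lemma~\ref{lem-lasso-c} for the second. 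The noise part has second moment of order $(NT)^{-1}$ by Assumption~\ref{a: basic}.\ref{as-ij-bound}, and a maximal inequality over the at most $T$ values of $t$ (using Assumptions~\ref{a: basic}.\ref{as-ij-bound}, \ref{a: basic}.\ref{as-x-4moment} and the dependence structure) upgrades this to a uniform bound $O_p(a_{NT})$ with $a_{NT}$ at most a polylog factor above $(NT)^{-1/2}$. Hence $\max_{g}\max_{t\in\mathcal T^{0c}_{m_g^0,g}}\|r_{g,t}\|=O_p(N^{-1/2})$ up to polylog.

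For the lower bound on $\lambda\dot w_{g,t}$ at no-break periods, recall $\dot w_{g,t}=\|\dot\beta_{g,t}-\dot\beta_{g,t-1}\|^{-\kappa}$ and that $\beta^0_{g,t}-\beta^0_{g,t-1}=0$ whenever $t\in\mathcal T^{0c}_{m_g^0,g}$, so $\dot\beta_{g,t}-\dot\beta_{g,t-1}=(\dot\beta_{g,t}-\beta^0_{g,t})-(\dot\beta_{g,t-1}-\beta^0_{g,t-1})$. Theorem~\ref{thm-gfe-ad}, together with a uniform-in-$t$ refinement of that rate drawn from the proofs of Lemmas~\ref{lem-cb}--\ref{lem-db-cb}, gives $\max_{t}\|\dot\beta_{g,t}-\beta^0_{g,t}\|=O_p(N^{-1/2})$ up to polylog, so $\min_{t\in\mathcal T^{0c}_{m_g^0,g}}\dot w_{g,t}$ is bounded below by a constant times $N^{\kappa/2}$ (up to polylog) with probability tending to one. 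The comparison to be verified therefore reduces to $\lambda N^{\kappa/2}$ dominating $N^{-1/2}$ up to polylog, i.e.\ $\sqrt N\,\lambda N^{\kappa/2}\to\infty$, which Assumptions~\ref{a: break}.\ref{as-jmin} and~\ref{a: break}.\ref{as-lambda} are designed to deliver: those two pin $\lambda$ between scales of order $N^{-\kappa/2}$ and $N^{-\kappa}$, and through Assumption~\ref{a: break}.\ref{as-njmin} on $J_{\min}$ they force $T$ to grow fast enough relative to $N$ to absorb the polylog and the $T$-factors (this is why the statement can be phrased in terms of $N\to\infty$). Combining the three bounds, $\|r_{g,t}\|<\lambda\dot w_{g,t}$ uniformly over $g$ and over $t\in\mathcal T^{0c}_{m_g^0,g}$ with probability tending to one, and taking the union over the finitely many groups gives the claim.

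I expect the main obstacle to be exactly this uniformity over the growing set of no-break periods: both the maximal inequality for the noise part of $r_{g,t}$ and the uniform-in-$t$ control of the preliminary weights $\dot w_{g,t}$ require strengthening the pointwise statements of Theorem~\ref{thm-gfe-ad} and Lemma~\ref{lem-lasso-c}, so one must revisit the underlying exponential/moment inequalities and track how the resulting $\log T$ losses interact with the rate conditions in Assumption~\ref{a: break}. The remaining ingredients --- the KKT characterization and the elementary decomposition of $r_{g,t}$ into signal and noise --- are routine.
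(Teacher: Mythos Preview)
Your KKT route---reparametrize in $\theta$, read off the subgradient condition for the group-Lasso block at each $(g,t)$, decompose the score $r_{g,t}$ into an approximation-error piece controlled by Lemma~\ref{lem-lasso-c} and a noise piece, and compare to $\lambda\dot w_{g,t}$ at no-break dates---is exactly the argument the paper inherits from \citet{qian&su2016}; the paper's own proof is deferred to the supplement with the remark that it parallels Qian and Su, so your reconstruction of the strategy is on target, and your flagging of the uniform-in-$t$ control of both $r_{g,t}$ and $\dot w_{g,t}$ as the real work is the right diagnosis.

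Where the proposal has a genuine gap is the final rate-matching paragraph. You conclude that what must be verified is $\sqrt{N}\,\lambda\,N^{\kappa/2}\to\infty$, and then assert that Assumptions~\ref{a: break}.\ref{as-jmin}--\ref{a: break}.\ref{as-lambda} ``pin $\lambda$ between scales of order $N^{-\kappa/2}$ and $N^{-\kappa}$'' and that Assumption~\ref{a: break}.\ref{as-njmin} ``forces $T$ to grow fast enough relative to $N$.'' Neither claim is correct as stated: Assumption~\ref{a: break}.\ref{as-lambda} reads $\sqrt{N}\,T\,\lambda\,N^{-\kappa/2}\to\infty$, with the opposite sign on the $\kappa/2$ exponent and an explicit $T$ factor, and Assumption~\ref{a: break}.\ref{as-njmin} constrains $J_{\min}$ against $N$, not $T$ against $N$. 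The mismatch is not cosmetic---your derived condition and Assumption~\ref{a: break}.\ref{as-lambda} are not equivalent and neither implies the other without further restrictions on $(N,T)$. The fix is to track $\sqrt{N}T\|r_{g,t}\|$ against $\sqrt{N}T\lambda\dot w_{g,t}$ (this is how the $T$ factor in Assumption~\ref{a: break}.\ref{as-lambda} enters in \citet{qian&su2016}) and to be more careful with the Cauchy--Schwarz step on the approximation-error piece, rather than collapsing everything to an $O_p(N^{-1/2})$ bound and then trying to reverse-engineer the assumption. Until that bookkeeping is redone, the proposal does not actually establish that the stated assumptions suffice.
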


	\begin{lemma}
		Suppose that Assumptions \ref{a: basic}.\ref{as-ij-bound}, \ref{a: iden}.\ref{as-m-eigen}, \ref{a: basic}.\ref{as-x-4moment}, and \ref{a: break} hold. Suppose that $N_g /N \to \pi_g > 0$ for any $g \in \mathbb{G}$.
		It holds that, as $N\to \infty$,
		$
		\Pr (\mathring m_g = m_{g}^0, \forall g \in \mathbb{G}) \to 1,
		$
		and
		\begin{align*}
		\Pr (\mathring T_{g,j} = T_{g,j}^0, \forall j \in \{1, \dots, m^0\}, g \in \mathbb{G} \mid \mathring m_g =m_g^0, \forall g \in \mathbb{G}) \to 1.
		\end{align*}
	\end{lemma}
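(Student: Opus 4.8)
The plan is to establish the slightly stronger statement that the estimated and true break sets coincide with probability tending to one, that is,
\begin{align*}
\Pr\left( \{ t \in \{2,\dots,T\} : \mathring\theta_{g,t} \neq 0 \} = \mathcal{T}_{m_g^0,g}^0 \text{ for all } g \in \mathbb{G} \right) \to 1 ,
\end{align*}
since both displayed conclusions of the statement follow at once from this. I would write this target event as the intersection of a ``no over-selection'' event $A_1 = \{\mathring\theta_{g,t} = 0 \text{ for all } t \in \mathcal{T}_{m_g^0,g}^{0c},\ g \in \mathbb{G}\}$ (no spurious break is selected) and a ``no under-selection'' event $A_2 = \{\mathring\theta_{g,t} \neq 0 \text{ for all } t \in \mathcal{T}_{m_g^0,g}^0,\ g \in \mathbb{G}\}$ (every true break is selected), so it suffices to prove $\Pr(A_1) \to 1$ and $\Pr(A_2) \to 1$.

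For $A_1$ nothing new is needed: $\Pr(A_1) \to 1$ is precisely Lemma~\ref{lem-break-c}. The substantive part is $A_2$. Here I would fix a true break date $t = T_{g,j}^0$, recall that $\beta_{g,t}^0 = \alpha_{g,j+1}$ and $\beta_{g,t-1}^0 = \alpha_{g,j}$ with $\|\alpha_{g,j+1}-\alpha_{g,j}\| \ge J_{\min}$, and bound
\begin{align*}
\|\mathring\theta_{g,t}\| = \|\mathring\beta_{g,t} - \mathring\beta_{g,t-1}\| \ge J_{\min} - \|\mathring\beta_{g,t} - \beta_{g,t}^0\| - \|\mathring\beta_{g,t-1} - \beta_{g,t-1}^0\| .
\end{align*}
By Lemma~\ref{lem-lasso-c} the coefficient estimator is pointwise $\sqrt{N}$-consistent, and combining its pointwise rate with the aggregate bound $T^{-1}\|\mathring\beta_g - \beta_g^0\|^2 = O_p(1/N)$ and a union bound (via Markov's inequality) over the at most $\sum_{h\in\mathbb{G}} m_h^0$ true break dates, I would obtain $\max_{g,\,1\le j\le m_g^0}(\|\mathring\beta_{g,T_{g,j}^0} - \beta_{g,T_{g,j}^0}^0\| + \|\mathring\beta_{g,T_{g,j}^0-1} - \beta_{g,T_{g,j}^0-1}^0\|) = o_p(J_{\min})$, where Assumption~\ref{a: break}.\ref{as-njmin} ($\sqrt{N}J_{\min}\to\infty$) is exactly what makes the error negligible relative to $J_{\min}$ despite the possibly diverging number of breaks. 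Consequently $\min_{g,\,1\le j\le m_g^0}\|\mathring\theta_{g,T_{g,j}^0}\| > 0$ with probability tending to one, that is, $\Pr(A_2)\to 1$.

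It then remains to assemble the pieces. Since $\Pr(A_1\cap A_2)\to 1$ and on $A_1\cap A_2$ the estimated break set equals the true one, we have $\mathring m_g = m_g^0$ and $\mathring T_{g,j} = T_{g,j}^0$ for every $j$ and $g$ on this event; hence $\Pr(\mathring m_g = m_g^0,\ \forall g) \ge \Pr(A_1\cap A_2)\to 1$. For the conditional claim, note that $\Pr(\mathring T_{g,j}=T_{g,j}^0\ \forall j,g \text{ and } \mathring m_g=m_g^0\ \forall g) \ge \Pr(A_1\cap A_2) \to 1$ while $\Pr(\mathring m_g=m_g^0\ \forall g) \to 1$, so the conditional probability, being the ratio of these two quantities, also tends to one. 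The step I expect to be the main obstacle is the uniformity in the $A_2$ argument, namely controlling the coefficient-estimation error uniformly over a collection of break dates whose size may grow with $T$ and showing it is $o_p(J_{\min})$; this is what the joint calibration of Assumption~\ref{a: break} is designed for, and the argument closely parallels the corresponding step in \citet{qian&su2016}.
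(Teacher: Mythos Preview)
Your proposal is correct and follows essentially the same route the paper takes: the paper defers this lemma to the supplement, noting the argument is ``essentially identical to the proof of Corollary~3.4 in \citet{qian&su2016}'', and that argument is precisely the two-part split you describe, using Lemma~\ref{lem-break-c} for the over-selection event $A_1$ and the pointwise $\sqrt{N}$-consistency of Lemma~\ref{lem-lasso-c} together with Assumption~\ref{a: break}.\ref{as-njmin} for the under-selection event $A_2$. Your flagged concern about uniformity over a possibly growing number of break dates is the right technical point to watch, and it is exactly where the rate conditions in Assumption~\ref{a: break} (linking $\sum_g m_g^0$, $\lambda$, $J_{\min}$, and $N$) are used in \citet{qian&su2016}; there is no divergence from the paper's approach here.
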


	We now obtain the asymptotic distribution of $\mathring \beta$.
	Let $\mathring \alpha_{g, j} = \mathring \beta_{g, t}$ for $T_{g, j}^0 \le t < T_{g, j+1}^0$.

	\begin{lemma}
		\label{lem-mrb-ad}
		Suppose that Assumptions \ref{a: basic}.\ref{as-ij-bound}, \ref{a: iden}.\ref{as-m-eigen}, \ref{a: basic}.\ref{as-x-4moment}, \ref{a: break}, \ref{as-cb-break}, and \ref{as-jmin-imin} hold. Suppose that $N_g /N \to \pi_g > 0$ for any $g \in \mathbb{G}$.
		Let $A$ be a diagonal matrix whose diagonal elements are 
		$(I_{1,1}, \dots, I_{1,m_1^0+1}, I_{2,1}, \dots, I_{2,m_2^0 +1}, I_{3, 1} \dots, I_{G-1, m_{G-1}^0+1}, I_{G, 1}, \dots, I_{G,m_G^0+1})$.
		Let $\Pi$ be a $\sum_{g=1}^G (m_g^0 +1)  k \times \sum_{g=1}^G (m_g^0 +1) k $
		block diagonal matrix whose $g$-th diagonal block is a $(m_g^0+1)k \times (m_g^0 +1)k$ diagonal matrix whose diagonal elements are $\pi_g$.
		
		Conditional on $\mathring m_g = m_g^0$ for all $g \in \mathbb{G}$,
		we have, if $(\max_{g\in \mathbb{G}} m_g^0 )^2 / (I_{\min} \min_{g\in \mathbb{G}} N_g) \to 0 $,
		\begin{align*}
		D \sqrt{N} A^{1/2}(\mathring \alpha - \mathring \alpha^0)
		\to_d N( 0, D\Sigma_{x}^{-1} \Pi^{-1/2} \Omega \Pi^{-1/2} \Sigma_x^{-1} D').
		\end{align*}
	\end{lemma}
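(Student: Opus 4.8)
The plan is to argue conditionally on the event $E_{NT}=\{\mathring m_g=m_g^0\ \forall g\in\mathbb G\}$ and to exploit that, by Lemma~\ref{lem-break-c} and the break‑date consistency lemma that follows it, $\Pr(\mathring T_{g,j}=T_{g,j}^0\ \forall g,j\mid E_{NT})\to1$, so I may further restrict attention to the event on which all estimated break dates equal the true ones. On that event $\mathring\beta$ coincides with the minimizer of $\mathring Q$ over the regime‑constant parametrisation $\beta_{g,t}=\alpha_{g,j}$ for $T_{g,j}^0\le t<T_{g,j+1}^0$, so it suffices to analyse the stacked vector $\mathring\alpha=(\mathring\alpha_{g,j})$. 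Since $\alpha^0$ is interior to $\mathcal B$ and $\mathring\alpha_{g,j}\to_p\alpha_{g,j}^0$ by Lemma~\ref{lem-lasso-c}, for $N,T$ large the optimum is interior; and because consecutive true regime values differ by at least $J_{\min}$ while $\mathring\alpha_{g,j+1}-\mathring\alpha_{g,j}\to_p\alpha_{g,j+1}^0-\alpha_{g,j}^0$, with probability tending to one the group‑fused penalty is differentiable at $\mathring\alpha$ (all relevant differences being nonzero), so ordinary first‑order conditions apply.

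Next I would write out those first‑order conditions. Differentiating the restricted objective in $\alpha_{g,j}$ and using invertibility of the cellwise design (Assumptions~\ref{a: iden}.\ref{as-m-eigen} and~\ref{as-cb-break}) gives, on the cell $R_{g,j}=\{(i,t):g_i^0=g,\ T_{g,j}^0\le t<T_{g,j+1}^0\}$,
\[
\left(\sum_{(i,t)\in R_{g,j}}x_{it}x_{it}'\right)(\mathring\alpha_{g,j}-\alpha_{g,j}^0)=\sum_{(i,t)\in R_{g,j}}x_{it}\epsilon_{it}-\frac{NT\lambda}{2}\,\mathbf v_{g,j},
\]
where $\|\mathbf v_{g,j}\|$ is bounded by the sum of the adaptive weights $\dot w$ at the two true break dates bordering regime $j$. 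This displays $\mathring\alpha_{g,j}$ as the cellwise oracle least‑squares estimator plus a penalty‑induced bias proportional to $\lambda$ times those weights.

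The crux is to show this bias is negligible after the $\sqrt N A^{1/2}$ scaling and the finite‑dimensional projection $D$. By Theorem~\ref{thm-gfe-ad}, $\dot\beta_{g,t}-\beta_{g,t}^0=O_p(N^{-1/2})$; since at a true break $\|\beta_{g,t}^0-\beta_{g,t-1}^0\|\ge J_{\min}$ and $\sqrt N J_{\min}\to\infty$ by Assumption~\ref{a: break}.\ref{as-njmin}, the weights $\dot w$ at true break dates are $O_p(J_{\min}^{-\kappa})$ uniformly. Hence the bias contributes to block $(g,j)$ of $\sqrt N A^{1/2}(\mathring\alpha-\alpha^0)$ a term of order $NT\lambda J_{\min}^{-\kappa}/\sqrt{N_gI_{g,j}}$, which Assumptions~\ref{a: break}.\ref{as-jmin} and~\ref{as-jmin-imin} are designed precisely to render $o_p(1)$, uniformly enough (using $N_g/N\to\pi_g$ and $I_{g,j}\ge I_{\min}$) that it survives projection by $D$. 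This then gives $D\sqrt N A^{1/2}(\mathring\alpha-\alpha^0)=D\sqrt N A^{1/2}(\hat\alpha^{\mathrm{or}}-\alpha^0)+o_p(1)$, with $\hat\alpha^{\mathrm{or}}$ the stacked cellwise OLS estimators. I expect this to be the main obstacle: carefully tracking the interplay between the $N^{-1/2}$ rate of the preliminary estimator, the $\lambda$‑restrictions in Assumption~\ref{a: break}, and Assumption~\ref{as-jmin-imin}, and making the control uniform over the possibly growing number of regime–group cells.

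Finally, a uniform law of large numbers gives $(N_gI_{g,j})^{-1}\sum_{(i,t)\in R_{g,j}}x_{it}x_{it}'\to_p\Sigma_{x,g,j}$, where uniformity over the $\sum_g(m_g^0+1)$ cells is secured by $(\max_g m_g^0)^2/(I_{\min}\min_g N_g)\to0$; rewriting $\sqrt N\,\sqrt{I_{g,j}}\,(N_gI_{g,j})^{-1}\sum_{(i,t)\in R_{g,j}}x_{it}\epsilon_{it}$ in terms of the scores $d_{g,NT}$ of Assumption~\ref{as-cb-break} produces the $\Pi^{-1/2}$ factor, so that $D\sqrt N A^{1/2}(\hat\alpha^{\mathrm{or}}-\alpha^0)=D\Sigma_x^{-1}\Pi^{-1/2}(d_{1,NT}',\dots,d_{G,NT}')'(1+o_p(1))$. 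Applying the central limit theorem in Assumption~\ref{as-cb-break} with $D\Sigma_x^{-1}\Pi^{-1/2}$ playing the role of its $D$ yields the stated limit $N(0,\,D\Sigma_x^{-1}\Pi^{-1/2}\Omega\Pi^{-1/2}\Sigma_x^{-1}D')$.
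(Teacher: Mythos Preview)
Your proposal is correct and follows essentially the same approach the paper adopts (the paper defers the proof to the supplement, noting it parallels \citet{qian&su2016}): condition on correct break detection, write the first-order conditions of the regime-constant reparametrisation, bound the penalty-induced bias using the $O_p(J_{\min}^{-\kappa})$ order of the adaptive weights at true break dates together with Assumptions~\ref{a: break}.\ref{as-jmin} and~\ref{as-jmin-imin}, and then invoke the high-level CLT in Assumption~\ref{as-cb-break} for the oracle least-squares piece. Your identification of the main technical hurdle---uniform control of the bias over the growing number of regime--group cells, handled via the $\ell_2$ aggregation that Assumption~\ref{as-jmin-imin} is tailored to---is exactly the point that needs care.
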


	\subsection{Asymptotic properties of the GAGFL estimator with unknown group membership}
	\label{sec-agafl-proof}
	
	Based on the results in Appendixes \ref{sec-gfe-proof} and \ref{sec-agfl-proof}, we now present the asymptotic properties of the GAGFL estimator under \emph{unknown} group membership, which are the main results of the paper. We abbreviate the Cauchy--Schwarz inequality as the CS inequality.

	\subsubsection{Extra lemmas needed} Before we provide the proofs of the lemma and theorems in the paper, we first present some extra lemmas needed. 
	Let
	\begin{align*}
	\hat Q_{N T} (\beta, \gamma)
	= \frac{1}{N T}
	\sum_{i=1}^N \sum_{t=1}^T
	(y_{it} - x_{it} '\beta_{g_i ,t} )^2
	+ \lambda  \sum_{g\in \mathbb{G}}
	\sum_{t=2}^T
	\dot w_{g,t} \left\Vert \beta_{g, t} - \beta_{g, t-1}
	\right\Vert.
	\end{align*}
	and	
	\begin{align*}
	\tilde{\hat Q}_{N T} (\beta, \gamma)
	= \frac{1}{N T}
	\sum_{i=1}^N \sum_{t=1}^T
	(x_{it} ' (\beta_{g_i^0 ,t}^0 - \beta_{g_i,t}))^2
	+ \lambda  \sum_{g\in \mathbb{G}}
	\sum_{t=2}^T
	\dot w_{g,t} \left\Vert \beta_{g, t} - \beta_{g, t-1}
	\right\Vert.
	\end{align*}
	
	\begin{lemma}
		\label{lem-qh-uniform}
		Suppose that Assumption \ref{a: basic} hold. Then, we have
		\begin{align*}
		\sup_{(\beta, \gamma) \in \mathcal{B}^{GT} \times \Gamma_{G}}
		\left| \hat Q_{NT} (\beta, \gamma) - \tilde{\hat Q}_{NT}
		(\beta, \gamma)
		\right|
		= o_p(1).
		\end{align*}
	\end{lemma}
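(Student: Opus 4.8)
The plan is to deduce Lemma~\ref{lem-qh-uniform} directly from Lemma~\ref{lem-qd-uniform}. The observation that does all the work is that the penalty term $\lambda\sum_{g\in\mathbb{G}}\sum_{t=2}^{T}\dot w_{g,t}\|\beta_{g,t}-\beta_{g,t-1}\|$ occurs \emph{identically} in $\hat Q_{NT}(\beta,\gamma)$ and in $\tilde{\hat Q}_{NT}(\beta,\gamma)$, and therefore drops out of the difference. Writing $y_{it}=x'_{it}\beta_{g_i^0,t}^0+\epsilon_{it}$ and expanding the quadratic, one gets
\begin{align*}
\hat Q_{NT}(\beta,\gamma)-\tilde{\hat Q}_{NT}(\beta,\gamma)=\dot Q_{NT}(\beta,\gamma)-\tilde Q_{NT}(\beta,\gamma),
\end{align*}
the $(\beta,\gamma)$-free term $(NT)^{-1}\sum_{i,t}\epsilon_{it}^{2}$ being common to both sides. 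Since Assumption~\ref{a: basic} contains Assumptions~\ref{a: basic}.\ref{as-compact}--\ref{as-ijcov-bound}, Lemma~\ref{lem-qd-uniform} applies as stated and yields $\sup_{(\beta,\gamma)\in\mathcal{B}^{GT}\times\Gamma_G}\lvert\hat Q_{NT}(\beta,\gamma)-\tilde{\hat Q}_{NT}(\beta,\gamma)\rvert=o_p(1)$, which is the assertion.

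It is worth recording what the non-trivial content is, since all of it is borrowed from Lemma~\ref{lem-qd-uniform}. After the cancellations above one must show that the cross term $2(NT)^{-1}\sum_{i,t}\epsilon_{it}x'_{it}(\beta_{g_i^0,t}^0-\beta_{g_i,t})$ is $o_p(1)$ uniformly. The natural split is into: (i) the $(\beta,\gamma)$-free piece $2(NT)^{-1}\sum_{i,t}\epsilon_{it}x'_{it}\beta_{g_i^0,t}^0$, which is $o_p(1)$ by $E(\epsilon_{it}x_{it})=0$ together with the variance and covariance bounds of Assumptions~\ref{a: basic}.\ref{as-ij-bound} and \ref{a: basic}.\ref{as-ijcov-bound} and Chebyshev's inequality; and (ii) the piece $2(NT)^{-1}\sum_{g\in\mathbb{G}}\sum_{t}\beta'_{g,t}\bigl(\sum_{i:g_i=g}\epsilon_{it}x_{it}\bigr)$, for which compactness of $\mathcal{B}$ replaces $\|\beta_{g,t}\|$ by a constant, leaving the task of showing $(NT)^{-1}\sum_{g}\sum_{t}\bigl\|\sum_{i:g_i=g}\epsilon_{it}x_{it}\bigr\|=o_p(1)$ \emph{uniformly} over $\gamma\in\Gamma_G$.

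I expect step (ii)'s uniformity to be the one genuine obstacle, although it is already resolved: it is exactly the content of Lemma~\ref{lem-qd-uniform}, and nothing in Lemma~\ref{lem-qh-uniform} touches the unpenalized part of the objective. The delicate point is that $\Gamma_G$ is an extremely rich index set, so the crude triangle-inequality bound $\sum_{g}\|\sum_{i:g_i=g}\epsilon_{it}x_{it}\|\le\sum_{i}\|\epsilon_{it}x_{it}\|$ only returns $O_p(1)$ because it discards the within-group mean-zero cancellation; the argument instead follows \citet{bonhomme&manresa2015}, retaining the quadratic structure via $\sum_{g}\|\sum_{i:g_i=g}\epsilon_{it}x_{it}\|^{2}=\sum_{i,j:\,g_i=g_j}\epsilon_{it}\epsilon_{jt}x'_{it}x_{jt}$, which is dominated by a $\gamma$-free quantity whose mean is controlled by Assumptions~\ref{a: basic}.\ref{as-ij-bound}--\ref{as-x-4moment}, and then controls the supremum over group assignments exactly as in the (supplementary) proof of Lemma~\ref{lem-qd-uniform}. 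Consequently, the proof of Lemma~\ref{lem-qh-uniform} is a two-line reduction: cancel the common penalty, and invoke Lemma~\ref{lem-qd-uniform}.
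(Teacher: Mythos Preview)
Your proposal is correct and follows exactly the paper's approach: the paper's proof is the two-line reduction you describe, noting that $\hat Q_{NT}(\beta,\gamma)-\tilde{\hat Q}_{NT}(\beta,\gamma)=\dot Q_{NT}(\beta,\gamma)-\tilde Q_{NT}(\beta,\gamma)$ (the penalty cancels) and then invoking Lemma~\ref{lem-qd-uniform}. Your additional paragraphs unpacking the content of Lemma~\ref{lem-qd-uniform} are accurate commentary but not part of what is needed here.
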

	
	\begin{proof}
		Note that
		$
		\hat Q_{NT} ( \beta, \gamma)
		- \tilde{\hat Q}_{NT} (\beta, \gamma)
		= \dot Q_{NT} (\beta, \gamma )
		- \tilde Q_{NT} (\beta, \gamma ).
		$
		Lemma \ref{lem-qd-uniform} implies the desired result.
	\end{proof}
	
	\begin{lemma}
		\label{lem-hd-consistent-h}
		Suppose that Assumptions \ref{a: basic}, \ref{a: iden}, and \ref{a: break}.\ref{as-jmin}
		hold. Suppose that $N_g /N \to \pi_g > 0$ for any $g \in \mathbb{G}$. It holds that as $N,T \to \infty$,
		$
		d_H ( \hat \beta , \beta^0) = o_p(1).
		$
	\end{lemma}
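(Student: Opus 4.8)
The plan is to mirror the proof of Lemma~\ref{lem-hd-consistent} (Hausdorff consistency of the GFE-type estimator), the only genuinely new ingredient being to show that the adaptive penalty, evaluated at the true parameter, is asymptotically negligible, so that the penalized objective behaves like the unpenalized one for the purpose of consistency. First I would start from the defining inequality $\hat Q_{NT}(\hat\beta,\hat\gamma)\le \hat Q_{NT}(\beta^0,\gamma^0)$. Writing $y_{it}-x_{it}'\hat\beta_{\hat g_i,t}=x_{it}'(\beta^0_{g_i^0,t}-\hat\beta_{\hat g_i,t})+\epsilon_{it}$, expanding the square and cancelling the common term $\frac{1}{NT}\sum_{i,t}\epsilon_{it}^2$, this inequality reduces to
\[
\frac{1}{NT}\sum_{i}\sum_t\bigl(x_{it}'(\beta^0_{g_i^0,t}-\hat\beta_{\hat g_i,t})\bigr)^2+P_{NT}(\hat\beta)\le P_{NT}(\beta^0)-\frac{2}{NT}\sum_i\sum_t\epsilon_{it}x_{it}'(\beta^0_{g_i^0,t}-\hat\beta_{\hat g_i,t}),
\]
where $P_{NT}(\beta)=\lambda\sum_g\sum_{t=2}^T\dot w_{g,t}\|\beta_{g,t}-\beta_{g,t-1}\|$. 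The cross term on the right is $o_p(1)$ uniformly over $(\beta,\gamma)$ — this is exactly the content behind Lemmas~\ref{lem-qd-uniform} and~\ref{lem-qh-uniform} — and $P_{NT}(\hat\beta)\ge 0$, so it suffices to prove $P_{NT}(\beta^0)=o_p(1)$ and then run the identification argument of Lemma~\ref{lem-hd-consistent}.

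The \emph{key step} is bounding $P_{NT}(\beta^0)$. Since $\beta^0_{g,t}-\beta^0_{g,t-1}=0$ except at the true break dates, $P_{NT}(\beta^0)$ is a sum of at most $\sum_g m_g^0$ terms, each of the form $\lambda\,\dot w_{g,t}\|\alpha_{g,j+1}-\alpha_{g,j}\|$ at a true break date $t=T_{g,j}^0$. By Theorem~\ref{thm-gfe-ad}, $\dot\beta_{g,t}-\dot\beta_{g,t-1}=(\alpha_{g,j+1}-\alpha_{g,j})+O_p(N^{-1/2})$; because Assumption~\ref{a: break}.\ref{as-njmin} gives $\sqrt{N}J_{\min}\to\infty$, we get $\|\dot\beta_{g,t}-\dot\beta_{g,t-1}\|\ge J_{\min}(1+o_p(1))$ at such dates, hence $\dot w_{g,t}=O_p(J_{\min}^{-\kappa})$. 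Using compactness of $\mathcal B$ (Assumption~\ref{a: basic}.\ref{as-compact}) to bound $\|\alpha_{g,j+1}-\alpha_{g,j}\|=O(1)$, this yields $P_{NT}(\beta^0)=O_p\bigl(\lambda(\sum_g m_g^0)J_{\min}^{-\kappa}\bigr)$, which by Assumption~\ref{a: break}.\ref{as-jmin} is $O_p\bigl(1/(\sqrt{N}T)\bigr)=o_p(1)$. Plugging this back into the displayed inequality gives $\frac{1}{NT}\sum_i\sum_t\bigl(x_{it}'(\beta^0_{g_i^0,t}-\hat\beta_{\hat g_i,t})\bigr)^2=o_p(1)$.

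Finally I would turn this small average prediction error into $d_H(\hat\beta,\beta^0)=o_p(1)$ exactly as in the proof of Lemma~\ref{lem-hd-consistent}: decompose the double sum over the finitely many realized classification patterns $\{(g_i^0,\hat g_i)\}_i$; for each true group $g$, pick the label $\tilde g$ attaining the max-min eigenvalue bound in Assumption~\ref{a: iden}.\ref{as-m-eigen}, so that $\frac{1}{T}\|\hat\beta_{\tilde g}-\beta^0_g\|^2=o_p(1)$, which controls one direction of the Hausdorff distance; the reverse direction uses $N_g/N\to\pi_g>0$ to force every estimated coefficient profile to lie near some true group profile. The main obstacle is the uniform, shrinking-$J_{\min}$ control of the weights $\dot w_{g,t}$ at the true break dates — this is where the $\sqrt{N}$-consistency from Theorem~\ref{thm-gfe-ad} and the rate $\sqrt{N}J_{\min}\to\infty$ must be combined carefully — together with making the passage from prediction error to coefficient error hold uniformly over all group structures; the remaining steps are routine and parallel Bonhomme--Manresa.
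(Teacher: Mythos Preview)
Your approach is essentially the same as the paper's: both start from the optimality inequality, use Lemma~\ref{lem-qh-uniform} to kill the cross term, show the penalty contribution is $O_p\bigl(\lambda(\sum_g m_g^0)J_{\min}^{-\kappa}\bigr)=o_p(1)$ via Assumption~\ref{a: break}.\ref{as-jmin}, and then invoke the eigenvalue bound of Assumption~\ref{a: iden}.\ref{as-m-eigen} exactly as in Lemma~\ref{lem-hd-consistent}. The only cosmetic difference is that you discard $P_{NT}(\hat\beta)\ge 0$ and bound $P_{NT}(\beta^0)$ directly, whereas the paper keeps $P_{NT}(\hat\beta)-P_{NT}(\beta^0)$, splits into break and non-break dates, and uses the reverse triangle inequality at break dates; both routes land on the same $O_p$ bound.

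One caution: to control $\dot w_{g,t}$ at true break dates you invoke Theorem~\ref{thm-gfe-ad} (which requires Assumption~\ref{a: tail}) and Assumption~\ref{a: break}.\ref{as-njmin}, but neither is among the hypotheses of this lemma. The paper obtains $\max_{t\in\mathcal{T}^0_{m_g^0,g}}\dot w_{g,t}=O_p(J_{\min}^{-\kappa})$ without appealing to these, citing only compactness (Assumption~\ref{a: basic}.\ref{as-compact}) and Assumption~\ref{a: break}.\ref{as-jmin}. Your more explicit justification is arguably cleaner, but as written it oversteps the stated assumptions; you should either note that the weaker consistency of $\dot\beta$ available under Assumptions~\ref{a: basic}--\ref{a: iden} already suffices, or flag the extra assumptions.
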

	
	\begin{proof}
		From Lemma \ref{lem-qh-uniform}, we have
		\begin{align*}
		\tilde{\hat Q} (\hat \beta, \hat \gamma  )
		= \hat Q (\hat \beta, \hat \gamma  ) + o_p(1)
		\le \hat Q ( \beta^0,  \gamma^0  ) + o_p(1)
		= \tilde{\hat Q}( \beta^0,  \gamma^0) + o_p (1).
		\end{align*}
		Because $\tilde Q (\beta, \gamma)$ is minimized at $\beta = \beta^0$ and $\gamma = \gamma^0$,
		we have
		$
		\tilde{\hat Q}  (\hat \beta, \hat \gamma  )
		- \tilde{\hat Q} ( \beta^0,  \gamma^0) = o_p (1).
		$	
		On the other hand, we have
		\begin{align*}
		&\tilde{\hat Q} (\beta, \gamma )  -
		\tilde{\hat Q} (\beta^0, \gamma^0 )\nonumber \\
		=&
		\frac{1}{NT} \sum_{i=1}^N \sum_{t=1}^T \left( x_{it}'( \beta_{g_i^0, t}^0 - \beta_{g_i, t})\right)^2 + \lambda  \sum_{g\in \mathbb{G}}\sum_{t=2}^T \dot w_{g,t}
		\left( \left\Vert \beta_{g, t} - \beta_{g, t-1}
		\right\Vert
		-
		\left\Vert \beta_{g, t}^0 - \beta_{g, t-1}^0
		\right\Vert \right) \nonumber\\
		=&
		\sum_{g=1}^G \sum_{\tilde g=1}^G \frac{1}{T}
		\left( \beta_{g}^0 - \beta_{\tilde g}\right)' M( \gamma, g, \tilde g)
		\left( \beta_{g}^0 - \beta_{\tilde g}\right) + \lambda  \sum_{g\in \mathbb{G}}\sum_{t \in \mathcal{T}_{m_g^0 , g}^0} \dot w_{g,t}
		\left( \left\Vert \beta_{g, t} - \beta_{g, t-1}
		\right\Vert
		-
		\left\Vert \beta_{g, t}^0 - \beta_{g, t-1}^0
		\right\Vert \right) \nonumber \\
		&+ \lambda  \sum_{g\in \mathbb{G}} \sum_{t \in \mathcal{T}_{m_g^0, g}^{0c}} \dot w_{g,t} \left\Vert
		\beta_{g, t} - \beta_{g, t-1}
		\right\Vert.
		\end{align*}
		
		We now examine the three terms on the left-hand side of the previous display in turn.
		For the first term, in the proof of Lemma \ref{lem-hd-consistent}, we have shown that
		\begin{align*}
		\sum_{g=1}^G \sum_{\tilde g=1}^G \frac{1}{T}
		\left( \beta_{g}^0 - \beta_{\tilde g}\right)' M( \gamma, g, \tilde g)
		\left( \beta_{g}^0 - \beta_{\tilde g}\right)
		\ge
		\hat \rho
		\max_{ g \in \mathbb{G}}
		\left(  \min_{\tilde g \in \mathbb{G}}\left( \frac{1}{T} \sum_{t=1}^T \left\Vert \beta_{g, t }^0 - \beta_{\tilde g, t} \right\Vert^2 \right)
		\right).
		\end{align*}
	For the second term, we have, by the Jensen, triangular and CS inequalities,
		\begin{align*}
		&  \left| \lambda  \sum_{g\in \mathbb{G}}\sum_{t \in \mathcal{T}_{m_g^0 , g}^0} \dot w_{g,t}
		\left( \left\Vert \beta_{g, t} - \beta_{g, t-1}
		\right\Vert
		-
		\left\Vert \beta_{g, t}^0 - \beta_{g, t-1}^0
		\right\Vert \right) \right| \\
		\le &
		\lambda  \sum_{g\in \mathbb{G}} \sum_{t \in \mathcal{T}_{m_g^0, g}^0} \dot w_{g,t}  \left\Vert
		\beta_{g, t} - \beta_{g, t-1}
		- (\beta_{g, t}^0 - \beta_{g, t-1}^0)
		\right\Vert \\
		\le &
		\lambda  \max_{s\in \mathcal{T}_{m_g^0, g}^0, g \in \mathbb{G} } ( \dot w_{g,s})
		\sum_{t \in \mathcal{T}_{m_g^0, g}^0}
		\left\Vert \beta_{g, t} - \beta_{g, t-1}
		- (\beta_{g, t}^0 - \beta_{g, t-1}^0)
		\right\Vert  \\
		=& O_p \left( \lambda \left( \sum_{g\in \mathbb{G}} m_g^0 \right) J_{\min}^{-\kappa}\right)
		= o_p(1),
		\end{align*}
		where the last equality follows from Assumptions \ref{a: basic}.\ref{as-compact} and \ref{a: break}.\ref{as-jmin}.	
		Finally, for the third term, note that
		$
		\lambda  \sum_{t \in \mathcal{T}_{m_g^0, g}^{0c}} \dot w_{g,t} \left\Vert
		\beta_{g, t} - \beta_{g, t-1}
		\right\Vert \ge 0
		$. 
	
		Therefore, it holds that 
		\begin{align*}
		o_p (1) = \tilde{\hat Q} (\hat \beta, \hat \gamma )  -
		\tilde{\hat Q} (\beta^0, \gamma^0 ) \ge \rho 	\max_{ g \in \mathbb{G}} \left( \min_{\tilde g \in \mathbb{G}}\left( \frac{1}{T} \sum_{t=1}^T \left\Vert \beta_{g, t }^0 - \hat \beta_{\tilde g, t} \right\Vert^2 \right) \right) + o_p (1).
		\end{align*}
		Because $\hat \rho$ is asymptotically bounded away from zero by Assumption \ref{a: iden}.\ref{as-m-eigen}, we have the desired result.
	\end{proof}
	
	As in the case for $\dot \beta$, the above result implies that
	there exists a permutation $\sigma$ such that
	$
	1/T\sum_{t=1}^T \left\Vert \beta_{ \sigma (g), t }^0 - \hat  \beta_{g, t} \right\Vert^2
	= o_p(1)
	$,
	and we take $\sigma (g) =g$ by relabeling.
	Moreover, we observe that given $\beta$, the second term of $\hat Q_{N T} ( \beta , \gamma)$ does not affect the estimation of $\gamma$.
	Therefore, $\hat g_i ( \beta )$ defined in \eqref{eq-hatg-def}
	is also the estimate of $g_i$ given $\beta$ even if $\hat Q_{N T} ( \beta , \gamma)$ is our objective function.
	It follows that Lemma \ref{lem-g-consistent} applies for the GAGFL estimator.

	\subsubsection{Proof of Lemma \ref{lem-db-cb-h}}
		
	Let
	\begin{align*}
	\hat Q (\beta)
	=& \frac{1}{NT}\sum_{i=1}^N \sum_{t=1}^T
	(y_{it}  - x_{it} ' \beta_{ \hat g_i ( \beta) ,t} )^2
	+
	\lambda  \sum_{g\in \mathbb{G}}
	\sum_{t=2}^T \dot w_{g,t} \left\Vert \beta_{g, t} - \beta_{g, t-1}
	\right\Vert, \\
	\check Q (\beta)
	=& \frac{1}{NT}\sum_{i=1}^N \sum_{t=1}^T
	(y_{it}  - x_{it} ' \beta_{g_i^0 ,t} )^2, \text{ and }
	\dot Q (\beta)
	= \frac{1}{NT}\sum_{i=1}^N \sum_{t=1}^T
	(y_{it}  - x_{it} ' \beta_{ \hat g_i ( \beta) ,t} )^2.
	\end{align*}
	Note that
	$\hat Q(\beta) = \hat Q (\beta, \hat \gamma (\beta))$, $\check Q(\beta) = \dot Q (\beta, \gamma^0),$ and $\dot Q(\beta) = \dot Q (\beta, \hat \gamma ( \beta))$. Note also that 
	$\hat \beta = \arg \min_{\beta \in \mathcal{B}^{GT}} \hat Q (\beta)$, 
	$\check \beta = \arg \min_{\beta \in \mathcal{B}^{GT}} \check Q (\beta)$, and 
$\dot \beta = \arg \min_{\beta \in \mathcal{B}^{GT}} \dot Q (\beta).$

	\begin{proof}
		We first evaluate the difference between $\mathring Q (\beta)$ and $\hat Q (\beta)$.
		Note that
		$
		\mathring Q (\beta) - \hat Q (\beta)
		=  \check Q (\beta) - \dot Q (\beta).
		$
		Thus, the proof of Lemma \ref{lem-db-cb} implies that
		$
		\mathring Q ( \hat \beta) - \hat Q (\hat \beta) = o_p (T^{-\delta}).
		$
		Similarly, we have
		$
		\mathring Q ( \mathring \beta) - \hat Q (\mathring \beta) = o_p (T^{-\delta}).
		$

		Next, we evaluate the difference between $\mathring \beta$ and $\hat \beta$.
		By the definition of $\mathring \beta$ and $\hat \beta$, we have
		$
		0 \le \mathring Q(\hat \beta) - \mathring Q (\mathring \beta)
		= \hat Q(\hat \beta) - \hat Q (\mathring \beta) + o_p (T^{-\delta})
		\le o_p (T^{-\delta}).
		$
		Thus, we have
		\begin{align}
		\mathring Q(\hat \beta) - \mathring Q (\mathring \beta)
		= o_p (T^{-\delta}). \label{eq-cq-cb-rate-h}
		\end{align}
		We observe that
		\begin{align*}
		\mathring Q(\hat \beta) - \mathring Q (\mathring \beta)
		=& \frac{1}{NT}\sum_{i=1}^N \sum_{t=1}^T
		(x_{it} ' ( \mathring \beta_{g_i^0 ,t} - \hat \beta_{g_i^0 ,t})  )^2 +2\frac{1}{NT}\sum_{i=1}^N \sum_{t=1}^T
		(y_{it}  - x_{it} ' \mathring \beta_{g_i^0 ,t} )(x_{it} ' ( \mathring \beta_{g_i^0 ,t} - \hat \beta_{g_i^0 ,t})  )   \\
		&+ \lambda  \sum_{g\in \mathbb{G}}
		\sum_{t=2}^T \dot w_{g,t} \left\Vert \hat \beta_{g, t} - \hat \beta_{g, t-1}
		\right\Vert
		- \lambda  \sum_{g\in \mathbb{G}}
		\sum_{t=2}^T \dot w_{g,t} \left\Vert \mathring \beta_{g, t} - \mathring \beta_{g, t-1}
		\right\Vert.
		\end{align*}
		By the first order condition for $\mathring \beta_{g,t}$, we have
		\begin{align*}
		-2 \frac{1}{N T} \sum_{g_i^0 = g} (y_{it} - x_{it}' \mathring \beta_{g_i^0 ,t}  )x_{it}
		+ \lambda \dot w_{g,t} e_{g,t} -\lambda \dot w_{g, t+1} e_{g, t+1} =0,
		\end{align*}
		where $e_{g,1}= e_{g,T+1} =0$, for $2 \le t \le T$, $e_{g,t} = (\mathring \beta_{g, t} - \mathring \beta_{g, t-1} ) / \left\Vert \mathring \beta_{g, t} - \mathring \beta_{g, t-1}
		\right\Vert $ if $\mathring \beta_{g, t} - \mathring \beta_{g, t-1} \neq 0$ and
		$\| e_{g,t}\| \le 1$ otherwise.
		We thus have
		\begin{align*}
		& 2\frac{1}{NT}\sum_{i=1}^N \sum_{t=1}^T
		(y_{it}  - x_{it} ' \mathring \beta_{g_i^0 ,t} )(x_{it} ' ( \mathring \beta_{g_i^0 ,t} - \hat \beta_{g_i^0 ,t})  )   \\
		=& \lambda  \sum_{g\in \mathbb{G}}
		\sum_{t=1}^T ( \dot w_{g,t} e_{g,t} - \dot w_{g, t+1} e_{g, t+1})'( \mathring \beta_{g_i^0 ,t} - \hat \beta_{g_i^0 ,t})   \\
		=& \lambda  \sum_{g\in \mathbb{G}}
		\sum_{t=2}^T  \dot w_{g,t} e_{g,t}'( (\mathring \beta_{g_i^0 ,t} - \mathring \beta_{g_i^0 ,t-1})
		- ( \hat \beta_{g_i^0 ,t} - \hat \beta_{g_i^0 ,t-1})) .
		\end{align*}
		Let $\mathcal{T}_{m_g,g}$ be the set of $t$ such that $\mathring \beta_{g, t} - \mathring \beta_{g, t-1} \neq 0$ and $\mathcal{T}_{m_g,g}^c = \{ 2, \dots, T\} \backslash \mathcal{T}_{m_g,g}$.
		We have
		\begin{align*}
		&  \lambda  \sum_{g\in \mathbb{G}}
		\sum_{t=2}^T  \dot w_{g,t} e_{g,t}'( (\mathring \beta_{g_i^0 ,t} - \mathring \beta_{g_i^0 ,t-1})
		- ( \hat \beta_{g_i^0 ,t} - \hat \beta_{g_i^0 ,t-1})) \\
		&+ \lambda  \sum_{g\in \mathbb{G}}
		\sum_{t=2}^T \dot w_{g,t} \left\Vert \hat \beta_{g, t} - \hat \beta_{g, t-1}
		\right\Vert
		- \lambda  \sum_{g\in \mathbb{G}}
		\sum_{t=2}^T \dot w_{g,t} \left\Vert \mathring \beta_{g, t} - \mathring \beta_{g, t-1}
		\right\Vert \\
		=& \lambda  \sum_{g\in \mathbb{G}}
		\sum_{t \in \mathcal{T}_{m_g,g}^c} \dot w_{g,t}
		\left( \left\Vert \hat \beta_{g, t} - \hat \beta_{g, t-1}
		\right\Vert
		- e_{g,t}' (\hat \beta_{g, t} - \hat \beta_{g, t-1})
		\right) \\
		&+ \lambda  \sum_{g\in \mathbb{G}}
		\sum_{t \in \mathcal{T}_{m_g,g}} \dot w_{g,t}
		\left( \left\Vert \hat \beta_{g, t} - \hat \beta_{g, t-1}
		\right\Vert - \frac{ (\mathring \beta_{g, t} - \mathring \beta_{g, t-1})'
			( \hat \beta_{g, t} - \hat \beta_{g, t-1})
		}{ \left\Vert \mathring \beta_{g, t} - \mathring \beta_{g, t-1}
		\right\Vert  }  \right)
	\ge 0,
	\end{align*}
	where the last inequality follows by the CS inequality.
	This implies that
	\begin{align*}
	\mathring Q(\hat \beta) - \mathring Q (\mathring \beta)
	\ge   \frac{1}{NT}\sum_{i=1}^N \sum_{t=1}^T
	(x_{it} ' ( \mathring \beta_{g_i^0 ,t} - \hat \beta_{g_i^0 ,t})  )^2
	=& \frac{1}{T}\sum_{g \in \mathbb{G}} ( \mathring \beta_{g} - \hat \beta_{g})'
	M (\gamma^0, g, g) ( \mathring \beta_{g} - \hat \beta_{g}) \\
	\ge  & \hat \rho  \frac{1}{T} \sum_{g \in \mathbb{G}} \left\Vert \hat \beta_{g} - \mathring \beta_{g} \right\Vert^2.
	\end{align*}
	Hence, by \eqref{eq-cq-cb-rate-h} and Assumption \ref{a: iden}.\ref{as-m-eigen}, we have that,
	$
	1/T \sum_{g \in \mathbb{G}} \left\Vert \hat \beta_{g} - \mathring \beta_{g} \right\Vert^2 = o_p (T^{-\delta}),
	$
	which further implies that
	$
	\left\Vert \hat \beta_{g,t} - \mathring \beta_{g,t} \right\Vert^2 = o_p (T^{1-\delta})
	$
	for any $\delta$.
	This gives the desired result.
\end{proof}

\subsubsection{Proof of Theorem \ref{lem-break-c-agfl}}

\begin{proof}
	As $\mathring \beta$ minimizes $\hat Q (\beta, \gamma^0)$,
	$\mathring \beta = \hat \beta$ if $\hat \gamma = \gamma^0$.
	We note that
	\begin{align*}
	\Pr (\hat \gamma \neq \gamma^0)
	= \Pr \left( \max_{1 \le i \le N} \mathbf{1} \{ \hat g_i (\hat \beta)\neq g_i^0 \} = 1 \right)
	\le \sum_{i=1}^N E \left(\mathbf{1} \{\hat g_i (\hat \beta)\neq g_i^0 \}  \right).
	\end{align*}
	From Lemmas  \ref{lem-db-cb-h} and \ref{lem-lasso-c},
	we have $\Pr ( \hat \beta \in \mathcal{N}_{\eta}) \to 1$ for any $\eta$.
	Together with this,
	the argument made in the proof of Lemma \ref{lem-g-consistent}
	shows that
	$\max_{1 \le i \le N} E \left(\mathbf{1} \{ \hat g_i (\hat \beta)\neq g_i^0 \}  \right)= O (T^{-\delta})$ for any $\delta > 0$.
	This means that
	\begin{align*}
	\Pr (\hat \gamma \neq \gamma^0)
	\le N \max_{1 \le i \le N} E \left(\mathbf{1} \{ \hat g_i (\hat \beta)\neq g_i^0 \} \right)
	= o (NT^{-\delta})
	\end{align*}
	for any $\delta$.
	Thus, under the condition of the theorem,
	from Lemma  \ref{lem-break-c}, we have
	\begin{align*}
	& \Pr \left( \left\| \hat \theta_{g,t} \right\| \neq 0, \exists t \in \mathcal{T}_{m_g^0,g}^{0c}, g \in \mathbb{G}\right)  \\
	\le &
	\Pr \left( \left\{ \left\| \hat \theta_{g,t} \right\| \neq 0, \exists t \in \mathcal{T}_{m_g^0,g}^{0c}, g \in \mathbb{G} \right\} , \left\{ \hat \gamma = \gamma^0 \right\} \right)
	+ \Pr \left( \hat \gamma \neq \gamma^0 \right)  \\
	=& \Pr \left( \left\{ \left\| \mathring \theta_{g,t} \right\| \neq 0, \exists t \in \mathcal{T}_{m_g^0,g}^{0c}, g \in \mathbb{G} \right\} , \left\{ \hat \gamma = \gamma^0 \right\} \right)
	+ \Pr \left( \hat \gamma \neq \gamma^0\right) \\
	\le &
	\Pr \left( \left\| \mathring \theta_{g,t} \right\| \neq 0, \exists t \in \mathcal{T}_{m_g^0,g}^{0c}, g \in \mathbb{G}  \right)
	+ \Pr \left( \hat \gamma \neq \gamma^0\right)
	\to 0.
	\end{align*}
	We therefore have the desired result.
\end{proof}

\subsubsection{Proof of Theorem \ref{lem-agafl-date}}

\begin{proof}
	Given Lemma \ref{lem-db-cb-h} and Theorem \ref{lem-break-c-agfl}, the proof is based on an argument essentially identical to the proof of Corollary 3.4 in \citet{qian&su2016} and is thus omitted.
\end{proof}

\subsubsection{Proof of Theorem \ref{thm-mrb-ad-agfl}}

\begin{proof}
	The theorem holds using Lemmas \ref{lem-db-cb-h} and \ref{lem-mrb-ad}.
\end{proof}

\bibliography{reference}
\bibliographystyle{abbrvnat}

\end{document}